\documentclass[12pt]{article}

\PassOptionsToPackage{unicode}{hyperref}
\PassOptionsToPackage{hyphens}{url}
\PassOptionsToPackage{dvipsnames,svgnames,x11names}{xcolor}

\usepackage{amsmath,amssymb}
\IfFileExists{upquote.sty}{\usepackage{upquote}}{}
\IfFileExists{microtype.sty}{  \usepackage[]{microtype}
  \UseMicrotypeSet[protrusion]{basicmath}}{}
\makeatletter
\@ifundefined{KOMAClassName}{  \IfFileExists{parskip.sty}{    \usepackage{parskip}
  }{    \setlength{\parindent}{0pt}
    \setlength{\parskip}{6pt plus 2pt minus 1pt}}
}{  \KOMAoptions{parskip=half}}
\makeatother
\usepackage{xcolor}
\setlength{\emergencystretch}{3em}\setcounter{secnumdepth}{5}
\makeatletter
\ifx\paragraph\undefined\else
  \let\oldparagraph\paragraph
  \renewcommand{\paragraph}{
    \@ifstar
      \xxxParagraphStar
      \xxxParagraphNoStar
  }
 \newcommand{\xxxParagraphStar}[1]{\oldparagraph*{#1}\mbox{}}
  \newcommand{\xxxParagraphNoStar}[1]{\oldparagraph{#1}\mbox{}}
\fi
\ifx\subparagraph\undefined\else
  \let\oldsubparagraph\subparagraph
  \renewcommand{\subparagraph}{
    \@ifstar
      \xxxSubParagraphStar
      \xxxSubParagraphNoStar
  }
  \newcommand{\xxxSubParagraphStar}[1]{\oldsubparagraph*{#1}\mbox{}}
  \newcommand{\xxxSubParagraphNoStar}[1]{\oldsubparagraph{#1}\mbox{}}
\fi
\makeatother

\usepackage{longtable,booktabs,array}
\usepackage{calc}\usepackage{etoolbox}
\makeatletter
\patchcmd\longtable{\par}{\if@noskipsec\mbox{}\fi\par}{}{}
\makeatother
\IfFileExists{footnotehyper.sty}{\usepackage{footnotehyper}}{\usepackage{footnote}}
\makesavenoteenv{longtable}
\usepackage{graphicx}
\makeatletter
\def\maxwidth{\ifdim\Gin@nat@width>\linewidth\linewidth\else\Gin@nat@width\fi}
\def\maxheight{\ifdim\Gin@nat@height>\textheight\textheight\else\Gin@nat@height\fi}
\makeatother
\setkeys{Gin}{width=\maxwidth,height=\maxheight,keepaspectratio}
\makeatletter
\def\fps@figure{htbp}
\makeatother

\addtolength{\oddsidemargin}{-.5in}\addtolength{\evensidemargin}{-.1in}\addtolength{\textwidth}{1in}\addtolength{\textheight}{1.7in}\addtolength{\topmargin}{-1in}
\makeatletter
\@ifpackageloaded{caption}{}{\usepackage{caption}}
\AtBeginDocument{\ifdefined\contentsname
  \renewcommand*\contentsname{Table of contents}
\else
  \newcommand\contentsname{Table of contents}
\fi
\ifdefined\listfigurename
  \renewcommand*\listfigurename{List of Figures}
\else
  \newcommand\listfigurename{List of Figures}
\fi
\ifdefined\listtablename
  \renewcommand*\listtablename{List of Tables}
\else
  \newcommand\listtablename{List of Tables}
\fi
\ifdefined\figurename
  \renewcommand*\figurename{Figure}
\else
  \newcommand\figurename{Figure}
\fi
\ifdefined\tablename
  \renewcommand*\tablename{Table}
\else
  \newcommand\tablename{Table}
\fi
}
\@ifpackageloaded{float}{}{\usepackage{float}}
\floatstyle{ruled}
\@ifundefined{c@chapter}{\newfloat{codelisting}{h}{lop}}{\newfloat{codelisting}{h}{lop}[chapter]}
\floatname{codelisting}{Listing}

\makeatother
\makeatletter
\makeatother
\makeatletter
\@ifpackageloaded{caption}{}{\usepackage{caption}}
\@ifpackageloaded{subcaption}{}{\usepackage{subcaption}}
\makeatother

\usepackage[]{natbib}
\usepackage{bookmark}

\IfFileExists{xurl.sty}{\usepackage{xurl}}{}\urlstyle{same}\hypersetup{
  pdftitle={Title},
  pdfauthor={Author 1; Author 2},
  pdfkeywords={3 to 6 keywords, that do not appear in the title},
  colorlinks=true,
    linkcolor={black},
  filecolor={Maroon},
    citecolor={black},
    urlcolor={black},
  pdfcreator={LaTeX via pandoc}}

\newcommand{\anon}{1}

\usepackage[nodisplayskipstretch]{setspace}
\usepackage{bm}\usepackage{color}\usepackage{hyperref}\usepackage{graphicx}\usepackage{vmargin}\usepackage{multirow}
\usepackage{makecell}\usepackage{adjustbox}\usepackage{booktabs}
\usepackage{enumitem}
\usepackage{dsfont}
\usepackage{subcaption}
\RequirePackage{cleveref}
\usepackage{authblk}

\usepackage[toc,page,header]{appendix}
\usepackage{minitoc}

\usepackage{soul}
\usepackage{wrapfig}

\usepackage{algorithm}
\usepackage{algorithmic}
\usepackage{comment}
\usepackage{amsthm}

\addtolength{\oddsidemargin}{-.5in}\addtolength{\evensidemargin}{-.5in}\addtolength{\textwidth}{1in}\addtolength{\textheight}{1in}\addtolength{\topmargin}{-0.5in}
\hypersetup{colorlinks=true,linkcolor=black,urlcolor=black,citecolor=black,urlcolor=black}

\newtheorem{assumption}{Assumption}

\newtheorem{lemma}{Lemma}

\newtheorem{theorem}{Theorem}

\newtheorem{remark}{Remark}

\crefname{assumption}{assumption}{assumptions}

\setpapersize{USletter}
\setmarginsrb{1in}{1in}{1in}{1in}{0pt}{0mm}{0pt}{0mm}

\newcommand{\pseudodot}{{\lower 2.4pt\hbox{$\cdot$}}}

\newcommand\R{{\mathbb R}}
\newcommand\PP{{\mathbb{P}}}
\newcommand\Q{{\mathbb{Q}}}
\newcommand\X{{\mathbf X}}
\newcommand\Xn{\X_n}
\newcommand\Zn{\bZ_{n}}
\newcommand\Zm{\bZ_{m}}

\newcommand\Pn{P^{(n)}}
\newcommand\pn{p^{(n)}}
\newcommand\mN{{\mathcal{N}}}
\newcommand\E{{\mathbb E}}

\newcommand\bX{{\mathbf X}}
\newcommand\bZ{{\mathbf Z}}

\newcommand\mA{{\mathcal A}}

\newcommand\mD{{\mathcal D}}
\newcommand\mX{{\mathcal X}}
\newcommand\mO{{\mathcal O}}

\newcommand\wt{\widetilde}
\newcommand\wh{\widehat}

\newcommand\I{\mathds{1}}

\newcommand\diff{\mathrm d}

\newcommand{\abs}[1]{\left\vert#1\right\vert}
\newcommand{\norm}[1]{\left\lVert#1\right\rVert}
\newcommand{\normsm}[1]{\lVert#1\rVert}
\newcommand{\normbig}[1]{\big\lVert#1\big\rVert}
\newcommand{\normbigg}[1]{\bigg\lVert#1\bigg\rVert}

\newcommand\iid{\overset{\text{iid}}{\sim}}

\newcommand\MLE{\widehat{\theta}^{\text{\rm MLE}}_n}

\newcommand\blfootnote[1]{  \begingroup
  \renewcommand\thefootnote{}\footnote{#1}  \addtocounter{footnote}{-1}  \endgroup
}

\setlength{\abovecaptionskip}{3pt}
\setlength{\belowcaptionskip}{-2pt}

\setlength{\abovedisplayskip}{0pt}\setlength{\belowdisplayskip}{0pt}
\setlength{\abovedisplayshortskip}{0pt}\setlength{\belowdisplayshortskip}{0pt}

\begin{document}

\def\spacingset#1{\renewcommand{\baselinestretch}{#1}\small\normalsize} \spacingset{1}

\doparttoc\faketableofcontents

\if1\anon
{
  \title{\bf Simulation-based Inference via Langevin Dynamics with Score Matching}

\author[1]{Haoyu Jiang}
\author[1]{Yuexi Wang\footnote{Corresponding author.}}
\author[2]{Yun Yang}

\affil[1]{{Department of Statistics}, {University of Illinois Urbana-Champaign}, {USA}}
\affil[2]{{Department of Mathematics}, {University of Maryland, College Park},
{USA}}
  \maketitle
  \blfootnote{Wang's research is support by NSF grant DMS-2515542. Codex and Claude Code were used to refine the language and improve the readability of the manuscript.}
} \fi

\if0\anon
{
  \begin{center}
    {\LARGE\bf  Simulation-based Inference via Langevin Dynamics with Score Matching}
\end{center}

} \fi

\begin{abstract}
Simulation-based inference (SBI) enables Bayesian analysis when the likelihood is intractable but model simulations are available. Recent advances in statistics and machine learning, including Approximate Bayesian Computation and deep generative models, have expanded the applicability of SBI, yet these methods often face { substantial computational} challenges { as the sample size and parameter dimension increase.}
{In this paper,} we propose a novel {scalable} SBI method that integrates score matching with Langevin dynamics, {while explicitly exploiting the statistical structure of log-likelihood functions. Our approach combines (i) a localization scheme that concentrates computation in regions of high posterior mass and (ii) a structured score network that embeds key properties of likelihood scores, including additivity across observations and Fisher information identities. We provide theoretical and empirical evidence demonstrating that the proposed structured score-matching approach improves statistical efficiency and computational scalability, achieving competitive or superior performance compared to existing SBI methods on both benchmark and challenging problems with large sample sizes and moderate-dimensional parameter spaces.}

\end{abstract}

\noindent{\it Keywords:} Bayesian Inference, Monte Carlo Methods, Langevin Dynamics, {Structured} Score Matching, Simulation-based Inference

\vspace{-2em}

\spacingset{1.7}
\section{Introduction\label{sec:intro}}
In fields such as ecology, biology, economics, and psychology, researchers often rely on complex structural models {whose parameters are essential for understanding scientific phenomena and guiding decisions.}
However, many such models involve richly structured parameter spaces and complex data-generating mechanisms, which create major challenges for inference and computation due to intractable likelihood functions. As a result, traditional likelihood-based inference is often computationally infeasible or entirely impractical.

Simulation-Based Inference (SBI) has emerged as a powerful framework for addressing this challenge. By relying on model simulations rather than explicit likelihood evaluations, SBI enables inference even when the likelihood is inaccessible. Within a Bayesian framework, prior knowledge about parameters or system behavior can be naturally incorporated through the choice of priors, and posterior distributions can be approximated using methods such as Approximate Bayesian Computation (ABC) \citep{beaumont2010approximate}, Bayesian Synthetic Likelihood (BSL) \citep{price2018bayesian,frazier2021robust}, or more recent neural SBI approaches, including flow-based methods \citep{papamakarios2016fast,papamakarios2019sequential},  generative adversarial networks \citep{wang2022adversarial}, and diffusion models \citep{sharrock2024sequential}, which directly approximate certain quantity of the posterior (c.f.~Section~\ref{sec:Existing_SBI} for a detailed discussion).

{Despite these advances, scalability remains a central challenge for SBI along two complementary axes. First, as the number of parameters increases, blanket exploration of the prior becomes increasingly inefficient, since only a small fraction of simulated parameters produce datasets resembling the observed dataset. The relevant scale of ``high-dimensional parameters'' in SBI differs from that in classical regression. While regression problems may routinely involve hundreds or thousands of covariates, SBI can already become challenging when the parameter dimension $d_\theta$ is on the order of dozens, depending on the simulator and prior. This is because SBI requires exploring and approximating a posterior that is often highly concentrated within a diffuse parameter space, without direct likelihood information, making effective search and coverage substantially more difficult \citep{lueckmann2021benchmarking,li2017extending,chatha2024neural}. Second, as the sample size of the observed dataset increases, both the number of simulations and the complexity of the approximation function may grow with the sample size in order to maintain accuracy. To address these scalability challenges, recent works introduce adaptive schemes such as sequential sampling \citep{papamakarios2019sequential,wang2022adversarial} and Thompson sampling \citep{ohagan2024tree}, which aim to focus simulations in regions near the true data-generating parameter $\theta^\ast$, as well as permutation-invariant architectures \citep{zaheer2017deep,deistler2025simulation} that exploit the exchangeable structure of the data; see \Cref{sec:score_matching_overview} for more details.}

{In this paper, we propose a structured score-matching approach to SBI. Our method combines score estimation with Langevin dynamics \citep{roberts1998optimal} to sample from the posterior distribution, while explicitly exploiting statistical structure in the likelihood score. The key idea is to replace global score learning over the full prior support with localized and structured score learning targeted at the posterior region relevant to the observed dataset.}

{The proposed method has two main components. First, we introduce a localization step that identifies a neighborhood around the true parameter and uses it to construct a proposal distribution for simulation. This concentrates score estimation in the region where posterior accuracy matters most, improving simulation efficiency. Second, we develop a statistically grounded score-matching framework that embeds structural properties of log-likelihood functions directly into the architecture and training of the score network. This restricts the space of score-matching networks to those consistent with likelihood theory, reducing estimation complexity and improving the reliability of downstream sampling.}

We impose two types of structural constraints.
The first constraint is additivity. We restrict the approximating score function $s(\theta,\, \bX_n)$ to admit an additive structure, applicable to i.i.d.~or weakly dependent datasets, by decomposing it as $\sum_{i=1}^n s(\theta,\, X_i)$. Each individual score $s(\theta,\, X_i)$ is evaluated using the same score function $s(\,\cdot, \,\cdot)$, which enables data-efficient learning without requiring the score-network complexity to grow with the sample size.
The second architectural regularization is based on two population-level constraints satisfied by the log-likelihood of regular parametric models (see, e.g., Chapter 2 of \cite{vanderVaart1998}): (i) the mean-zero property, $\mathbb{E}[\nabla \ell(\theta; X)] = 0$ for a single observation $X$, and (ii) the curvature property, $\mathbb{E}[\nabla \ell(\theta; X)\nabla \ell(\theta; X)^T] = \mathbb{E}[-\nabla^2 \ell(\theta; X)]$, whose value defines the Fisher information matrix (c.f.~Section~\ref{sec:regular}). {These two constraints are also known as Bartlett identities \citep{mykland1994bartlett}. Enforcing them also substantially relax the requirement on the quality of the estimated score (c.f. \Cref{rmk:sm_err}). }
{Together, these conditions control bias and enforce the correct local geometry of the score by aligning it with the Fisher information near the true parameter. As a result, they yield a more stable and accurate approximation of the posterior score, leading to more stable Langevin dynamics and better-calibrated uncertainty quantification. To our knowledge, this is the first work to systematically incorporate classical likelihood identities as inductive biases for score estimation, and this principle extends naturally to other score-based inference settings.}

The remainder of the paper is organized as follows. \Cref{sec:background} reviews existing SBI approaches and introduces key results on score-matching networks. \Cref{sec:method} describes our proposed score-based Langevin dynamics approach tailored for SBI. \Cref{sec:theory} presents the theoretical analysis of the proposed sampler. \Cref{sec:simulation} illustrates the performance of our method on several simulated examples. We conclude with future directions in \Cref{sec:discussion}.

\section{Background and Preliminary Results}\label{sec:background}

We consider a collection of $n$ observations $\Xn^\ast = (X_{1}^\ast, \ldots, X_{n}^\ast)^T$ drawn from a distribution $\Pn_{\theta^\ast}$ in the parametric family $\big\{\Pn_{\theta} : \theta \in \Theta \subset \R^{d_\theta}\big\}$, with each observation $X^\ast_{i} \in \R^p$ and $\theta^\ast$ denoting the true parameter. We assume that $\Pn_\theta$, for every $\theta \in \Theta$, admits a density $\pn_\theta$.
Our goal is to perform Bayesian inference on $\theta^\ast$ through its posterior distribution
\begin{equation}\label{eq:true_post}
\pi_n (\theta\mid \Xn^\ast)\, \propto\, \pn_\theta(\Xn^\ast)\cdot \pi(\theta),
\end{equation}
which is determined by the likelihood function $\pn_\theta(\Xn^\ast)$ and the prior distribution (density) $\pi(\theta)$. We focus on simulator-based models where the likelihood cannot be directly evaluated but simulation is feasible. That is, for any parameter $\theta^{(k)}\in \Theta$, one can readily generate pseudo-datasets $ \Xn^{(k)} = ( X^{(k)}_1, \ldots, X^{(k)}_{n}) \sim \Pn_\theta$. We use $\Xn$ to denote a generic dataset.
We also use $\normsm{\cdot}$ to denote $L_2$ norm unless otherwise noted.

\subsection{Challenges in existing SBI methods}\label{sec:Existing_SBI}
When the likelihood is computationally intractable, its evaluation in \eqref{eq:true_post} must be approximated through simulations. The core idea of SBI is to approximate the posterior distribution by identifying parameter values $\theta$ that generate simulated data resembling the observed dataset $\Xn^\ast$. { Existing SBI methods can be broadly categorized into three families: (1) Approximate Bayesian Computation (ABC) \citep{beaumont2010approximate}, which compares simulated and observed datasets using similarity measures; (2) Bayesian Synthetic Likelihood (BSL) \citep{price2018bayesian,frazier2021robust}, which approximates the likelihood of summary statistics with a Gaussian distribution; and (3) neural SBI methods, which use neural networks to learn an approximation to the posterior, likelihood, or likelihood-to-evidence ratio directly from simulated data. Representative examples include neural posterior estimation (NPE) \citep{papamakarios2016fast}, neural likelihood estimation (NLE) \citep{papamakarios2019sequential}, and neural ratio estimation (NRE) \citep{hermans2020likelihood,durkan2020contrastive,miller2022contrastive}.
For a general review of SBI methods, see \citet{cranmer2020frontier,lueckmann2021benchmarking,deistler2025simulation} and our discussion in \Cref{sec:related_work}.}

{
From the perspective of scalability, these methods often face two distinct but related challenges. The first concerns the parameter dimension $d_\theta$. Under blanket search from the prior, the number of simulations required to adequately explore the parameter space can grow rapidly with $d_\theta$. To mitigate this issue, several methods use adaptive or sequential proposals that refine the simulation distribution across rounds. Sequential and adaptive SBI methods address this by iteratively
refining the simulation proposal
\citep{sisson2007sequential,beaumont2009adaptive,del2012adaptive,
papamakarios2019sequential,wang2022adversarial,ohagan2024tree}. Our localization
step (c.f. \Cref{sec:preconditioning}) is motivated by the same principle.

The second challenge concerns the sample size $n$ of the observed dataset. When the dataset consists of many i.i.d. or exchangeable observations, both the cost of simulating a full dataset for each $\theta$ and the complexity of the approximating network can grow with $n$.
To address this issue, many methods rely on low-dimensional summaries of the data as input, which can lead to a loss of statistical efficiency. Neural SBI approaches often use architectures that exploit invariance or exchangeability, such as permutation-invariant embedding networks \citep{zaheer2017deep,wang2022adversarial,luciano2025permutations,deistler2025simulation}, or single-observation likelihood/ratio/score estimation \citep{papamakarios2019sequential,hermans2020likelihood,geffner2022score,linhart2024diffusion}, to reduce the computational burden.
 However, these approaches often lack a transparent way to analyze how approximation error accumulates with $n$, which we discuss in \Cref{rmk:iid_structure}.
}

\subsection{Score-based sampling and score-based SBI} \label{sec:score_matching_overview}

We adopt a different strategy, motivated by gradient-based sampling methods using Hamiltonian \citep{neal2011mcmc} or Langevin dynamics \citep{welling2011bayesian,roberts1996exponential,roberts1998optimal,cheng2018underdamped}, for
which both empirical success and strong theoretical properties have been established in high-dimensional models with tractable likelihoods
\citep{chen2023sampling,tang2024computational}. These methods exploit the local
geometry of the posterior distribution and can converge to the true posterior more efficiently
than random-walk exploration. { This suggests a natural route for SBI: estimate a score function from simulations and use the estimated score in a gradient-based sampler.}

{ More broadly, score matching has become a powerful tool for distribution learning. Foundational work by \citet{hyvarinen2005estimation,song2019generative,meng2020autoregressive} shows how score estimation can be performed without direct access to the likelihood (see \Cref{sec:conditional_sm} for more details), and a large body of work has since developed around this idea, including applications to SBI.

Several recent works pursue related directions. One line estimates score- or energy-related objects and uses them directly as inference targets without relying on a diffusion model, including \citet{pacchiardi2022score,zeghal2022neural,glaser2022maximum,khoo2025direct}. A more recent line combines score estimation with diffusion-based generative modeling. This includes neural score estimation and conditional diffusion approaches such as \citet{simons2023neural,sharrock2024sequential,nautiyalcondisim}, as well as joint modeling approaches such as \citet{gloeckler2024all}. In particular, \citet{geffner2023compositional,linhart2024diffusion} leverage the i.i.d.~data structure to construct score-based diffusion models that scale better with the sample size $n$. We provide a more thorough discussion of these related methods and comparisons with our approach in \Cref{rmk:iid_structure}.

}

\subsection{Conditional score matching and a generic baseline for SBI}\label{sec:conditional_sm}

{
To set up the specialized design in \Cref{sec:method}, we first describe how the likelihood score can be learned from simulations via conditional score matching. We then introduce a generic baseline that uses the estimated score as the key ingredient in a Langevin sampler for SBI.

Let $s_\phi(\theta,\Xn): \R^{d_\theta} \times \R^{np} \to \R^{d_\theta}$ denote a score model parametrized by $\phi$. The goal of conditional score matching is to find $\phi$ such that $s_\phi(\theta, \Xn)$ accurately approximates the true likelihood score $\nabla_\theta \log \pn_\theta(\Xn)$.
One may choose to estimate either the likelihood score
$\nabla_\theta \log \pn_\theta(\Xn)$ or the posterior score
$\nabla_\theta \log \pi_n(\theta \mid \Xn)$. In this work, we focus on the likelihood score because it provides a more convenient framework for incorporating statistical structures and avoids samples re-weighting after replacing the prior with a proposal distribution; further details are given in \Cref{sec:method}.

Score estimation can be formulated as minimizing the $L_2$ distance between the two score functions:
\begin{equation}\label{eq:obj_cond_score}
    \min_{\phi} \,
    \E_{(\theta, \Xn) \sim p(\theta)\,\pn_\theta(\Xn)}
    \Big[\big\| s_\phi(\theta, \Xn) - \nabla_\theta \log \pn_\theta(\Xn) \big\|^2\Big],
\end{equation}
where $p(\theta)$ denotes the {density} from which $\theta$ is drawn, which can be either the prior $\pi(\theta)$ or a proposal density $q(\theta)$.

Although the true score function $\nabla_\theta \log \pn_\theta(\Xn)$ is unknown, the conditional score-matching objective can be rewritten in a form that does not require direct access to the true score \citep{hyvarinen2007some,song2019generative,meng2020autoregressive}, as stated below.

\begin{theorem}[Adopted from Theorem 1 in \citet{hyvarinen2005estimation}]\label{thm:score_matching}
    Under mild boundary condition and finite-moment assumptions (see \Cref{pf:score_matching}), the objective in \eqref{eq:obj_cond_score} is equivalent to
    \begin{align*}
    \min_\phi  \E_{ (\theta, \Xn)\sim  p(\theta)\,\pn_\theta(\Xn)} \Big[
    \norm{s_\phi(\theta, \Xn)}^2
    +2s_\phi (\theta, \Xn)^T \nabla_\theta \log p(\theta)
    +2 \text{tr}\left(\nabla_\theta s_{\phi}(\theta, \Xn)\right)
    \Big],
    \end{align*}
    where $\text{tr}(\cdot)$ denotes the trace of a matrix.
\end{theorem}

\noindent We defer the assumptions and proof to \Cref{pf:score_matching}. The original proof in \citet{hyvarinen2005estimation} addresses the unconditional score case. We extend this proof to the conditional likelihood score and discuss the required boundary condition.
}

\Cref{thm:score_matching} allows us to train the score model $s_\phi$ (typically
a neural network) by minimizing the empirical objective below, using a reference
table of size $N$, $\mD = \{(\theta^{(k)}, \Xn^{(k)})\}_{k=1}^N$, generated from the joint density $\pi(\theta)\,\pn_\theta(\Xn)$:
{\small
\begin{equation}\label{eq:score_loss}
\min_\phi \frac{1}{N}\sum_{k=1}^N \Big[
\frac{1}{2} \norm{s_\phi(\theta^{(k)}, \Xn^{(k)})}^2
+ s_\phi (\theta^{(k)}, \Xn^{(k)})^T \nabla_\theta \log \pi(\theta)\big|_{\theta=\theta^{(k)}}
+ \text{tr}\Big(\nabla_\theta s_{\phi}(\theta, \Xn^{(k)})|_{\theta=\theta^{(k)}} \Big)
\Big].
\end{equation}
}

{Once an estimator of the likelihood score is obtained, it can be incorporated into a gradient-based sampler.} In this paper, we use Langevin Monte Carlo (LMC) to sample from the
posterior distribution $\pi_n(\theta \mid \Xn^\ast)$, which is given by
\begin{equation}\label{eq:langevin}
	\theta^{(k+1)}_\text{\tiny LMC} = \theta^{(k)}_\text{\tiny LMC} + \tau_n \nabla_\theta \log p(\theta^{(k)} \mid \Xn^\ast)
	+ \sqrt{2\tau_n}\,U_k,
\end{equation}
where $\tau_n$ is a fixed step size and $U_k \iid \mN(0, I_{d_\theta})$.
LMC combines deterministic gradient-based updates with stochastic noise to
guide exploration toward regions of high posterior probability. The method is
well suited for unimodal posteriors; in multimodal cases, it can be extended
using techniques such as simulated annealing to locate modes and
then applying our approach locally around each mode. In this work, however,
we focus on LMC for technical simplicity.

We now introduce the baseline generic LMC algorithm, which is essentially plugging the estimate score from \eqref{eq:score_loss} into the LMC sampler. The algorithm is summarized in \Cref{alg:generic_LMC}.

\begin{algorithm}[ht]
    \caption{Generic Langevin Monte Carlo for SBI}\label{alg:generic_LMC}
    \begin{algorithmic}
            \STATE {\bf Input:} Proposal {density} $p(\theta)$, observed dataset $\Xn^\ast$, number of particles $N$, number of Langevin steps $K$, step size $\tau_n$, score network $s_\phi(\theta, \Xn)$, initial value $\theta^{(0)}$.
        \end{algorithmic}
        \hrule
        \begin{algorithmic}
            \STATE {\bf 1. Reference Table:} Generate $\mD = \{(\theta^{(k)}, \Xn^{(k)})\}_{k=1}^N\iid p(\theta)\,\pn_\theta(\Xn)$
            \STATE {\bf 2. Network Training:} Train the likelihood score model $s_\phi(\theta, \Xn)$ on $\mD$ and obtain $\widehat \phi$.
            \STATE {\bf 3. Langevin Sampling:}
  {ß For} {$k = 1$ to $K$}
                \STATE \hspace{2em} $\theta^{(k)}_\text{\tiny LMC} \gets \theta^{(k-1)}_\text{\tiny LMC}  + \tau_n  \Big(s_{\widehat\phi}\big(\theta^{(k-1)}_\text{\tiny LMC} , \Xn^\ast\big)+\nabla_\theta\log\pi(\theta^{(k-1)}_\text{\tiny LMC}) \Big) + \sqrt{2\tau_n}\,U_k$, \quad $U_k \iid \mathcal{N}(0, I_{d_\theta})$.
            \end{algorithmic}
            \hrule
            \begin{algorithmic}
            \STATE {\bf Return} $\{\theta^{(k)}_\text{\tiny LMC}\}_{k=1}^K$ as approximated posterior samples
    \end{algorithmic}
    \end{algorithm}
\vspace{-0.2in}
There are several ways to implement the second step of network training in \Cref{alg:generic_LMC}. To motivate our specialized designs in \Cref{sec:method}, we first present a naive baseline, in which the prior $\pi(\theta)$ is used as the proposal for $\theta$, and $s_\phi(\theta, \Xn)$ is estimated using standard score-matching techniques.

\vspace{-0.2in}

\section{Structured Score-based Langevin Dynamics for SBI}\label{sec:method}
\vspace{-0.1in}

{
While the naive implementation in \Cref{alg:generic_LMC} is conceptually straightforward, it is not sufficient for accurate and scalable inference in simulator-based models.

We focus on three issues that arise in the naive implementation. First, score estimation can be poor when the training distribution places little mass near $(\theta^*,\Xn^*)$ \citep{koehlerstatistical}. Second, naively training the score network on the full dataset $\Xn$ can lead to computational and statistical complexity that grows unfavorably with the sample size $n$. Third, first-order score matching alone does not guarantee accurate Langevin dynamics in a neighborhood of $\theta^*$: for stable posterior sampling, the estimated score must also capture the local second-order geometry of the likelihood.

In this section, we propose specialized procedures that address these challenges in two steps. \Cref{sec:preconditioning} addresses the first issue through localization, while \Cref{sec:regular} addresses the latter two through a structured score-matching scheme that exploits classical properties of likelihood scores.
}
For notational simplicity, we denote the true likelihood score function by
$s^\ast(\theta, \Xn) = \nabla_\theta \log \pn_\theta(\Xn)$.

\vspace{-0.25in}
\subsection{Localization scheme}\label{sec:preconditioning}
\vspace{-0.05in}
{
We begin with the first limitation of the generic algorithm: poor score accuracy caused by sparse training data in the neighborhood of $\theta^\ast$. In the generic algorithm, the reference table is generated from the prior, which in many SBI problems places little mass near $\theta^\ast$. However, the score used by LMC is needed only at $\Xn^\ast$ and for $\theta$ in a neighborhood of $\theta^\ast$. As a result, uniform score matching over the full proposal support can waste simulation effort on regions that are irrelevant for posterior sampling and lead to poor local score accuracy. We illustrate this phenomenon using a simple Beta-Binomial example in \Cref{sec:preconditioning_details}.
}

To address this challenge, we introduce a computationally efficient localization
step that constructs a proposal distribution concentrated around $\theta^*$. Score matching is then performed under this localized proposal, which significantly improves score estimation accuracy in the neighborhood of $\theta^\ast$ and thus enhances the performance of Langevin sampling.

We assume that the simulation process admits a reparametrized representationn $\Xn^\theta = \tau(\theta, \Zn)$, where $\Zn$ is drawn from a known latent distribution and $\tau$ is a deterministic map. This assumption is satisfied by most simulator-based models \citep{meeds2015optimization,brehmer2020mining,cranmer2020frontier}. For example, in the queuing model of \Cref{sec:queuing_simu}, $Z_i$ corresponds to the quantiles used to sample  $u_{ik}$ and $w_{ik}$.

Motivated
by the simulated method of moments \citep{mcfadden1989method,pakes1989simulation} {and Optimization Monte Carlo \citep{meeds2015optimization,ikonomov2020robust},}
we generate $B$ independent latent draws $\Zm^{(b)}$. For each draw, we solve
\begin{equation}\label{eq:smm_sol}
    \widehat \theta^{(b)}
    = \arg\min_\theta d_\text{SW}\big(\tau(\theta,\Zm^{(b)}), \Xn^\ast\big),
    \qquad b = 1,\ldots, B,
\end{equation}
where $d_\text{SW}(\cdot, \cdot)$ denotes the sliced Wasserstein distance (SWD) \citep{bonneel2015sliced}.
 We use this discrepancy because it yields a complexity that grows linearly with both the data dimension $p$ and parameter dimension $d_\theta$. Additionally, it also allows for comparison of datasets with different sample sizes and has favorable theoretical properties (see \Cref{sec:theory}).

\begin{figure}[!ht]
\centering
\includegraphics[width=0.8\textwidth,height=0.4\textwidth]{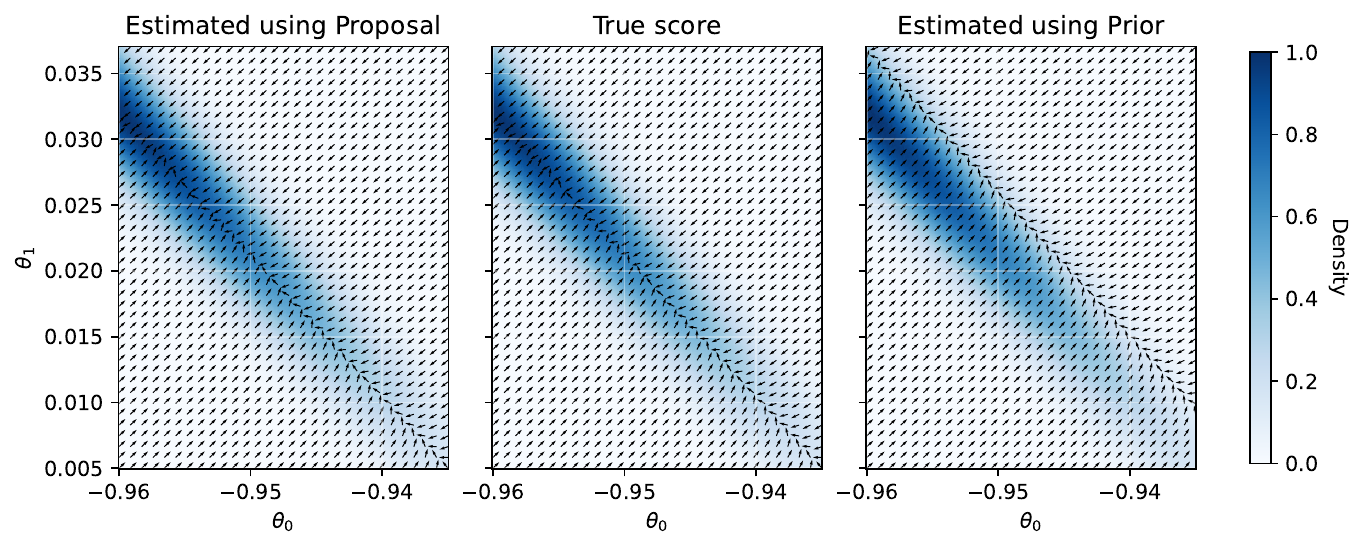}
\caption{\footnotesize Score estimation under the monotonic regression example in \Cref{sec:mono_reg}. Shown are score directions overlaid on the heatmap of the target posterior density on $(\theta_0, \theta_1)$. From left to right: scores estimated from the proposal $q(\theta)$, scores from the true likelihood, and scores estimated
from the prior $\pi(\theta)$.}\label{fig:score_direction}
\end{figure}

The resulting estimators $\{\widehat\theta^{(b)}\}_{b=1}^B$ are then summarized by a Gaussian proposal $q(\theta) = \mathcal{N}(\widehat\mu, \widehat\Sigma)$, where $\hat \Sigma$ is taken to be the diagonal part of the empirical covariance matrix. This localized proposal yields a more informative reference table for score matching and substantially improves the quality of the estimated score used in the subsequent Langevin sampler, as we illustrate in \Cref{fig:score_direction}. Further details of SWD and the localization step are provided in
\Cref{sec:preconditioning_details}. We also discuss the connection between our localized proposal and existing approaches in \Cref{sec:localization_discussion} and  non-diagonal Gaussian proposals in \Cref{sec:other_proposal}.

\subsection{Structured score matching}\label{sec:regular}

We now address the remaining two limitations of the generic algorithm: the unfavorable dependence on the sample size $n$, and the fact that first-order score matching alone does not fully capture the local geometry needed for stable LMC.

A distinctive feature of score-matching networks for Bayesian inference,
compared to networks that directly target generic conditional densities, is
that the true score function obeys universal structures from classical
statistical theory, which can be conveniently exploited to enhance estimation
efficiency and accuracy. For
simplicity, we focus on the case of i.i.d. observations, i.e.,
$\pn_\theta(\Xn) = \prod_{i=1}^n p_\theta(X_i)$. In this setting, the true
score function satisfies three fundamental properties:
1.~\emph{additive structure:}
   $s^\ast(\theta, \Xn) = \sum_{i=1}^n s^\ast(\theta, X_i)$.
2.~\emph{mean-zero structure:}
   $\E_{X_i \sim P_\theta}[s^\ast(\theta, X_i)] = 0$.
3.~\emph{curvature structure:}
   $\E_{X_i \sim P_\theta}\!\left[s^\ast(\theta, X_i) s^\ast(\theta, X_i)^T
   + \nabla_\theta s^\ast(\theta, X_i)\right] = 0$.
These hold for almost all likelihood functions under mild
conditions.\footnote{Note that standard conditions for the mean-zero and curvature structures require that the support of $X_i$ does not depend on $\theta$ (see, e.g., Chapter 2 of \cite{vanderVaart1998}).}

{
The additive structure is key to scalability with respect to the sample size $n$. The mean-zero structure controls the bias that accumulates when single-observation score estimates are summed across the dataset. The curvature structure enforces the local second-order geometry of the likelihood score and is essential for maintaining LMC accuracy in a neighborhood of $\theta^\ast$. Beyond these direct benefits, incorporating these statistical structures improves the generalizability of score-matching networks, which is crucial for the amortized training procedure in which a score network trained on a single observation is used to compute the score for the entire dataset.

We now describe how these three ingredients can be integrated into the score-matching scheme to improve scalability and accuracy. While we focus on the i.i.d.~case in this section, we also discuss in Appendix~\ref{sec:version1} how this approach can be generalized to dependent data settings, together with the corresponding theoretical results.
}

\subsubsection{Additive structure and single data score matching}\label{sec:additive_structure}

{
We begin with the additive structure. Instead of training a score network directly on the full dataset $\Xn$, we model the individual score function $s^*(\theta, X)$ by a network $s_\phi(\theta,X)$ and estimate the full-data score by summing the individual scores, i.e., $s_{\widehat\phi}(\theta, \Xn) = \sum_{i=1}^n s_{\widehat\phi}(\theta, X_i)$.

This structure has two advantages. First, it replaces a network whose input dimension grows with $n$ by a network defined on a single observation $X$, which significantly reduces the complexity of the score network and the number of simulations $N$ required for training. Second, it allows the reference table for score matching to be generated from single observations rather than full datasets. Specifically, we train on $\mD^S = \{(\theta^{(k)}, X^{(k)})\}_{k=1}^N \iid q(\theta)\,p_\theta(X)$, where $q(\theta)$ is the proposal distribution learned from the localization step in \Cref{sec:preconditioning}. This reduces the overall simulation cost from $O(Nn)$ to $O(N)$.

Therefore, the additive structure addresses the second issue raised at the beginning of the section: it makes score training substantially more scalable in the sample size $n$. Compared with more general exchangeable architectures such as Deep Sets \citep{zaheer2017deep}, it is also simpler to implement and train while still capturing the essential score structure.
}

However, single data score matching introduces a new difficulty. Even if $s_\phi(\theta, X)$ is a good estimator of $s^*(\theta, X)$, the error in the full data score estimate $s_{\widehat\phi}(\theta, \Xn) = \sum_{i=1}^n s_{\widehat\phi}(\theta, X_i)$ can accumulate with $n$. In the worst-case scenario,
the total score-matching loss
$\E \big[\big\| s_{\widehat\phi}(\theta, \Xn) - s^\ast(\theta, \Xn) \big\|^2 \big]
= \E \big[\big\| \sum_{i=1}^n \big( s_{\widehat\phi}(\theta, X_{i}) - s^\ast(\theta, X_{i}) \big) \big\|^2\big]$
may grow as large as $n^2$ times the single-observation score-matching loss
$\E \big[\big\| s_{\widehat\phi}(\theta, X_{1}) - s^\ast(\theta, X_{1}) \big\|^2\big]$.

\subsubsection{Mean-zero structure and debiasing}\label{sec:mean_zero_structure}

{To formally characterize the cumulative error issue in single data score matching,} we rewrite the overall score-matching
loss using the bias--variance decomposition as
{\small
\begin{equation*}
\E \big[\| s_{\widehat\phi}(\theta, \Xn) - s^\ast(\theta, \Xn) \|^2 \big]
= n \underbrace{\E \big[ \| s_{\widehat\phi}(\theta, X_{1}) - s^\ast(\theta, X_{1}) \|^2\big]}_{\text{Variance}}
+ n(n-1) \underbrace{\big\| \E [s_{\widehat\phi}(\theta, X_{1}) - s^\ast(\theta, X_{1})] \big\|^2}_{\text{Bias}^2}.
\end{equation*}}

In this decomposition, the variance term is $n$ times the single-observation
score-matching loss, which scales at most linearly in $n$. In contrast, the bias
term leads to quadratic growth in $n$ of the full-data score-matching loss.
Fortunately, due to the mean-zero structure of the true score $s^\ast$, the bias
term simplifies to $\text{Bias} = \| \E[s_{\widehat\phi}(\theta, X_{1})] \|$,
which can be explicitly computed and controlled (see below). This shows that enforcing the mean-zero structure
on the score-matching network can effectively control the bias term, thereby
providing a way to bypass the exploding cumulative score-matching error issue.

In this work, we introduce a post-processing debiasing step to center the estimated score-matching network by subtracting the expectation
$\widehat h(\theta) := \E_{X_{1}\sim p_\theta}[s_{\widehat\phi}(\theta, X_{1})]$ from
$s_{\widehat\phi}(\theta, x)$.
Specifically, we first train the score-matching network on the reference table $\mD^S$ using a loss function in \eqref{eq:score_loss}.
We then fit a regression model to approximate the mean $\widehat h(\theta)$ of $s_{\widehat\phi}(\theta, X)$
using another neural network $h_\psi(\theta): \R^{d_\theta} \to \R^{d_\theta}$,
parameterized by $\psi$. The corresponding \emph{mean-matching} optimization objective is
{
\begin{equation}\label{eq:score_demean_noreg}
\widehat \psi =\arg\min_{\psi} \mathbb{E}_{q(\theta)} \bigg[\normbig{h_\psi(\theta)- \E_{p_\theta} \big[s_{\widehat\phi} (\theta, X)\big] }^2\bigg].
\end{equation}
To implement \eqref{eq:score_demean_noreg}, we construct a separate regression
reference table $\mD^R = \{(\theta^{(l)}, \X^{(l)}_{m_R})\}_{l=1}^{N_R}$, where
$\X^{(l)}_{m_R}$ is a simulated dataset of size $m_R$ generated from $\theta^{(l)}$,
and the expectation $\E_{p_\theta}$ can thus be approximated by an empirical average.
}

The following lemma shows that incorporating this debiasing step never hurts the accuracy of the full-data score approximation.
\begin{lemma}\label{lem:debias_error}
The debiasing step never increases the score-matching error, i.e.
\[
\mathbb{E}_{(\theta,X)\sim q(\theta)p_\theta(X)} \Big[\big\| s_{\hat\phi}(\theta, X) - h_{\hat\psi}(\theta) - s^\ast(\theta, X) \big\|^2 \Big] \le \mathbb{E}_{(\theta,X)\sim q(\theta)p_\theta(X)} \Big[ \big\| s_{\hat\phi}(\theta, X) - s^\ast(\theta, X) \big\|^2 \Big].
\]
\end{lemma}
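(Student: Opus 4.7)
The plan is to combine a pointwise bias--variance identity, made workable by the mean-zero structure of $s^\ast$, with the optimality of $\widehat\psi$ in the objective \eqref{eq:score_demean}. Throughout, write $m(\theta) := \E_{X\sim p_\theta}[s_{\widehat\phi}(\theta,X)]$ and recall that $\E_{X\sim p_\theta}[s^\ast(\theta,X)]=0$ by the mean-zero structure discussed in \Cref{sec:regluar}.

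First, I would expand the debiased error pointwise in $\theta$. For any fixed vector $g\in\R^{d_\theta}$ (eventually $g=h_{\widehat\psi}(\theta)$),
\begin{align*}
\E_{X\sim p_\theta}\big[\norm{s_{\widehat\phi}(\theta,X)-g-s^\ast(\theta,X)}^2\big]
&= \E_{X}\big[\norm{s_{\widehat\phi}(\theta,X)-s^\ast(\theta,X)}^2\big]\\
&\quad - 2 g^{T}\,\E_{X}[s_{\widehat\phi}(\theta,X)-s^\ast(\theta,X)] + \norm{g}^2\\
&= \E_{X}\big[\norm{s_{\widehat\phi}(\theta,X)-s^\ast(\theta,X)}^2\big] + \norm{g-m(\theta)}^2 - \norm{m(\theta)}^2,
\end{align*}
where the mean-zero property was used to replace $\E_X[s_{\widehat\phi}-s^\ast]$ with $m(\theta)$. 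Integrating over $\theta\sim q$ with $g=h_{\widehat\psi}(\theta)$ gives the clean decomposition
\[
L_2 - L_1 \;=\; \E_{\theta\sim q}\big[\norm{h_{\widehat\psi}(\theta)-m(\theta)}^2\big] \;-\; \E_{\theta\sim q}\big[\norm{m(\theta)}^2\big],
\]
where $L_1,L_2$ denote the right-- and left--hand sides of the lemma.

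Second, I would control the remaining sign by exploiting the optimality of $\widehat\psi$. The key observation is to compare the objective in \eqref{eq:score_demean} at $\widehat\psi$ to its value at a baseline parameter $\psi_0$ for which $h_{\psi_0}\equiv 0$ (achievable for any standard network architecture, e.g.\ by zeroing out the final-layer weights and bias). At $h\equiv 0$ the curvature-matching term vanishes identically, so the whole objective reduces to $\E_\theta[\norm{m(\theta)}^2]$. Since $\widehat\psi$ is a minimizer and the curvature penalty at $\widehat\psi$ is non-negative (assuming $\lambda_2\ge 0$),
\[
\E_{\theta\sim q}\big[\norm{h_{\widehat\psi}(\theta)-m(\theta)}^2\big] \;\le\; \E_{\theta\sim q}\big[\norm{h_{\widehat\psi}(\theta)-m(\theta)}^2\big] + \lambda_2\cdot(\text{penalty at }\widehat\psi) \;\le\; \E_{\theta\sim q}\big[\norm{m(\theta)}^2\big].
\]
Plugging this into the identity from the first step gives $L_2-L_1\le 0$, as desired.

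I do not expect a serious obstacle here; the proof is essentially an algebraic identity plus a one-line minimizer comparison. The only care points are (i) invoking the mean-zero structure of $s^\ast$ correctly (it is exactly what eliminates the cross term and turns the expansion into a true bias--variance split), and (ii) ensuring the parametrization of $h_\psi$ can realize the zero function so that the baseline comparison is legitimate. If one wished to relax (ii), the same conclusion still holds whenever the parametric class is closed under scalar multiplication and contains any element whose objective value is at most $\E_\theta[\norm{m(\theta)}^2]$; for the neural architectures considered this is automatic.
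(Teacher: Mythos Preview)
Your proof is correct and follows essentially the same approach as the paper: a bias--variance decomposition leveraging the mean-zero structure of $s^\ast$, combined with the optimality of $\widehat\psi$ against the feasible choice $h_\psi\equiv 0$ (at which the curvature penalty vanishes). The only cosmetic difference is that the paper writes the decomposition by centering around $m(\theta)$ explicitly, whereas you arrive at the same identity by expanding the square directly.
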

The intuition is straightforward: since $h_\psi(\theta) \equiv 0$ is always a feasible solution to \eqref{eq:score_demean_noreg}, the minimizer $h_{\hat\psi}$ can only further reduce the score-matching bias. A detailed proof is provided in \Cref{sec:debias_error_pf}. This lemma also plays an important role in guaranteeing that the overall
posterior approximation error in \Cref{thm:post_convergence_iid} in the next section scales linearly, rather than quadratically, in $n$. We also provide a numerical illustration of the benefit of debiasing in \Cref{fig:benefit_debias}.

\begin{figure}[!ht]
\centering
\includegraphics[width=0.8\textwidth,height=0.4\textwidth]{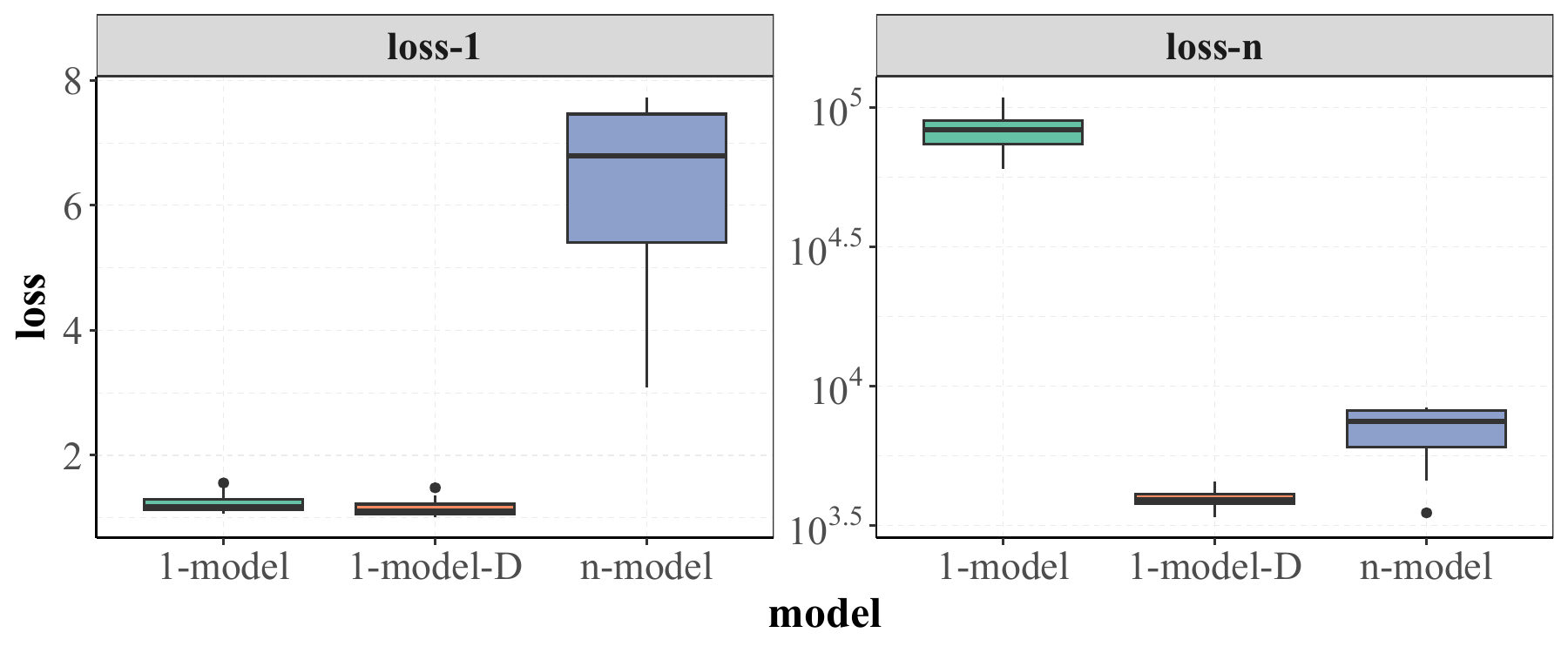}
\caption{\footnotesize Score estimation losses on a single observation (loss-1) and on $n$ observations (loss-n) under the monotonic regression example in \Cref{sec:mono_reg}. ``1-model'' and ``1-model-D'' refer to the models trained on single observation, without and with the debiasing step, respectively. ``n-model'' refers to the model trained on $n$ observations.}\label{fig:benefit_debias}
\end{figure}

{\begin{remark}[Scalable structure for i.i.d. data]\label{rmk:iid_structure}  For i.i.d., or more generally exchangeable, data, many existing SBI methods process the samples through a generic permutation-invariant representation, such as Deep Sets \citep{zaheer2017deep} or related embedding networks \citep{deistler2025simulation}. These architectures are flexible, but typically rely on a learned summation-based encoder. In favorable cases, such an encoder may approximate an informative summary statistic, but there is generally no guarantee that it recovers a sufficient statistic, especially when the latter is not of summation form. Moreover, this strategy does not reduce the simulation cost of generating the reference table. In contrast, our method builds the i.i.d. structure directly into the score model through the exact additive decomposition of the full-data likelihood score into a sum of single-observation scores.

A different line of work, including NLE \citep{papamakarios2019sequential}, NRE \citep{hermans2020likelihood}, and compositional score models \citep{geffner2023compositional,linhart2024diffusion}, leverages the i.i.d. structure by learning a likelihood, likelihood ratio, or posterior-related score for a single observation and then combining these contributions across samples. However, these methods do not provide a transparent mechanism for controlling how approximation errors accumulate with the sample size. Our approach addresses this issue directly: the debiasing step enforces the mean-zero identity of the single-observation likelihood score, and our theory tracks how the aggregated score error propagates to the posterior approximation. This explicit control of error accumulation is a key distinction of our method.
\end{remark}

}

\subsubsection{Curvature structure and local geometric accuracy}\label{sec:curvature_structure}

{
While the additive and mean-zero structures address scalability and bias accumulation, they do not fully resolve the third issue raised at the beginning of this section: first-order score matching alone is not sufficient to guarantee accurate LMC.

To see this, note that, by Bayes' rule, the standard score-matching objective in \eqref{eq:obj_cond_score} can be rewritten as
\[
\E_{(\Xn,\theta) \sim p(\Xn)\,p(\theta\,|\,\Xn)}
    \Big[\big\| s_\phi(\theta, \Xn) - \nabla_\theta \log \pn_\theta(\Xn) \big\|^2\Big].
\]
Standard posterior concentration results suggest that $p(\theta\,|\,\Xn)$ tends to concentrate around the parameter value that generated the data $\Xn$. Consequently, this objective primarily controls the pointwise score error near that parameter value. When the learned score is applied to the observed data $\Xn^\ast$, the standard score-matching objective mainly controls the score estimation error in the region where the training posterior concentrates, namely very close to $\theta^\ast$, i.e., $\norm{\hat s(\theta^\ast, \Xn^\ast)- s(\theta^\ast, \Xn^\ast)}$.

In contrast, the LMC trajectory may visit a larger neighborhood around $\theta^\ast$ than the region in which the pointwise score error is directly controlled by posterior concentration. Thus, accurate LMC requires the estimated score to remain accurate not only near $\theta^\ast$ but also throughout this neighborhood. In such a neighborhood, the accuracy of the estimated score is affected by both the pointwise score error at $\theta^\ast$ and the local behavior of the score as $\theta$ moves away from $\theta^\ast$. This can be seen from the following local linear approximations:
\begin{align*}
s(\theta, \Xn^\ast) &\approx s(\theta^\ast, \Xn^\ast)+ \nabla_\theta s(\theta, \Xn^\ast)\mid_{\theta=\theta^\ast}(\theta - \theta^\ast), \\
\hat s(\theta, \Xn^\ast)& \approx \hat s(\theta^\ast, \Xn^\ast) + \nabla_\theta \hat s(\theta, \Xn^\ast)\mid_{\theta=\theta^\ast}(\theta - \theta^\ast).
\end{align*}
Therefore, accurate Langevin dynamics depends not only on the pointwise score error, but also on whether the estimated score preserves the local geometry of the true score. This local geometry is governed by the discrepancy
$\norm{\nabla_\theta \hat s(\theta, \Xn^\ast)\mid_{\theta=\theta^\ast}- \nabla_\theta s(\theta, \Xn^\ast)\mid_{\theta=\theta^\ast}}$.
Stable and accurate posterior sampling therefore requires controlling both the score itself and its local curvature. A related idea of incorporating higher-order information in score estimation was considered by \citet{lu2022maximum}, although their goal and formulation differ from ours. A rigorous analysis of this point is provided in \Cref{thm:post_convergence_iid} in the next section.

\begin{figure}[!ht]
    \centering
    \includegraphics[width=0.8\textwidth]{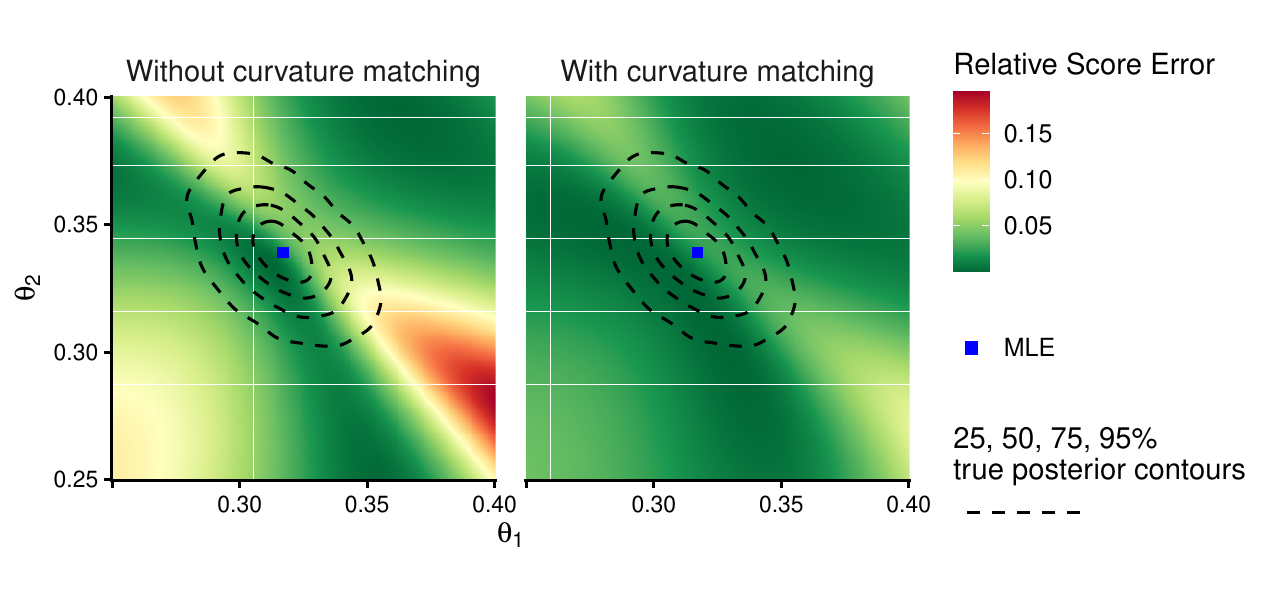}
    \caption{Relative error of $\normsm{s_{\hat\phi}(\theta, \Xn^*)- s^*(\theta, \Xn^*)}$ under a Dirichlet-Multinomial with 3 categories example (details in \Cref{sec:curvature_benefit}). The left panel shows the results from the score network without the curvature regularization, and the right panel shows the results with the curvature regularization.}\label{fig:relative_error_heatmap}
\end{figure}

We also provide an empirical illustration of this benefit using a Dirichlet-Multinomial example with three categories in \Cref{fig:relative_error_heatmap}; a detailed description is given in \Cref{sec:curvature_benefit}. The parameter $\theta$ is two-dimensional. In this example, we train two score networks, with and without curvature regularization, and evaluate the relative error of the estimated score $\norm{s_{\hat\phi}(\theta, \Xn^\ast)- s^\ast(\theta, \Xn^\ast)}$, normalized by $\norm{s^\ast(\theta, \Xn^\ast)}$, in a neighborhood of $\theta^\ast$. The results show that curvature regularization significantly improves the local accuracy of the estimated score when $\theta$ is near $\theta^\ast$. In \Cref{sec:curvature_benefit}, we also provide two additional plots showing how the relative score error changes with the distance $\norm{\theta-\theta^\ast}$ and how this affects the marginal density of the approximated posterior. These results further illustrate the importance of the curvature structure in improving local score accuracy and, consequently, the quality of the posterior approximation.

Motivated by this observation,
we impose the curvature identity satisfied by the true likelihood score as a regularization penalty in the score-matching objective.
Concretely, we train the single-observation score network using the following loss function, which combines the standard score-matching loss with a curvature-matching penalty:
{\small\begin{equation}\label{eq:score_loss_struct_single}
    \min_\phi  \E_{q(\theta)}  \bigg[ \E_{p_\theta}  \Big[
    \underbrace{\|s_\phi(\theta,X)- s^\ast(\theta,X)\|^2}_{\text{score-matching loss on a single $X$}} \Big]
    + \lambda_1 \,\underbrace{\Big\| \E_{ p_\theta}\!\big[s_\phi(\theta,X)s_\phi(\theta,X)^T
    + \nabla_\theta s_\phi(\theta,X)\big]\Big\|_F^2}_{\text{curvature-matching loss}} \bigg],
\end{equation}}
where $\lambda_1>0$ is a hyperparameter controlling the strength of the curvature penalty and $\norm{\cdot}_F$ denotes the Frobenius norm. In practice, we approximate the expectation in the curvature penalty by an empirical average.
}

{

This penalty encourages the estimated score to capture the correct local second-order geometry of the likelihood, which in turn improves the stability and accuracy of the resulting Langevin dynamics. In our implementation, we also use a curvature-aware version of the debiasing step, obtained by modifying \eqref{eq:score_demean_noreg}, so that the final debiased score function $\wt s(\theta, X) = s_{\widehat\phi}(\theta, X) - h_{\widehat\psi}(\theta)$ continues to preserve the curvature structure. The details of the modified loss and a discussion of alternative approaches are provided in \Cref{sec:alter_ways_demean}.
}

We provide an algorithm view of the overall procedure with all three structures and more implementation details in \Cref{sec:regularization_details}.
\begin{remark}[Generalization to dependent datasets] The debiasing idea here can be generalized to weakly dependent settings (see a detailed discussion in \Cref{sec:regularization_details}). For more general dependent data settings, one has to resort to full data score matching. In this case, the curvature structure is still helpful in improving the stability of the Langevin sampling procedure. We provide the full data score matching alternative and its theoretical analysis in \Cref{sec:version1}.
\end{remark}

\vspace{-2em}

\section{Theoretical Results}\label{sec:theory}
In this section, we study the theoretical properties of our proposed methods in
the i.i.d.\ setting and how the convergence of the approximated posterior to the true posterior is affected by different components of score-matching training.
In particular, this analysis highlights why incorporating the curvature and
mean-zero structures into the score-matching networks is essential for ensuring
both the accuracy and stability of the posterior approximation.
We also provide a theoretical analysis on how localization step enables
rapid identification of a neighborhood around $\theta^\ast$ in \Cref{sec:rate_loc} and an analysis of the full data score matching alternative in \Cref{sec:version1}.

For notational simplicity, we denote the estimated score function by
$\widehat{s}(\theta, \Xn)$, where $\widehat{s}(\theta, \Xn) = \sum_{i=1}^n \widehat{s}(\theta, X_i)$.
Recalling the Langevin sampling step in \eqref{eq:langevin},
we denote the approximated posterior distribution after $k$ steps by
$\widehat{\pi}^{k\tau_n}(\theta \mid \Xn^\ast)$, and the final approximated
posterior by $\widehat{\pi}_n(\theta \mid \Xn^\ast) = \widehat{\pi}^{K\tau_n}(\theta \mid \Xn^\ast)$.
Similarly, we denote by $\pi^{k\tau_n}(\theta \mid \Xn^\ast)$ the distribution of
the Langevin sampler using the true score function $s^\ast(\theta, \Xn)$ after
$k$ steps.

 Using the triangle inequality, we can bound the posterior approximation error under the total variation distance as $d_\text{TV}(\widehat \pi_n, \pi_n) \ \leq \ d_\text{TV}( \pi_n, \pi^{K\tau_n}) + d_\text{TV}(\pi^{K\tau_n},\widehat \pi^{K\tau_n} )$.
The discretization error, $d_\text{TV}( \pi_n, \pi^{K\tau_n})$, arises from the Euler-Maruyama discretization of the
Langevin diffusion with the true score and is unaffected by the choice of
score-matching strategy, whereas the score error, $d_\text{TV}(\pi^{K\tau_n},\widehat \pi^{K\tau_n} )$, results from replacing the true
score with the estimated score function in the drift term.

{For simplicity of presentation, we defer the formal statement of the standard regularity conditions to \Cref{sec:regularity_conditions_thm}, including the concentration of the true posterior and maximum likelihood estimator (MLE), the Lipschitz continuity of the true score function,  the log-Sobolev inequality for the posterior distribution, and the regularity conditions on  true and  estimated scores.}

Our final assumption concerns the score matching error,
which can be controlled by properly choosing the size
of the score network to optimally balance
the approximation error and the generalization bound.
Similar bounds have been extensively studied
in the statistical learning literature; see, for example, \cite{oko2023diffusion_minimax,tang2024conditional_diffusion_minimax}.

\begin{assumption}[Uniform score-matching error]\label{ass:uniform_sm_err_single}
Define the set $\mA_{n,1}:= \{\theta: \norm{\sqrt n(\theta- \theta^\ast)}_2 < {C_0}\sqrt{\log n}\}$ for $C_0 = \max \Big\{1, \sqrt{\frac{6}{C_1}} \Big\}$. The score-matching error, curvature-matching error and mean-matching error are all uniformly bounded as
\begin{align*}
& \wt\varepsilon_{N,1}^2:= \sup_{\theta\in \mA_{n,1}} \E_{X\sim P_{\theta}} \norm{\widehat s(\theta, X)- s^\ast(\theta, X)}^2 & \text{\rm(score-matching error)}\\
& \wt\varepsilon_{N_R,m_R,2}^2:=\sup_{\theta\in \mA_{n,1}} \norm{\E_{X\sim P_\theta} \big[ \nabla_\theta \widehat s(\theta, X) +\widehat s(\theta, X)\widehat s(\theta, X)^T \big] }_F^2 &\text{\rm(curvature-matching error)}\\
& \wt\varepsilon_{N_R,m_R,3}^2:=\sup_{\theta\in \mA_{n,1}} \norm{\E_{X\sim P_\theta} \widehat s(\theta, X)}^2 & \text{\rm(mean-matching error)}.
\end{align*}
\end{assumption}
Score matching error bounds are usually averaged over the sampling distribution
$q(\theta)$ of $\theta$.
According to the localization error bound established in \Cref{thm:precond_convergence},
the proposal distribution $q(\theta)$ is guaranteed to concentrate
in an $n^{-1/2}$ neighborhood of $\theta^\ast$.
This motivates us to localize the uniform estimation error bound
to the set $\mA_{n,1}$. Here the score-matching error depends on the complexity of the true score function and the size $N$ of the reference table $\mD^S$. For the curvature-matching error and mean-matching error, they are assessed using the reference table $\mD^R$ of size $(N_R, m_R)$, where the Monte Carlo approximation of expectations introduces an error that decays at rate  $1/\sqrt{m_R}$.

Denote the Fisher information matrix as $I(\theta):=\E_{P_\theta}[-\nabla_\theta s^*(\theta, X)]$ and the chi-square divergence between two distributions $P$ and $Q$ as $d_{\chi^2}(P||Q)=:\E_{Q} \big[\big(\frac{p(x)}{q(x)}-1\big)^2\big]$. We also write $f(x)\lesssim g(x)$ if there exists a constant $C>0$ such that $f(x)\leq C\cdot g(x)$.

\begin{theorem}[Posterior approximation error under single data score matching]\label{thm:post_convergence_iid}
Suppose \Cref{ass:uniform_sm_err_single} and the regularity conditions in \Cref{sec:regularity_conditions_thm}
hold and assume $\|I(\theta^*)\|_F<\infty$. If the step size $\tau_n$ and initial distribution of the Langevin Monte Carlo satisfy
 \[
 \tau_n = O\Big( \frac{1}{d_\theta C_{LSI}\lambda_L^2n}\Big) \quad  \mbox{and}\quad d_{\chi^2}(\widehat \pi^0_n\big(\cdot\mid \Xn^\ast), \pi_n(\cdot\mid \Xn^\ast)\big)\leq \eta_\chi^2,
\]
where $\eta_\chi>0$ is a constant, then we have
{\footnotesize\[
\E_{\Xn^\ast\sim \Pn_{\theta^\ast}}\Big[d_\text{TV}^2\big(\widehat\pi_n(\cdot\mid \Xn^\ast), \pi_n(\cdot\mid \Xn^\ast)\big) \Big]\ \lesssim \ \underbrace{\exp\Big(-\frac{K n \tau_n}{5C_\text{LSI}}\Big)\, \eta_\chi^2}_{\text{burn-in error}}\ + \ \underbrace{d_\theta\, C_\text{LSI} \,\lambda_L^2\, n\,\tau_n}_\text{discretization error}\ +\ \underbrace{\varepsilon_n (Kn \tau_n +\eta_\chi C_\text{LSI})}_\text{score error},
\]}
where $\varepsilon_n^2 :\,= \widetilde{\varepsilon}_{N,1}^2 (\log n)^2  +\widetilde{\varepsilon}_{N_R,m_R,2}^2  (\log n)^2 + n \,\widetilde{\varepsilon}_{N_R,m_R,3}^2\log n  + n^{-1} (\log n)^3$.
\end{theorem}

The proof is provided in \Cref{sec:post_convergence_iid_pf} and $C_\text{LSI}, \lambda_L$ are constants defined in the regularity conditions in \Cref{sec:regularity_conditions_thm}. The first burn-in error corresponds to the mixing bound of the continuous-time Langevin dynamics
run up to time $k\tau_n$.
The additional factor of $n$ in the exponent arises from
Assumption~\ref{ass:log_sobolev}, which implies that
the log-Sobolev constant of the posterior is $C_{\text{LSI}}/n$. \Cref{thm:post_convergence_iid} also suggests that we should choose the stepsize $\tau_n = O(\frac{1}{n})$ to control the discretization error. In practice, one can simply choose the initial distribution $\widehat\pi_n^0 (\cdot\mid \Xn^\ast)$ as the proposal distribution $q(\theta)$.

The score error is determined by three sources of error in the score estimation process.  To ensure a diminishing error $\varepsilon_n=o(1)$ as $n\to\infty$, it suffices for the score-matching error to decay at the rate $\widetilde{\varepsilon}_{N,1} = \mathcal{O}\!\left(\tfrac{1}{\log n}\right)$.  The Monte Carlo errors, $\widetilde{\varepsilon}_{N_R,m_R,2}$ and $\widetilde{\varepsilon}_{N_R,m_R,3}$, both scale as $\mO(1/\sqrt{m_R})$.  Thus, controlling these terms requires $m_R = \mathcal{O}(n \log n)$. In contrast, if we directly match the single-data score without the debiasing step,
it can be shown that the corresponding score error term then takes the form of $\widetilde{\varepsilon}_{n}^2:\,=n\widetilde{\varepsilon}_{N,1}^2(\log n)^2 + \widetilde{\varepsilon}_{N_R,m_R,2}^2 (\log n)^2+n^{-1} (\log n)^3$. This forces a much stricter condition
 $\widetilde{\varepsilon}_{N,1} = \mathcal{O}\!\left(\frac{1}{\sqrt n\log n}\right)$  in order to control $\widetilde{\varepsilon}_{n}$. The remark below indicates that pushing the score-matching error beyond the root-$n$ rate could lead to exponential sample complexity in $d_\theta$. This is consistent with our observation that the non-debiased method requires far more samples than its debiased counterpart.

\begin{remark}[Score-matching error] \label{rmk:sm_err}
For both approaches, the convergence rate of the approximated posterior
is governed by the decay rate of the score-matching error.
Under expressive neural networks, the score-matching error typically scales as
$L^{\frac{d}{2\beta+d}}N^{-\frac{\beta}{2\beta+d}}$,
where $N$ denotes the score matching sample size, $L$ the radius of the input domain,
$\beta$ the smoothness of the true score function,
and $d$ the input dimension
(see \cite{Zuowei_Shen_2020,schmidt2020nonparametric};
for general nonparametric estimation error and its dependence on $L$,
see, e.g., \cite{yang2015minimax_optimal_nonparametric_regression_high_dim}).
Our localization step reduces $L$ from $O(1)$ to $O(n^{-1/2})$.
Thus, by taking $N$ to be of the same order as $n$,
one can guarantee that the score-matching error is of order
$n^{-\frac{d/2}{2\beta+d}} \cdot n^{-\frac{\beta}{2\beta+d}} = n^{-1/2}$,
which does not suffer from the curse of dimensionality in the error exponent.
In addition, when the score function admits a low-dimensional structure,
$d$ can be replaced by the intrinsic dimension, leading to faster rates
\citep{bauer2019deep}.
Another advantage of our score network construction in \Cref{alg:langevin_single_obs}
is that the additive structure we impose reduces the input dimension for data
from $np$ to $p$ in the score-matching step.
Consequently, the effective input dimension decreases from $d_\theta + np$
to $d_\theta + p$, thereby further improving the scalability of our method
and yielding more favorable approximation behavior in practice.
\end{remark}

\vspace{-0.2in}
\section{Empirical Analysis}\label{sec:simulation}
In this section, we conduct a series of simulation studies to evaluate the performance of our proposed method.  We look into three examples, including (1) M/G/1-queuing model, which is a low-dimensional SBI benchmark model, (2) Bayesian monotonic regression, which is a high-dimensional model with a known posterior distribution, and
{(3) an mRNA transfection model with real data application. We also include an additional example of a stochastic epidemic model in \Cref{sec:sto_epi_details}.}

We compare our method with existing SBI methods, including ABC using 1-Wasserstein distance \citep{bernton2019approximate}, BSL \citep{price2018bayesian}, NPE \citep{papamakarios2016fast,lueckmann2017flexible}, {and NLE  \citep{papamakarios2019sequential}}, unless otherwise noted. {To better handle i.i.d. data, we implement NPE with a permutation-invariant embedding network \citep{radev2020bayesflow}, and choose NLE to estimate the likelihood contribution of each individual observation, which are then aggregated across all observations at inference time following \citet{papamakarios2019sequential}. For these methods, we use the same training distribution as the proposal distribution $q(\theta)$ for our method, whenever the proposal distribution is different from the prior.}

{Additionally, we also explore the sequential NPE (SNPE) in the first two examples. SNPE performs significantly better than amortized NPE in the queuing model, which is a low-dimensional example and no localization is applied. However, SNPE does not perform as well as NPE with the localized proposal distribution in the monotonic regression example, which is higher dimensional. We find that its first round trained with the prior can be estimated poorly, leading to worse performance in subsequent rounds (details in \Cref{sec:comp_seq_methods}.)}

For our methods, we include both the version with full data score matching (details in \Cref{sec:version1}), referred as n-model, and the version with single data score matching in \Cref{sec:regular}, referred as single-model.
  To compare the performance of different methods, we report: (1) average estimation bias $|\E(\widehat\theta)-\theta^\ast|$, (2) average width of the 95\% credible interval (CI95\_width), and (3) average coverage of the 95\% credible interval (CI width).
Details of implementation for all simulation examples are provided in \Cref{sec:additional_implementation}.

\vspace{-0.2in}

\subsection{M/G/1-queueing Model}\label{sec:queuing_simu}
\vspace{-0.1in}
We begin by applying our method to the M/G/1-queuing model, a classic example in the ABC literature. This model uses $3$ parameters $\theta = (\theta_1, \theta_2, \theta_3)$ to simulate customers' interdeparture times in a single-server system. We adopt the same setting as in \citet{jiang2018approximate}. We observe 500 independent time-series observations. Each observation is a 5-dimensional vector of inter-departure times $x_i = (x_{i1},x_{i2},x_{i3},x_{i4},x_{i5})^T$. In this model, the service times $u_{ik}\sim U[\theta_1,\theta_2]$ and the arrival times $w_{ik} \sim \text{Exp}(\theta_3)$. The observed  inter-departure times $X_i$ are given by the process $x_{ik} =u_{ik} + \max(0,\sum^k_{j=1} w_{ij}-\sum^{k-1}_{j=1} x_{ij})$.  The prior on $(\theta_1,\theta_2-\theta_1, \theta_3)$ is uniform on $[0,10]^2 \times [0.01,0.5]$.
The observed dataset $\Xn^\ast$ is generated under $\theta^\ast = (1, 5, 0.2)$. Since $\theta$ is low-dimensional here, we skip the localization step and directly use the prior to generate the reference table $\mD$ for all methods.

One point worth mentioning is that this model violates the boundary condition required in \Cref{ass:boundary}. There are two reasons: the prior density is uniform and not vanishing at boundary, and the support of $\theta_1$ depends on the data as it is easy to verify $\theta_1\leq \min_{i,j} \{x_{ij}\}$. This requires special treatments since the objective function in \eqref{eq:score_loss} is no longer valid. We consider two solutions in this work for this issue. First is to introduce a weight function $g(\theta, \Xn)$ such that the elementwise joint product $s_\phi(\theta, \Xn)\odot g(\theta, \Xn)$  can satisfy \Cref{ass:boundary}. We apply that to our n-model here. The second solution is to perturb the data with a random Gaussian noise to resolve the dependency between supports. A more detailed investigation is provided in \Cref{sec:boundary_sol}.

We repeat the experiment $100$ times
 (with distinct $\Xn^\ast$'s). The averaged results are reported in \Cref{queuing_10exp}, and a density plot of the approximated posterior in one experiment is shown in \Cref{fig:queuing}. We observe that our methods have smaller errors and tighter credible intervals
{than other methods.}
In particular, the n-model is doing exceptionally well on $\theta_1$ so we put in a separate plot. This is because the weight function $g(\theta, \Xn)$ supply the information that $\theta_1\leq \min \{x_{ij}\}$  and when $n=500$, the upper bound is almost 1. Other than that, the single-model
is performing better than the n-model.

\renewcommand{\arraystretch}{0.9}

\begin{table}[!ht]
\centering
\caption{Averaged results over 100 experiments under M/G/1-queuing model. We report the standard deviations of the statistics under the average.}
\resizebox{\textwidth}{!}{
\begin{tabular}{l|ccc|ccc|ccc}
\toprule
 & \multicolumn{3}{c|}{$\theta^\ast_{1} = 1$} & \multicolumn{3}{c|}{$\theta^\ast_{2} = 5$} & \multicolumn{3}{c}{$\theta^\ast_{3} = 0.2$} \\
 & $\lvert \widehat{\theta}_1 - \theta^\ast_{1} \rvert$ & CI95 Width & Cover95 & $\lvert \widehat{\theta}_2 - \theta^\ast_{2} \rvert$ & CI95 Width & Cover95 & \begin{tabular}{@{}c@{}} $\lvert \widehat{\theta}_3 - \theta^\ast_{3} \rvert$ \\ ($\times 10^{-2}$) \end{tabular} & CI95 Width & Cover95 \\
\midrule
single-model & \begin{tabular}{@{}c@{}} 0.025 \\ {\footnotesize (0.018)} \end{tabular} & \begin{tabular}{@{}c@{}} 0.152 \\ {\footnotesize (0.016)} \end{tabular} & \begin{tabular}{@{}c@{}} 0.98 \end{tabular} & \begin{tabular}{@{}c@{}} \textbf{0.044} \\ {\footnotesize (0.035)} \end{tabular} & \begin{tabular}{@{}c@{}} \textbf{0.283} \\ {\footnotesize (0.035)} \end{tabular} & \begin{tabular}{@{}c@{}} 0.97 \end{tabular} & \begin{tabular}{@{}c@{}} 0.379 \\ {\footnotesize (0.284)} \end{tabular} & \begin{tabular}{@{}c@{}} \textbf{0.017} \\ {\footnotesize (0.001)} \end{tabular} & \begin{tabular}{@{}c@{}} 0.94 \end{tabular} \\
\midrule
n-model & \begin{tabular}{@{}c@{}} \textbf{0.002} \\ {\footnotesize (0.002)} \end{tabular} & \begin{tabular}{@{}c@{}} \textbf{0.011} \\ {\footnotesize (0.002)} \end{tabular} & \begin{tabular}{@{}c@{}} 0.98 \end{tabular} & \begin{tabular}{@{}c@{}} 0.107 \\ {\footnotesize (0.082)} \end{tabular} & \begin{tabular}{@{}c@{}} 0.510 \\ {\footnotesize (0.056)} \end{tabular} & \begin{tabular}{@{}c@{}} 0.95 \end{tabular} & \begin{tabular}{@{}c@{}} 0.420 \\ {\footnotesize (0.356)} \end{tabular} & \begin{tabular}{@{}c@{}} 0.021 \\ {\footnotesize (0.002)} \end{tabular} & \begin{tabular}{@{}c@{}} 0.93 \end{tabular} \\
\midrule
ABC & \begin{tabular}{@{}c@{}} 0.594 \\ {\footnotesize (0.075)} \end{tabular} & \begin{tabular}{@{}c@{}} 2.953 \\ {\footnotesize (0.133)} \end{tabular} & \begin{tabular}{@{}c@{}} 1.00 \end{tabular} & \begin{tabular}{@{}c@{}} 0.259 \\ {\footnotesize (0.136)} \end{tabular} & \begin{tabular}{@{}c@{}} 4.117 \\ {\footnotesize (0.198)} \end{tabular} & \begin{tabular}{@{}c@{}} 1.00 \end{tabular} & \begin{tabular}{@{}c@{}} 1.172 \\ {\footnotesize (0.529)} \end{tabular} & \begin{tabular}{@{}c@{}} 0.058 \\ {\footnotesize (0.004)} \end{tabular} & \begin{tabular}{@{}c@{}} 1.00 \end{tabular} \\
\midrule
BSL & \begin{tabular}{@{}c@{}} 0.328 \\ {\footnotesize (0.223)} \end{tabular} & \begin{tabular}{@{}c@{}} 2.663 \\ {\footnotesize (0.592)} \end{tabular} & \begin{tabular}{@{}c@{}} 1.00 \end{tabular} & \begin{tabular}{@{}c@{}} 0.399 \\ {\footnotesize (0.258)} \end{tabular} & \begin{tabular}{@{}c@{}} 3.638 \\ {\footnotesize (1.233)} \end{tabular} & \begin{tabular}{@{}c@{}} 1.00 \end{tabular} & \begin{tabular}{@{}c@{}} 0.460 \\ {\footnotesize (0.343)} \end{tabular} & \begin{tabular}{@{}c@{}} 0.029 \\ {\footnotesize (0.023)} \end{tabular} & \begin{tabular}{@{}c@{}} 0.96 \end{tabular} \\
\midrule
NPE & \begin{tabular}{@{}c@{}} 0.067 \\ {\footnotesize (0.051)} \end{tabular} & \begin{tabular}{@{}c@{}} 0.465 \\ {\footnotesize (0.084)} \end{tabular} & \begin{tabular}{@{}c@{}} 1.00 \end{tabular} & \begin{tabular}{@{}c@{}} 0.201 \\ {\footnotesize (0.147)} \end{tabular} & \begin{tabular}{@{}c@{}} 1.018 \\ {\footnotesize (0.184)} \end{tabular} & \begin{tabular}{@{}c@{}} 0.99 \end{tabular} & \begin{tabular}{@{}c@{}} 0.448 \\ {\footnotesize (0.368)} \end{tabular} & \begin{tabular}{@{}c@{}} 0.021 \\ {\footnotesize (0.002)} \end{tabular} & \begin{tabular}{@{}c@{}} 0.94 \end{tabular} \\
\midrule
NLE & \begin{tabular}{@{}c@{}} 0.025 \\ {\footnotesize (0.017)} \end{tabular} & \begin{tabular}{@{}c@{}} 0.080 \\ {\footnotesize (0.009)} \end{tabular} & \begin{tabular}{@{}c@{}} 0.80 \end{tabular} & \begin{tabular}{@{}c@{}} 0.069 \\ {\footnotesize (0.050)} \end{tabular} & \begin{tabular}{@{}c@{}} 0.223 \\ {\footnotesize (0.046)} \end{tabular} & \begin{tabular}{@{}c@{}} 0.78 \end{tabular} & \begin{tabular}{@{}c@{}} 0.510 \\ {\footnotesize (0.384)} \end{tabular} & \begin{tabular}{@{}c@{}} 0.016 \\ {\footnotesize (0.001)} \end{tabular} & \begin{tabular}{@{}c@{}} 0.78 \end{tabular} \\
\midrule
SNPE & \begin{tabular}{@{}c@{}} 0.050 \\ {\footnotesize (0.032)} \end{tabular} & \begin{tabular}{@{}c@{}} 0.249 \\ {\footnotesize (0.016)} \end{tabular} & \begin{tabular}{@{}c@{}} 0.96 \end{tabular} & \begin{tabular}{@{}c@{}} 0.077 \\ {\footnotesize (0.062)} \end{tabular} & \begin{tabular}{@{}c@{}} 0.394 \\ {\footnotesize (0.034)} \end{tabular} & \begin{tabular}{@{}c@{}} 0.97 \end{tabular} & \begin{tabular}{@{}c@{}} \textbf{0.362} \\ {\footnotesize (0.311)} \end{tabular} & \begin{tabular}{@{}c@{}} \textbf{0.017} \\ {\footnotesize (0.001)} \end{tabular} & \begin{tabular}{@{}c@{}} 0.93 \end{tabular} \\
\bottomrule
\end{tabular}
}
\label{queuing_10exp}
\end{table}

\renewcommand{\arraystretch}{1.0}

\begin{figure}[!ht]
    \centering
    \includegraphics[width=0.99\textwidth]{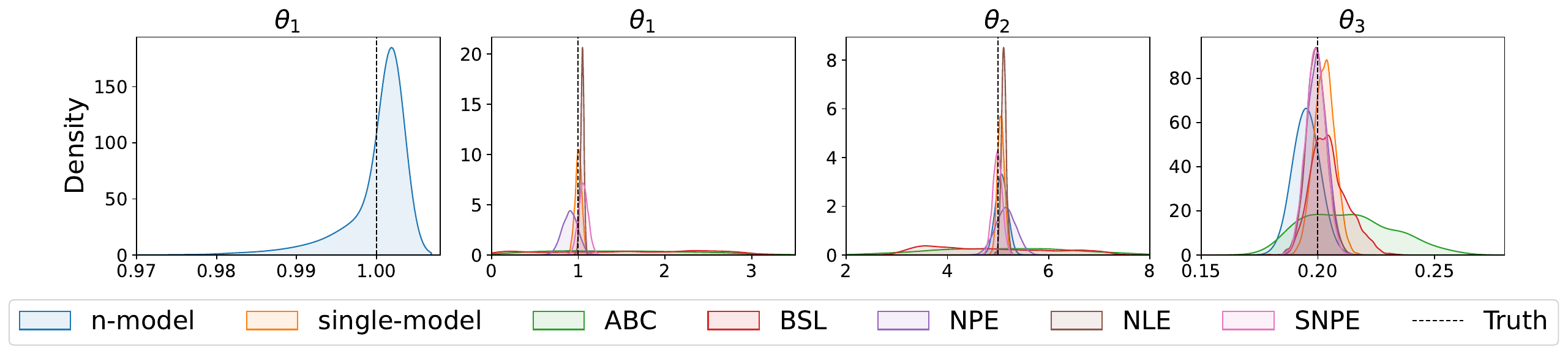}
    \caption{Posterior density plot of one experiment under the M/G/1-queuing model.}
    \label{fig:queuing}
\end{figure}

\vspace{-0.3in}
\subsection{Bayesian Monotonic Regression}\label{sec:mono_reg}
\vspace{-0.1in}
We consider the Bayesian monotonic regression with Bernstein polynomials proposed by \citet{mckay2011variable}. Since this model has a tractable likelihood, we compare all approximated posteriors against the {Gibbs sampling posteriors}. Additionally, as the true score is available, we evaluate the accuracy of estimated score under different implementation and a comprehensive comparison is provided in \Cref{sec:mono_reg_comp}. In \Cref{fig:score_direction} we have shown that the localization step is critical for learning the right Langevin direction in this high-dimensional example.

Following \citet{mckay2011variable}, we consider  i.i.d. observations $\{(x_i, y_i): i = 1,...,n\}$ generated by the following process $y_i = \text{tanh}(4x_i-2) + \varepsilon_i, \text{ with }  x_i \iid \text{U}(0, 1),\,  \varepsilon_i \iid \mN(0,0.1^2)$ for every $i=1, \ldots, n$. We set $n=1\,000$ and approximate the true function by Bernstein polynomials of order $M=10$, which leads to 11 parameters $\beta=(\beta_0, \ldots, \beta_M)^T$. The prior is set to be uniform on $[-5,5]\times[0,1]^M$ and the resulting posterior is truncated normal. Details on the polynomials and the true posterior are provided in \Cref{sec:mono_reg_details}.

 Since we are approximating a function in the example, for each experiment, we evaluate $y$ at $x \in \{0.00, 0.01, \ldots, 1.00\}$ ($101$ points), and obtain posterior predictive distribution of {$p(y\mid x)$} using the approximated posterior draws of $\theta$. We exclude the estimation bias and instead compute the Kolmogorov–Smirnov (KS) distance \citep{massey1951kolmogorov} and the 1-Wasserstein (W1) distance between the conditional distribution from the approximated posteriors and {the posterior from  Gibbs sampling}.  For each test, the final statistics is averaged over all $101$ $x$ values.

\begin{table}[!ht]
    \centering
\caption{Averaged results over 10 experiments in the monotonic regression example.}
\begin{tabular}{l|cccc}
\toprule
 & KS & W1 ($\times 10^{-2}$) & Cover95 & CI95 Width \\
\midrule
single-model & \textbf{0.095} & \textbf{0.210} & 0.976 & \textbf{0.034} \\
n-model & 0.159 & 0.395 & 0.985 & 0.042 \\
n-model-5x & 0.118 & 0.269 & 0.981 & 0.038 \\
ABC-W1 & 0.442 & 1.884 & 0.999 & 0.097 \\
BSL & 0.516 & 2.956 & 0.944 & 0.148 \\
NPE & 0.156 & 0.347 & 0.971 & 0.037 \\
NLE & 0.329 & 0.702 & 0.827 & 0.032 \\
\midrule
True posterior & - & - & 0.965 & 0.035 \\
\bottomrule
\end{tabular}
    \label{monoBP_res_10exp}
\end{table}

The averaged results in $10$ experiments are shown in Table \ref{monoBP_res_10exp}.
{We also present the posterior predictive $95\%$ credible band of $f(x; \theta) - \E[y \mid x]$ for one experiment in \Cref{sampling_res_error}.}
It can be seen that our methods significantly outperform the other methods in terms of closeness to the true posterior, and also achieve desirable coverage rates and tighter credible interval. Note that we have another version n-model-5x in \Cref{sampling_res_error}, which is the same algorithm with n-model but 5 times bigger the reference table size. We observe that increasing the diversity of $\theta^{(i)}$ helps improving the performance of n-model. However, the single-model still outperforms thanks to the debiasing strategy and the rich collection of $\theta^{(i)}$ in its training.

\begin{figure}[!ht]
    \centering
    \includegraphics[width=0.9\textwidth]{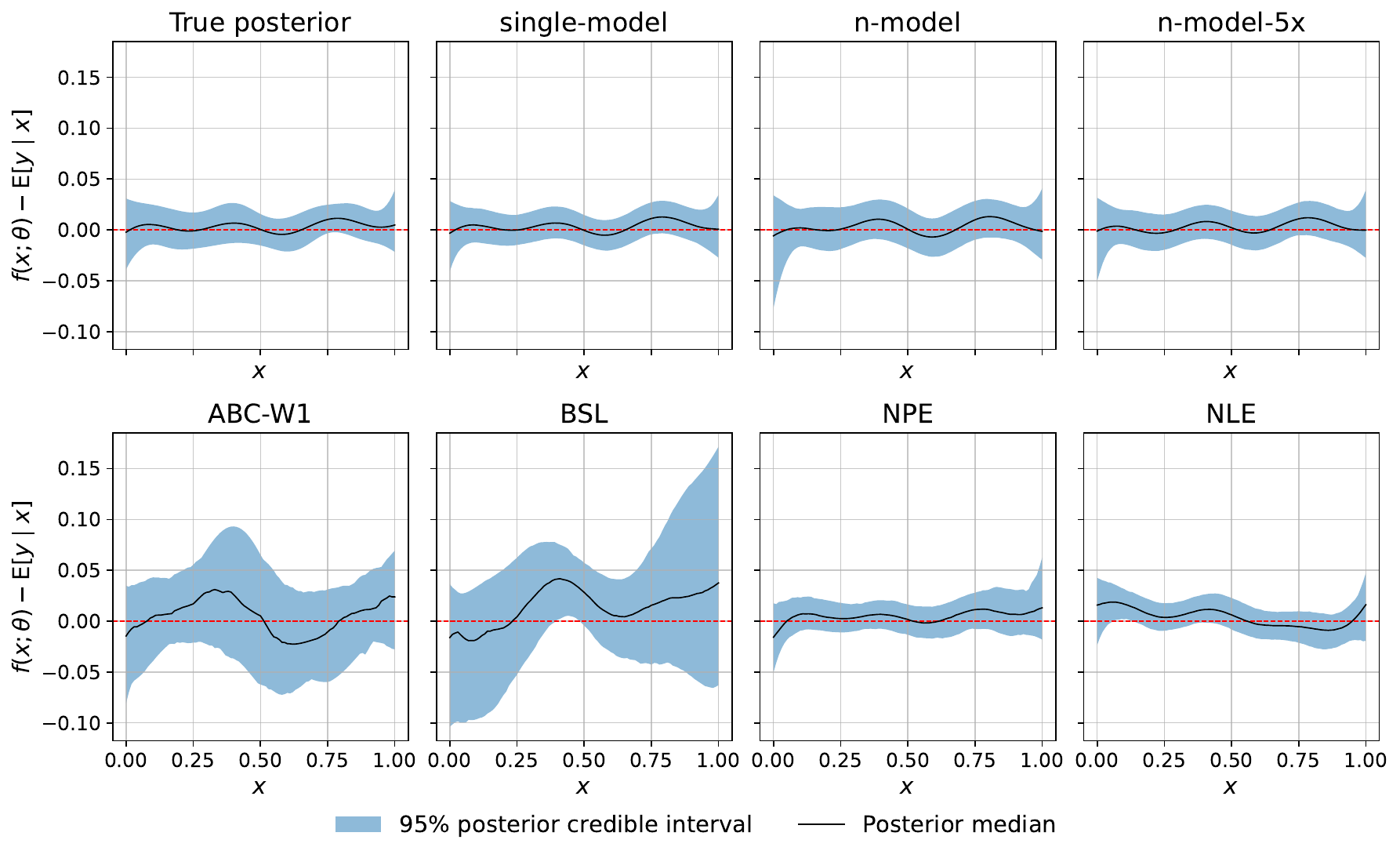}
    \caption{{95\% predictive bands of $f(x; \theta) - \E[y \mid x]$ by different methods from one monotonic regression experiment.}}
    \label{sampling_res_error}
\end{figure}

\subsection{mRNA transfection model}\label{sec:mRNA}
{

We consider the mRNA transfection model of \citet{haggstrom2026simulation}, a stochastic hierarchical model for translation kinetics after mRNA transfection. For each cell, latent mRNA and green fluorescent protein (GFP) trajectories evolve according to a two-dimensional SDE with cell-specific degradation, translation, and transfection-time parameters. The observed data are noisy log-transformations of the GFP trajectory, recorded at $180$ time points for each of $n$ i.i.d. cells. The full data-generating process, observation model, discretization scheme, and parameterization are given in \Cref{sec:mrna_details}. The model has 12 parameters, denoted by
$\theta = (\log m_0, \log \mathrm{scale}, \log \mathrm{offset}, \log \sigma, \mu_\delta, \mu_\gamma, \mu_k, \mu_{t_0}, \log \tau_\delta, \log \tau_\gamma, \log \tau_k, \log \tau_{t_0})$.

\textbf{Simulation study.}
We consider the same model as in \cite{haggstrom2026simulation}, with the true parameter
$\theta^\ast = (5.7, 0.7, 2.08, -1.6, -0.694, -3, 0.027, 0, -1.15, -1.15, -1.15, -1.15)$ and sample size $n = 200$. As discussed in \cite{pieschner2022identifiability}, the three  parameters $(\log \mathrm{scale}, \log m_0, \mu_k)$ are not individually identifiable and only their sum is identifiable. Therefore, we  report the posterior distribution only for their sum. Empirically, we also find that the marginal Fisher information matrix associated with these three parameters, evaluated from our estimated score function, is nearly rank one, with eigenvalues $(0.01, 0.09, 25)$. We  exclude BSL  because the Metropolis-Hastings sampler mixes very slowly in this example; further details are provided in \Cref{sec:SDE_BSL}.

We compare all methods across $100$ independently generated datasets and summarize the results in \Cref{tab:SDEMEM_simu_table}.  The posterior density plot from one representative dataset are shown in \Cref{fig:kde_SDEMEM_simu}. Overall, our method, NPE, and NLE outperform ABC, likely thanks to the flexibility of neural approximation, although NLE exhibits some undercoverage. Among all methods, our method achieves the lowest estimation errors and the narrowest credible intervals for most parameters.

\renewcommand{\arraystretch}{0.9}

\begin{table}[!ht]
    \centering
\caption{Averaged results in $100$ experiments, with standard deviations in the parentheses}
\resizebox{\textwidth}{!}{
\begin{tabular}{c|cccc|cccc|cccc}
\toprule
 & \multicolumn{4}{c|}{$\lvert \wh\theta - \theta^\ast \rvert$} & \multicolumn{4}{c|}{CI95 Width} & \multicolumn{4}{c}{Cover95} \\
 & Ours & ABC & NLE & NPE & Ours & ABC & NLE & NPE & Ours & ABC & NLE & NPE \\
\midrule
\begin{tabular}{@{}c@{}} $\log \mathrm{offset}$ \end{tabular} & \begin{tabular}{@{}c@{}} \textbf{0.004} \\ {\footnotesize (0.003)} \end{tabular} & \begin{tabular}{@{}c@{}} 0.008 \\ {\footnotesize (0.005)} \end{tabular} & \begin{tabular}{@{}c@{}} 0.006 \\ {\footnotesize (0.004)} \end{tabular} & \begin{tabular}{@{}c@{}} 0.005 \\ {\footnotesize (0.003)} \end{tabular} & \begin{tabular}{@{}c@{}} \textbf{0.024} \\ {\footnotesize (0.001)} \end{tabular} & \begin{tabular}{@{}c@{}} 0.291 \\ {\footnotesize (0.037)} \end{tabular} & \begin{tabular}{@{}c@{}} 0.023 \\ {\footnotesize (0.001)} \end{tabular} & \begin{tabular}{@{}c@{}} 0.025 \\ {\footnotesize (0.001)} \end{tabular} & \begin{tabular}{@{}c@{}} 0.98 \end{tabular} & \begin{tabular}{@{}c@{}} 1.00 \end{tabular} & \begin{tabular}{@{}c@{}} 0.86 \end{tabular} & \begin{tabular}{@{}c@{}} 0.98 \end{tabular} \\
\midrule
\begin{tabular}{@{}c@{}} $\log \sigma$ \end{tabular} & \begin{tabular}{@{}c@{}} \textbf{0.004} \\ {\footnotesize (0.003)} \end{tabular} & \begin{tabular}{@{}c@{}} 0.289 \\ {\footnotesize (0.034)} \end{tabular} & \begin{tabular}{@{}c@{}} 0.005 \\ {\footnotesize (0.004)} \end{tabular} & \begin{tabular}{@{}c@{}} 0.011 \\ {\footnotesize (0.009)} \end{tabular} & \begin{tabular}{@{}c@{}} \textbf{0.020} \\ {\footnotesize (0.001)} \end{tabular} & \begin{tabular}{@{}c@{}} 0.583 \\ {\footnotesize (0.051)} \end{tabular} & \begin{tabular}{@{}c@{}} 0.016 \\ {\footnotesize (0.001)} \end{tabular} & \begin{tabular}{@{}c@{}} 0.058 \\ {\footnotesize (0.019)} \end{tabular} & \begin{tabular}{@{}c@{}} 0.97 \end{tabular} & \begin{tabular}{@{}c@{}} 0.49 \end{tabular} & \begin{tabular}{@{}c@{}} 0.80 \end{tabular} & \begin{tabular}{@{}c@{}} 0.95 \end{tabular} \\
\midrule
\begin{tabular}{@{}c@{}} $\mu_{\delta}$ \end{tabular} & \begin{tabular}{@{}c@{}} 0.017 \\ {\footnotesize (0.016)} \end{tabular} & \begin{tabular}{@{}c@{}} \textbf{0.013} \\ {\footnotesize (0.011)} \end{tabular} & \begin{tabular}{@{}c@{}} 0.026 \\ {\footnotesize (0.019)} \end{tabular} & \begin{tabular}{@{}c@{}} 0.018 \\ {\footnotesize (0.016)} \end{tabular} & \begin{tabular}{@{}c@{}} \textbf{0.101} \\ {\footnotesize (0.006)} \end{tabular} & \begin{tabular}{@{}c@{}} 0.482 \\ {\footnotesize (0.040)} \end{tabular} & \begin{tabular}{@{}c@{}} 0.098 \\ {\footnotesize (0.009)} \end{tabular} & \begin{tabular}{@{}c@{}} \textbf{0.101} \\ {\footnotesize (0.007)} \end{tabular} & \begin{tabular}{@{}c@{}} 0.95 \end{tabular} & \begin{tabular}{@{}c@{}} 1.00 \end{tabular} & \begin{tabular}{@{}c@{}} 0.86 \end{tabular} & \begin{tabular}{@{}c@{}} 0.96 \end{tabular} \\
\midrule
\begin{tabular}{@{}c@{}} $\mu_{\gamma}$ \end{tabular} & \begin{tabular}{@{}c@{}} 0.019 \\ {\footnotesize (0.015)} \end{tabular} & \begin{tabular}{@{}c@{}} \textbf{0.018} \\ {\footnotesize (0.013)} \end{tabular} & \begin{tabular}{@{}c@{}} 0.022 \\ {\footnotesize (0.018)} \end{tabular} & \begin{tabular}{@{}c@{}} 0.019 \\ {\footnotesize (0.015)} \end{tabular} & \begin{tabular}{@{}c@{}} \textbf{0.092} \\ {\footnotesize (0.006)} \end{tabular} & \begin{tabular}{@{}c@{}} 0.360 \\ {\footnotesize (0.039)} \end{tabular} & \begin{tabular}{@{}c@{}} 0.088 \\ {\footnotesize (0.006)} \end{tabular} & \begin{tabular}{@{}c@{}} \textbf{0.092} \\ {\footnotesize (0.006)} \end{tabular} & \begin{tabular}{@{}c@{}} 0.92 \end{tabular} & \begin{tabular}{@{}c@{}} 1.00 \end{tabular} & \begin{tabular}{@{}c@{}} 0.88 \end{tabular} & \begin{tabular}{@{}c@{}} 0.93 \end{tabular} \\
\midrule
\begin{tabular}{@{}c@{}} $\mu_{t_0}$ \end{tabular} & \begin{tabular}{@{}c@{}} 0.019 \\ {\footnotesize (0.015)} \end{tabular} & \begin{tabular}{@{}c@{}} \textbf{0.013} \\ {\footnotesize (0.008)} \end{tabular} & \begin{tabular}{@{}c@{}} 0.026 \\ {\footnotesize (0.021)} \end{tabular} & \begin{tabular}{@{}c@{}} 0.019 \\ {\footnotesize (0.015)} \end{tabular} & \begin{tabular}{@{}c@{}} \textbf{0.089} \\ {\footnotesize (0.005)} \end{tabular} & \begin{tabular}{@{}c@{}} 0.220 \\ {\footnotesize (0.019)} \end{tabular} & \begin{tabular}{@{}c@{}} 0.110 \\ {\footnotesize (0.028)} \end{tabular} & \begin{tabular}{@{}c@{}} \textbf{0.089} \\ {\footnotesize (0.005)} \end{tabular} & \begin{tabular}{@{}c@{}} 0.94 \end{tabular} & \begin{tabular}{@{}c@{}} 1.00 \end{tabular} & \begin{tabular}{@{}c@{}} 0.89 \end{tabular} & \begin{tabular}{@{}c@{}} 0.93 \end{tabular} \\
\midrule
\begin{tabular}{@{}c@{}} $\log \tau_{\delta}$ \end{tabular} & \begin{tabular}{@{}c@{}} 0.057 \\ {\footnotesize (0.039)} \end{tabular} & \begin{tabular}{@{}c@{}} 0.073 \\ {\footnotesize (0.043)} \end{tabular} & \begin{tabular}{@{}c@{}} 0.066 \\ {\footnotesize (0.054)} \end{tabular} & \begin{tabular}{@{}c@{}} \textbf{0.050} \\ {\footnotesize (0.036)} \end{tabular} & \begin{tabular}{@{}c@{}} \textbf{0.265} \\ {\footnotesize (0.037)} \end{tabular} & \begin{tabular}{@{}c@{}} 1.404 \\ {\footnotesize (0.082)} \end{tabular} & \begin{tabular}{@{}c@{}} 0.314 \\ {\footnotesize (0.035)} \end{tabular} & \begin{tabular}{@{}c@{}} 0.270 \\ {\footnotesize (0.013)} \end{tabular} & \begin{tabular}{@{}c@{}} 0.96 \end{tabular} & \begin{tabular}{@{}c@{}} 1.00 \end{tabular} & \begin{tabular}{@{}c@{}} 0.93 \end{tabular} & \begin{tabular}{@{}c@{}} 0.98 \end{tabular} \\
\midrule
\begin{tabular}{@{}c@{}} $\log \tau_{\gamma}$ \end{tabular} & \begin{tabular}{@{}c@{}} \textbf{0.039} \\ {\footnotesize (0.031)} \end{tabular} & \begin{tabular}{@{}c@{}} 0.071 \\ {\footnotesize (0.039)} \end{tabular} & \begin{tabular}{@{}c@{}} 0.058 \\ {\footnotesize (0.039)} \end{tabular} & \begin{tabular}{@{}c@{}} 0.042 \\ {\footnotesize (0.032)} \end{tabular} & \begin{tabular}{@{}c@{}} \textbf{0.220} \\ {\footnotesize (0.005)} \end{tabular} & \begin{tabular}{@{}c@{}} 1.350 \\ {\footnotesize (0.096)} \end{tabular} & \begin{tabular}{@{}c@{}} 0.241 \\ {\footnotesize (0.016)} \end{tabular} & \begin{tabular}{@{}c@{}} 0.224 \\ {\footnotesize (0.007)} \end{tabular} & \begin{tabular}{@{}c@{}} 0.97 \end{tabular} & \begin{tabular}{@{}c@{}} 1.00 \end{tabular} & \begin{tabular}{@{}c@{}} 0.91 \end{tabular} & \begin{tabular}{@{}c@{}} 0.97 \end{tabular} \\
\midrule
\begin{tabular}{@{}c@{}} $\log \tau_{k}$ \end{tabular} & \begin{tabular}{@{}c@{}} \textbf{0.039} \\ {\footnotesize (0.032)} \end{tabular} & \begin{tabular}{@{}c@{}} 0.069 \\ {\footnotesize (0.041)} \end{tabular} & \begin{tabular}{@{}c@{}} 0.059 \\ {\footnotesize (0.045)} \end{tabular} & \begin{tabular}{@{}c@{}} 0.046 \\ {\footnotesize (0.036)} \end{tabular} & \begin{tabular}{@{}c@{}} \textbf{0.223} \\ {\footnotesize (0.005)} \end{tabular} & \begin{tabular}{@{}c@{}} 1.306 \\ {\footnotesize (0.088)} \end{tabular} & \begin{tabular}{@{}c@{}} 0.263 \\ {\footnotesize (0.026)} \end{tabular} & \begin{tabular}{@{}c@{}} 0.256 \\ {\footnotesize (0.012)} \end{tabular} & \begin{tabular}{@{}c@{}} 0.96 \end{tabular} & \begin{tabular}{@{}c@{}} 1.00 \end{tabular} & \begin{tabular}{@{}c@{}} 0.92 \end{tabular} & \begin{tabular}{@{}c@{}} 0.96 \end{tabular} \\
\midrule
\begin{tabular}{@{}c@{}} $\log \tau_{t_0}$ \end{tabular} & \begin{tabular}{@{}c@{}} 0.041 \\ {\footnotesize (0.028)} \end{tabular} & \begin{tabular}{@{}c@{}} \textbf{0.040} \\ {\footnotesize (0.028)} \end{tabular} & \begin{tabular}{@{}c@{}} 0.062 \\ {\footnotesize (0.043)} \end{tabular} & \begin{tabular}{@{}c@{}} 0.041 \\ {\footnotesize (0.027)} \end{tabular} & \begin{tabular}{@{}c@{}} \textbf{0.200} \\ {\footnotesize (0.003)} \end{tabular} & \begin{tabular}{@{}c@{}} 0.692 \\ {\footnotesize (0.061)} \end{tabular} & \begin{tabular}{@{}c@{}} 0.251 \\ {\footnotesize (0.074)} \end{tabular} & \begin{tabular}{@{}c@{}} 0.201 \\ {\footnotesize (0.006)} \end{tabular} & \begin{tabular}{@{}c@{}} 0.96 \end{tabular} & \begin{tabular}{@{}c@{}} 1.00 \end{tabular} & \begin{tabular}{@{}c@{}} 0.91 \end{tabular} & \begin{tabular}{@{}c@{}} 0.97 \end{tabular} \\
\midrule
\begin{tabular}{@{}c@{}} $\log m_0 + \mu_k$ \\ $+\,\log \mathrm{scale}$ \end{tabular} & \begin{tabular}{@{}c@{}} \textbf{0.019} \\ {\footnotesize (0.016)} \end{tabular} & \begin{tabular}{@{}c@{}} 0.031 \\ {\footnotesize (0.024)} \end{tabular} & \begin{tabular}{@{}c@{}} 0.026 \\ {\footnotesize (0.018)} \end{tabular} & \begin{tabular}{@{}c@{}} 0.021 \\ {\footnotesize (0.016)} \end{tabular} & \begin{tabular}{@{}c@{}} 0.116 \\ {\footnotesize (0.003)} \end{tabular} & \begin{tabular}{@{}c@{}} 0.593 \\ {\footnotesize (0.042)} \end{tabular} & \begin{tabular}{@{}c@{}} 0.094 \\ {\footnotesize (0.006)} \end{tabular} & \begin{tabular}{@{}c@{}} \textbf{0.104} \\ {\footnotesize (0.006)} \end{tabular} & \begin{tabular}{@{}c@{}} 0.98 \end{tabular} & \begin{tabular}{@{}c@{}} 1.00 \end{tabular} & \begin{tabular}{@{}c@{}} 0.85 \end{tabular} & \begin{tabular}{@{}c@{}} 0.93 \end{tabular} \\
\bottomrule
\end{tabular}
}
\label{tab:SDEMEM_simu_table}
\end{table}

\renewcommand{\arraystretch}{1.0}

\begin{figure}[!ht]
    \centering
    \includegraphics[width=0.99\textwidth]{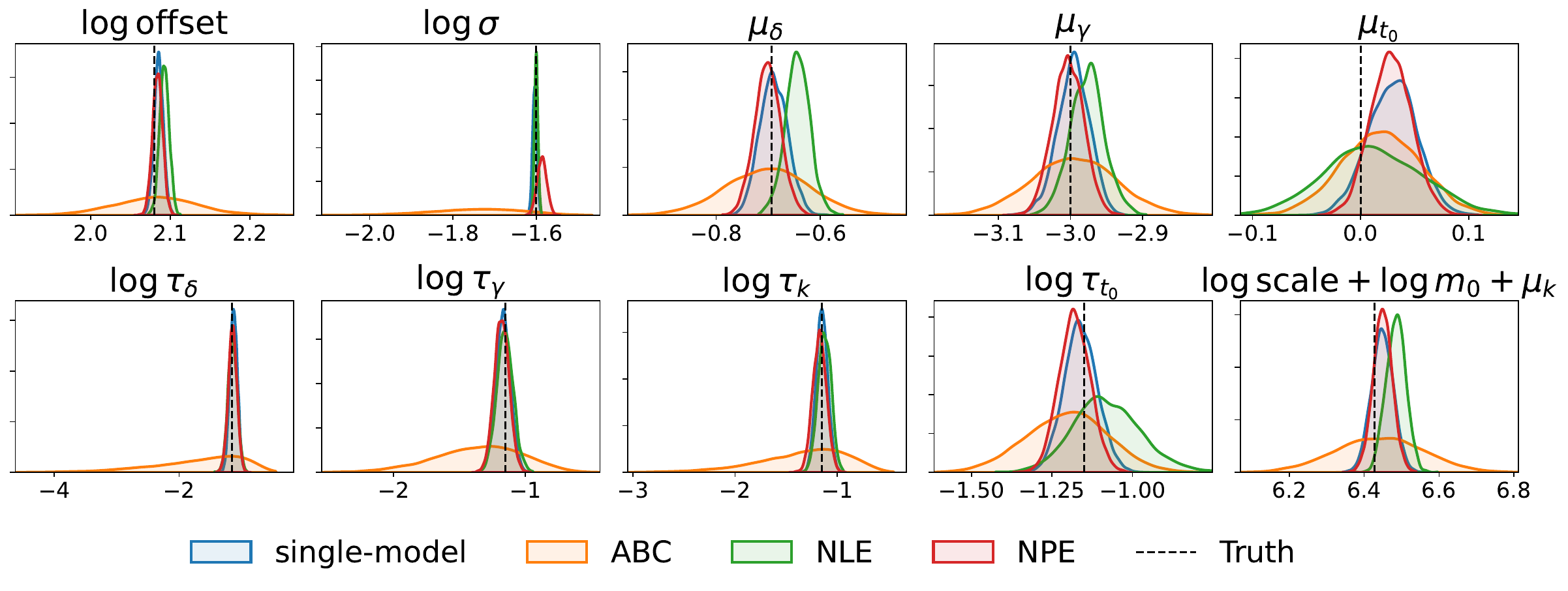}
    \caption{Posterior densities from one simulation of the mRNA transfection model.}
    \label{fig:kde_SDEMEM_simu}
\end{figure}

\textbf{Real data study.}
We now compare different methods using a real-world dataset from \citet{frohlich2018multi}, and follow exactly the same setting as \cite{haggstrom2026simulation}.  We exclude NLE from the comparison because it performs poorly, likely due to model misspecification and the accumulation of approximation errors when combining the estimated likelihood contributions across all observations; additional results are provided in \Cref{sec:more_res_SDErealdata}.

\Cref{fig:kde_SDEMEM_realdata} presents the posterior densities obtained by our method, ABC, and NPE. As expected, our method and NPE yield more concentrated posteriors than ABC, with similar posterior centers for most parameters. However, there are several noticeable differences between our method and NPE. First, our method produces larger posterior estimates of $\log \sigma$, which leads to improved posterior predictive distributions of the GFP trajectories during the initial time period ($t \in [0,2]$), without substantially affecting the fit over later time periods. Second, the posterior distributions of $\log \tau_\delta$ and $\log \tau_\gamma$ obtained by our method and NPE have different centers, with ours closer to those from ABC. Third, the joint posterior distribution of $(\mu_\delta,\mu_\gamma)$ produced by our method is bimodal, whereas NPE identifies only one mode. In fact, \citet{pieschner2022identifiability} also show that, in certain regimes, $(\mu_\delta,\mu_\gamma)$ are identifiable only up to exchangeability, which is consistent with the bimodal posterior behavior observed here. However, our posterior is not exactly symmetric between the two modes, possibly due to model misspecification. We further verify in \Cref{sec:more_res_SDErealdata} that both modes can generate data that fit the observed trajectory equally well.
Overall, these results suggest that our method provides a more faithful characterization of the posterior uncertainty than NPE while retaining the sharper posterior concentration relative to ABC.

\begin{figure}[!ht]
    \centering
    \includegraphics[width=0.99\textwidth]{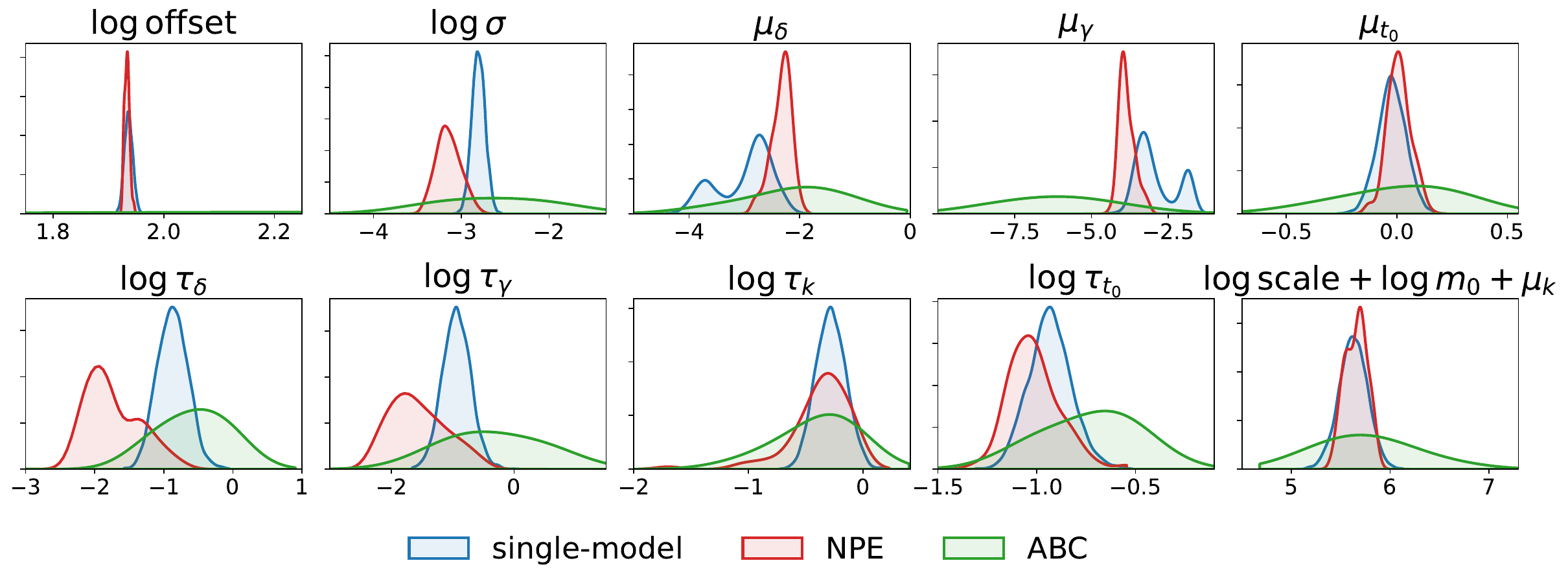}
    \caption{Posterior densities in the real data application.}
    \label{fig:kde_SDEMEM_realdata}
\end{figure}

\begin{figure}[!ht]
    \centering
    \includegraphics[width=0.99\textwidth]{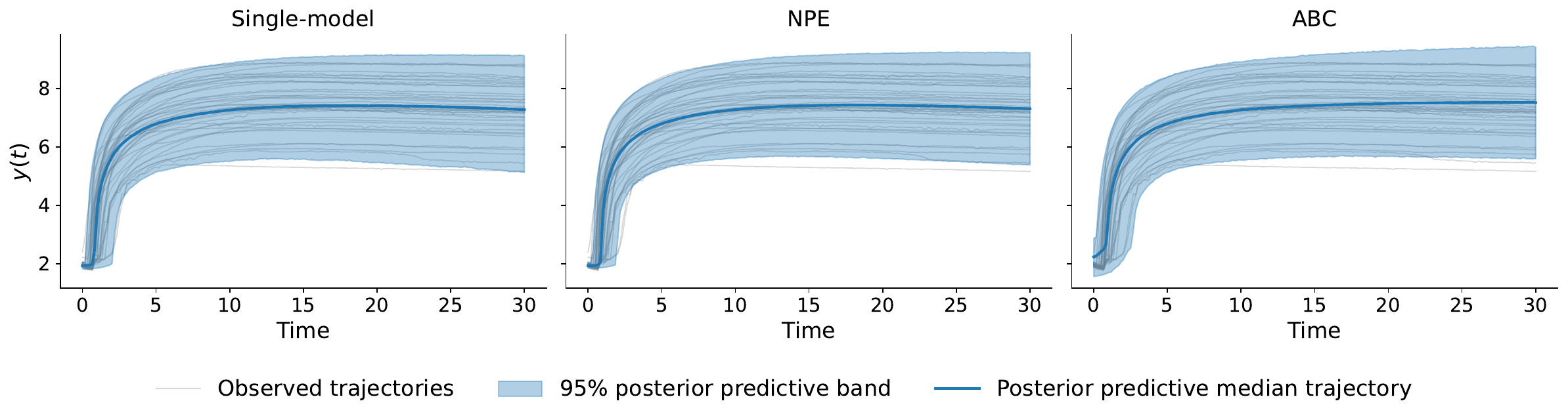}
    \caption{Posterior predictive distributions from different methods in the real-data application.
}
    \label{fig:SDE_real_postpred}
\end{figure}

}

\vspace{-0.3in}
\section{Discusssion}\label{sec:discussion}
\vspace{-0.1in}
Our idea of enforcing statistical structures on score-matching networks
opens several promising avenues for future work. First, although our primary focus has been on Langevin dynamics
and unimodal posteriors, the approach naturally extends to other
gradient-based samplers that leverage score and Hessian information.
Examples include Hamiltonian Monte Carlo \citep{neal2011mcmc},
preconditioned Langevin dynamics \citep{titsias2023optimal},
and Riemann manifold Langevin dynamics \citep{girolami2011riemann}.
Incorporating our regularized score estimators into these samplers
has the potential to further accelerate exploration of complex parameter spaces
and to better exploit inherent low-dimensional structures.

Second, the framework can be extended beyond Langevin-type methods
to other generative models, such as diffusion models.
Diffusion models \citep{songs2021score,song2021maximum}
are fundamentally tied to score matching
and have recently demonstrated superior approximation performance.
Enforcing statistical structures within the diffusion process
may improve both theoretical efficiency and empirical performance,
and we view this as an important future direction.

Lastly, accurate estimation of scores and Hessians provides
a foundation for posterior calibration under model misspecification,
which is nearly unavoidable in real-world applications.
For example, \citet{frazier2025synthetic} proposed calibrating
BSL posteriors using approximate score and Hessian information
derived under Gaussian assumptions on summary statistics.
Our method is expected to improve upon this approach,
since it learns gradient information directly from simulated datasets
rather than relying on an ad hoc Gaussian approximation.

\bibliographystyle{apalike}

{\footnotesize
\setstretch{0.8}
\bibliography{langevin}
}

\clearpage
\appendix
\setstretch{1.25}

\addcontentsline{toc}{section}{Appendix}\part{Appendix}\parttoc
{The appendix is organized as follows. Appendix~\ref{sec:related_work} provides an expanded review of related work on simulation-based inference and discusses approaches for handling large parameter dimensions and sample sizes. Appendix~\ref{sec:implementation} presents further methodological details, including the localization step, boundary conditions, full-data score matching, curvature matching, and alternative implementations of debiased score matching. Appendix~\ref{sec:additional_theory} contains proofs of the main results and additional theoretical analysis of the localization step. Finally, \Cref{sec:additional_implementation} provides implementation details for all examples in \Cref{sec:simulation} and additional numerical results for the stochastic epidemic examples.
}

{
\section{Review of Related Work}\label{sec:related_work}

This section provides a broader discussion of related work, especially the ones most relevant to our method.

\paragraph*{ABC and BSL.} In ABC methods, one simulates $N$ pairs of parameters and datasets
$\{ (\theta^{(k)}, \Xn^{(k)}) \}_{k=1}^N$ from the joint distribution
$p(\theta, \Xn) = \pi(\theta)\,\pn_\theta(\Xn)$, which we refer to as the
ABC \emph{reference table}. The parameter draws $\{\theta^{(k)}\}$ are then
weighted according to the similarity between the simulated dataset
$\Xn^{(k)}$ and the observed dataset $\Xn^\ast$. Much of the ABC literature
focuses on defining effective similarity measures, as these directly
determine the quality of the approximate posterior. Common strategies
include: (1) computing distances between summary statistics, chosen
either through expert knowledge or automated procedures
\citep{fearnhead2011constructing}; and (2) using discrepancy metrics
on empirical distributions, such as the
Kullback-Leibler (KL) divergence
\citep{jiang2018approximate,wang2022approximate}, the Wasserstein
distance \citep{bernton2019approximate}, or the
Maximum Mean Discrepancy (MMD) \citep{park2016k2,niu2023discrepancy,bharti2023optimally}.

BSL \citep{price2018bayesian} takes a parametric approach by assuming that the summary statistics follow a Gaussian distribution and approximating the likelihood accordingly, with the Gaussian mean and covariance estimated from simulated datasets.  Robust and generalized variants have also been developed, for example by \citep{frazier2021robust,frazier2025synthetic}.
This avoids the need to specify a kernel function or impose ad hoc thresholding, but the accuracy of the method depends critically on the validity of the Gaussian assumption for the summary statistics.

\paragraph*{Neural SBI methods.} Thanks to the advances in deep neural networks and generative models, a new class of SBI methods has emerged that replace discrepancy-based approaches with flexible neural approximations to posterior-related quantities learn from simulations. One prominent direction is neural posterior estimation (NPE), which directly learns a conditional density estimator for the posterior distribution. Early work by \citep{papamakarios2016fast} uses mixture density networks, followed by more normalizing flow-based approaches. The second direction is neural likelihood-free inference (NLE) \citep{papamakarios2019sequential}, which learns a conditional density estimator for the likelihood function, and then combines it with the prior for posterior inference. A third direction is neural ratio estimation (NRE), which learns the likelihood-to-evidence ration by contrastive learning \citep{hermans2020likelihood, durkan2020contrastive,miller2022contrastive}. Other approaches including adversarial generative networks \citep{wang2022adversarial}. These neural SBI methods have shown capable of producing good results with much fewer simulations than ABC methods, see for example \citet{lueckmann2021benchmarking,frazier2024statistical}.

\paragraph*{Score-based and diffusion-based SBI.} We would first review the methods utilize score estimation without a diffusion model. \citet{pacchiardi2022score} use score matching to fit neural exponential families for likelihood-free inference. \citet{zeghal2022neural} study neural posterior estimation with differentiable simulators and score-based constraints. \citet{glaser2022maximum} propose energy-based likelihood learning for simulation-based inference. More recent direct score-learning approaches, such as \citet{khoo2025direct}, further illustrate the potential to get a maximum likelihood estimator in likelihood-free settings.

A closely related and rapidly growing literature combines SBI with diffusion-based or conditional score-based generative models. Examples include \citet{simons2023neural} which study neural score estimation with conditional diffusion models, \citet{sharrock2024sequential} which develop sequential diffusion-based SBI, \citet{gloeckler2024all} which propose an all-in-one diffusion framework for SBI, and Nautiyal, Hellander and Singh (2026) which develop conditional diffusion models for simulation-based inference.

\paragraph*{Solutions to large $d_\theta$.}
A recurring challenge in simulation-based inference is scalability with respect to the parameter dimension $d_\theta$. A large literature addresses this problem by adapting the proposal distribution over the course of the algorithm.

Within ABC, sequential and adaptive solutions include SMC-ABC \citep{sisson2007sequential,beaumont2009adaptive,del2012adaptive}, with more recent developments such as guided sequential ABC in \citep{picchini2025guided}, and Thompson sampling \citep{ohagan2024tree}. There is also a literature on high-dimensional ABC refinements, including the Bayes-linear approach \citep{nott2014approximate} and the Gaussian-copula construction \citep{li2017extending}. These methods aim to improve efficiency by concentrating computation in regions of the parameter space that are more likely to be posterior-relevant.

Adaptive ideas also appear prominently in neural SBI. While neural SBI methods have been shown to require much fewer simulations than traditional ABC methods \citep{frazier2024statistical} they still face similar challenges when $d_\theta$ is large. Sequential neural likelihood estimation \citep{papamakarios2019sequential,wang2022adversarial} updates the proposal distribution across rounds to focus training on regions supported by the observation. \citep{wang2024preconditioned} propose a precoditioning step using cheap ABC approximations to improve the efficiency of neural posterior estimation. More broadly, many practical SBI workflows use some form of proposal adaptation, pilot estimation, or sequential refinement in order to improve performance when $d_\theta$ is moderate or large.

\paragraph*{Solutions to large $n$.} In this setting, the cost of simulating full datasets may be substantial, and the complexity of a neural approximation can grow unfavorably with the sample size n if the entire dataset is treated as a single high-dimensional input. A broad class of solutions addresses this problem by compressing or factorizing the information in the sample.

One common approach is to use summaries or learned representations of the sample. In neural SBI, permutation-invariant encoders such as Deep Sets, introduced by \citet{zaheer2017deep}, and related embedding-network constructions discussed by \citet{deistler2025simulation}, are widely used to process i.i.d. data. These architectures can greatly improve computational efficiency and have become standard practical tools. More recent work, such as \citet{luciano2025permutations}, also studies how exploiting permutation structure can improve simulation-based procedures.

A different strategy is to exploit factorization of the likelihood or posterior-related quantity across observations. In neural likelihood or ratio estimation, one may learn the contribution of a single observation and then aggregate across the sample, thereby avoiding the need to process the entire dataset jointly \citep{papamakarios2019sequential,durkan2020contrastive}. Closely related ideas appear in recent score-based SBI methods for tall-data settings, most notably \citet{geffner2023compositional} and \citet{linhart2024diffusion}, where smaller-scale score contributions are composed to perform inference on large datasets.

}

\section{Method Details}\label{sec:implementation}

\subsection{Localization step} \label{sec:preconditioning_details}

We first want to demonstrate poor performance of score-matching networks in low-density region on a simple example.  In  \Cref{fig:low_density}, we consider a simple Binomial
example where $X \mid \theta \sim \text{Bin}(100, \theta)$ with a Beta prior
$\mathcal{B}(5,5)$ on $\theta$. We examine how the score-matching error at a
fixed $\theta^\ast$, defined as
\[
\text{Error}(\theta^\ast) \;=\;
\E_{(\theta, X) \sim \pi(\theta \mid X)\,p(X \mid \theta^\ast)}
\big[\big\| s_\phi(\theta, X) - s^\ast(\theta, X) \big\|^2\big],
\]
varies as a function of the prior density at $\theta^\ast$.

\begin{figure}[!ht]
    \centering
    \includegraphics[width=0.4\textwidth]{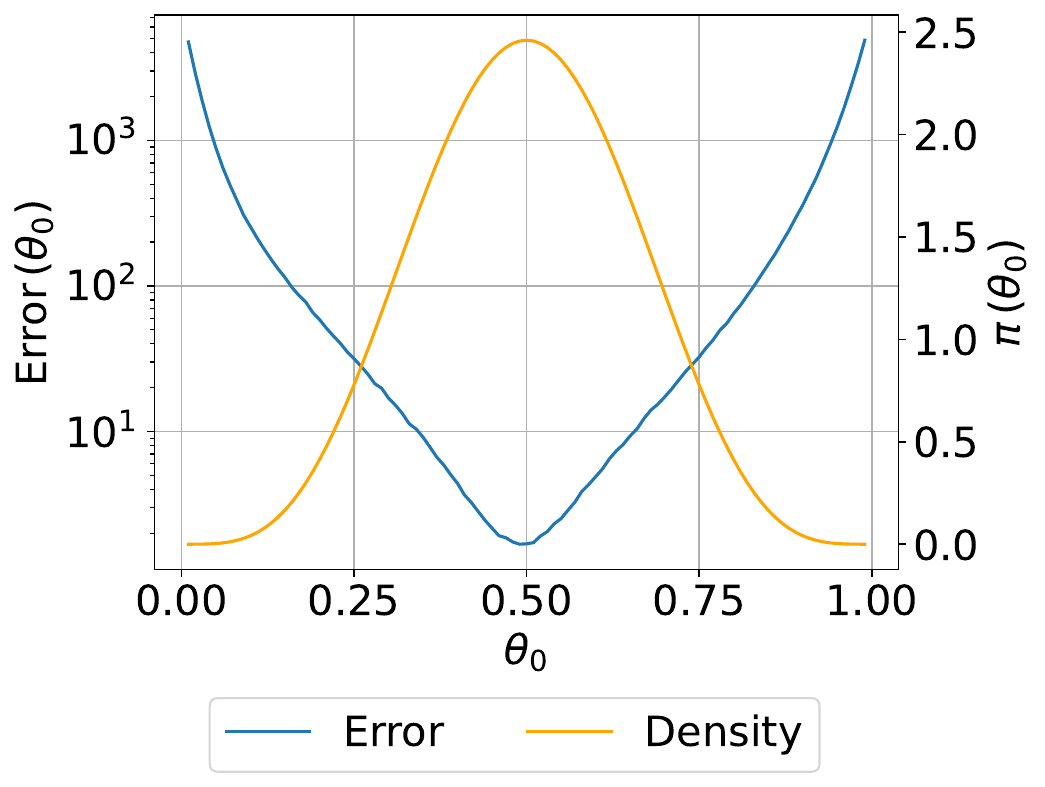}
    \caption{Estimation errors against prior density in the Beta-Binomial example.
    The estimation error increases significantly when the prior density is low.}\label{fig:low_density}
\end{figure}

As shown in \Cref{fig:low_density}, the estimation error of the score-matching
network increases substantially when the prior density at $\theta^\ast$ is low.
This occurs because the network is trained on simulated datasets, and when the
prior density near $\theta^\ast$ is small, few simulations fall in the vicinity
of the observed data $\Xn^\ast$. The resulting scarcity of informative training
examples leads to poor score estimation precisely where accuracy is most
critical.

Now we want to continue on implementation details of the localization step. Recall from \Cref{sec:preconditioning} that our localization method is to solve
\begin{equation}\label{eq:obj_local}
    \widehat \theta^{(b)} =\arg\min_\theta d_\text{SW}(\tau(\theta,\Zm^{(b)}), \Xn^{0}) \text{ for } \, b=1,\ldots, B,
\end{equation}
where
\begin{equation*}
d_{\text{SW}_p}(\mu, \nu) := \int_{S^{p-1}} d_{\text{W}_1}(\mu_\omega, \nu_\omega)\diff \sigma(\omega).
\end{equation*}
where $\omega \in S^{p-1} := \{\omega' \in \R^p : \|\omega'\| \leq 1\}$ is a  projection direction, $\sigma(\cdot)$ is the uniform measure on the unit sphere,
 and $\mu_{\omega}$ and $\nu_{\omega}$ denote the pushforward distributions of  $\mu$ and $\nu$ under the projection $x \mapsto \omega^T x$.

This localization method is applicable to all examples considered in this paper. We do not use it in the M/G/1-queuing model since the parameter dimension is low, and we do not use it in the Stochastic Epidemic Model (5-floor case) because the prior is already informative.

We apply the localization method in the monotonic regression model and the  Stochastic Epidemic Model (10-floor case). In both examples, we use $100$ random directions $\omega_k\in \mathcal{S}^{p-1}, k=1, \ldots, 100$ to approximate the $\text{SW}_1$ distance and obtain $B = 100$ samples. Besides, we set $m=n$ in each example so that the calculation of the $\text{W}_1$ distance between the two projected $1$-dimensional datasets reduces to a sorting problem and further decreases the computational cost. Finally, we solve the optimization problem \eqref{eq:obj_local} using Adam, with gradient calculated by PyTorch's Autograd module.

{
\subsubsection{Connection to adaptive proposal construction}\label{sec:localization_discussion}

Our localization step can be viewed as a form of adaptive proposal construction. Our method relies on the assumption that the data generating process admits a reparameterization that is differentiable in $\theta$, which is common in simulation-based methods such as pseudo-marginal and partical methods \citep{andrieu2009pseudo,nemeth2016particle},  indirect inference \citep{gourieroux1993indirect,mcfadden1989method,pakes1989simulation}, and much of the SBI literature \citep{meeds2015optimization,brehmer2020mining,cranmer2020frontier}. By utilizing the gradient information from the simulator, our localization method can move efficiently towards the relevant region.

This assumption in general is not restrictive. Potential violations include models with discontinuous operations, such as accept-reject steps or discrete branching logic, which can make $\tau$ non-differentiable. In such cases, several remedies are available. One can smooth or relax the
simulator (e.g., via continuous approximations or reparameterizations), or adopt techniques
developed in the particle MCMC literature to handle non-differentiability \citep{nemeth2016particle}.

Additionally, our framework does not critically rely on this assumption in the localization
stage. When differentiability is not available, one can instead use alternative, derivative-
free localization strategies—such as sequential neural methods \citep{papamakarios2019sequential,sharrock2024sequential}, summary-based ABC preconditioning \citep{wang2024preconditioned}, or surrogate-
based optimization—to construct a proposal distribution concentrated in high posterior density
regions. Since the goal of localization is only to provide a coarse approximation (rather than
an exact posterior), the subsequent score matching and Langevin refinement remain effective
as long as this proposal places sufficient mass near the true parameter.}

{
\subsubsection{Comparison with sequential methods}\label{sec:comp_seq_methods}

As discussed above, our localization scheme serves a similar role to sequential NPE/NLE by constructing a data-adaptive proposal distribution that concentrates around the high posterior density region near $\theta^\ast$. Here, we compare two adaptive approaches, sequential NPE (SNPE) and non-amortized NPE with our localized proposal,in terms of both computational efficiency and inferential performance.

We find that SNPE works well in the queuing model (see \Cref{queuing_10exp}), achieving results comparable to non-amortized NPE with our localized proposal while requiring a smaller simulation budget. However, in the monotonic regression example, we find that the non-amortized NPE is more efficient than SNPE. In particular, the first round of SNPE training can be inaccurate in higher-dimensional parameter spaces, producing poor posterior samples that are then used to construct the proposal distribution for subsequent rounds. As a result, the errors from the first round can propagate and even amplify in later rounds, leading to progressively worse training behavior. In contrast, our localization method remains effective because it relies on an optimization-based procedure rather than iterative posterior refinement. In \Cref{fig:sw1_vs_seq}, we compare the localization results with the first round of SNPE under different simulation budgets. The results show that our localization method concentrates more effectively around the true parameter while using fewer simulations.

\begin{figure}[htbp]
    \centering
    \includegraphics[width=0.99\textwidth]{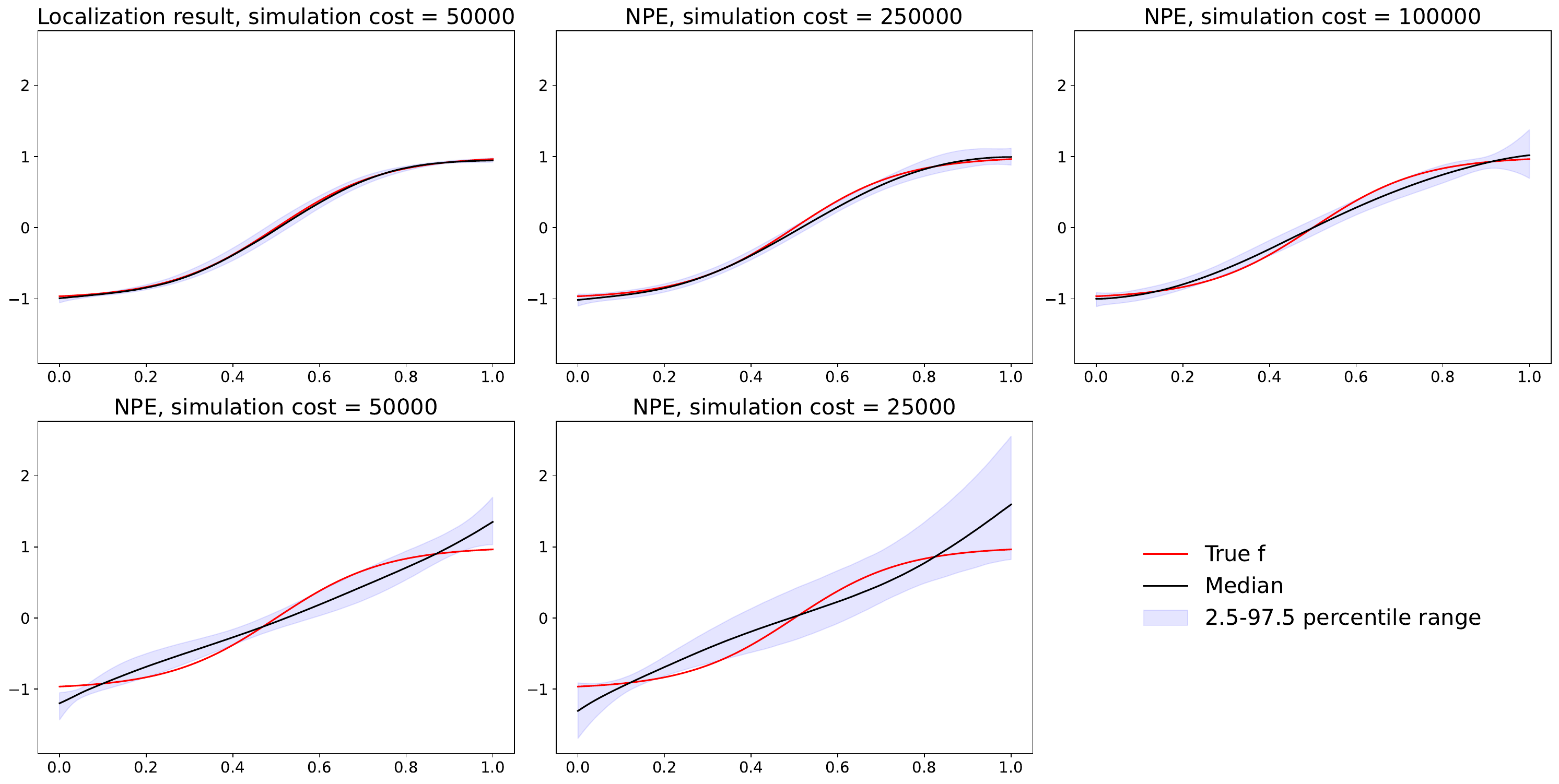}
    \caption{Comparison of the localization method and NPE under varying simulation cost}
    \label{fig:sw1_vs_seq}
\end{figure}

\subsubsection{Other choices of proposal distributions}\label{sec:other_proposal}

Once $\{\widehat\theta^{(b)}\}_{b=1}^B$ are obtained from \Cref{eq:smm_sol}, the proposal distribution can be constructed in several ways. For simplicity, we use a multivariate Gaussian with a diagonal covariance matrix. More flexible choices are possible, such as a multivariate Gaussian with a non-diagonal covariance matrix, or even more complex distributions. However, such flexibility introduces additional parameters and therefore often requires a larger $B$  for reliable estimation, increasing the simulation cost. To investigate this trade-off, we revisit the monotonic regression example, where the posterior exhibits strong correlation, and compare diagonal and non-diagonal Gaussian proposal distributions.

We present the results for both n-model and single-model in \Cref{tab:diag_vs_nondiag,tab:diag_vs_nondiag_single}.  When the covariance matrix can be accurately estimated ($B=100$), the $n$-model benefits considerably from the non-diagonal proposal, achieving substantially improved posterior sampling. This improvement is expected, as the proposal better matches the geometry of the posterior and consequently provides more accurate score estimation in high-density regions. In contrast, the single-model performs noticeably worse under the non-diagonal proposal.

\begin{table}[!ht]
    \centering
\caption{Averaged results of n-model under different proposal distribution across 10 experiments in the monotonic regression example.}
\begin{tabular}{l|cccc}
\toprule
 & KS & W1 ($\times 10^{-2}$) & Cover95 & CI95 Width \\
\midrule
Diagonal, $B=100$ & 0.159 & 0.395 & 0.985 & 0.042 \\
\midrule
Diagonal, $B=30$ & 0.162 & 0.414 & 0.989 & 0.044 \\
\midrule
Non-diagonal, $B=100$ & 0.116 & 0.262 & 0.983 & 0.038 \\
\midrule
Non-diagonal, $B=30$ & 0.165 & 0.392 & 0.987 & 0.041 \\
\bottomrule
\end{tabular}
    \label{tab:diag_vs_nondiag}
\end{table}

\begin{table}[!ht]
    \centering
\caption{Averaged results of single-model under different proposal distribution across 10 experiments in the monotonic regression example.}
\begin{tabular}{l|cccc}
\toprule
 & KS & W1 ($\times 10^{-2}$) & Cover95 & CI95 Width \\
\midrule
Diagonal, $B=100$ & 0.095 & 0.210 & 0.976 & 0.034 \\
\midrule
Diagonal, $B=30$ & 0.099 & 0.218 & 0.976 & 0.034 \\
\midrule
Non-diagonal, $B=100$ & 0.184 & 0.565 & 0.979 & 0.079 \\
\midrule
Non-diagonal, $B=30$ & 0.412 & 3.274 & 0.903 & 0.129 \\
\bottomrule
\end{tabular}
    \label{tab:diag_vs_nondiag_single}
\end{table}

We attribute this behavior to two factors. First, a highly correlated covariance matrix emphasizes a few dominant directions, corresponding to the leading eigenvectors of the covariance matrix $\Sigma$. Consequently, the simulated parameter values concentrate primarily along these directions, leaving relatively few training samples in other directions. Since the score network learns from the geometry of the training data, it estimates the score accurately only when the true score aligns with the dominant directions of $\Sigma$. This issue is much less pronounced for the $n$-model, because the score aggregated over $n$ observations is substantially less variable and is more likely to align with the dominant posterior directions. In contrast, the score for a single observation exhibits much greater directional variability, making learning in the underrepresented directions considerably more difficult. We verify this phenomenon in \Cref{fig:scoredir_covdir} by comparing the score estimation error against the alignment between the true score and the proposal covariance, measured by the Rayleigh quotient $\frac{s^\ast(\theta, X)^T \Sigma s^\ast(\theta, X)}{\lVert s^\ast(\theta, X) \rVert^2}$. The score error increases substantially when the true score is poorly aligned with the dominant directions of $\Sigma$.

\begin{figure}[!ht]
    \centering
    \includegraphics[width=0.7\textwidth]{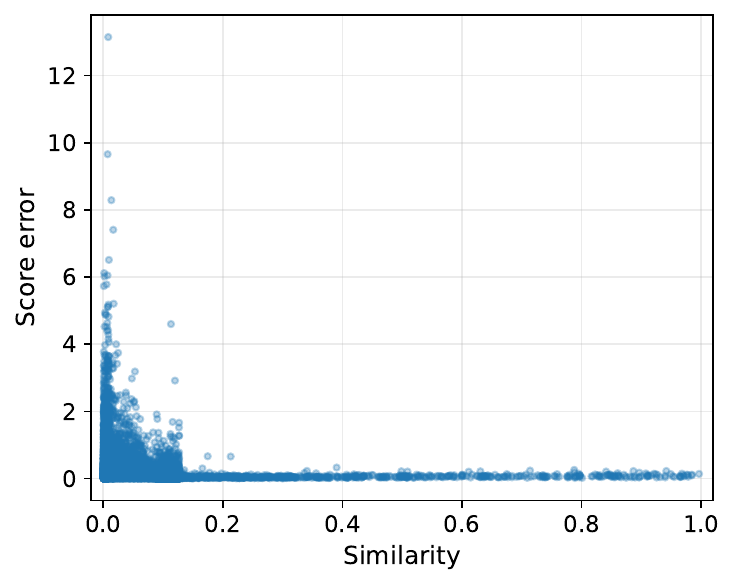}
    \caption{Score error versus alignment between true score and proposal covariance under the non-diagonal proposal.}
    \label{fig:scoredir_covdir}
\end{figure}

Second, the score matching objective in \Cref{thm:score_matching} requires evaluating the proposal score $\nabla_\theta\log p(\theta)$. When the proposal covariance contains strong correlations, this score can have a very large magnitude, making the optimization problem significantly more difficult.

Additionally, even when we train the score network by directly regressing on the true score, which avoids the previous two issues, the sampling result of using a non-diagonal covariance is just slightly better than the digonal version without true score, with the averaged KS and W1 errors equal to $0.089$ and $0.198 \times 10^{-2}$, respectively. This suggests that, for the single-model, a diagonal proposal already captures most of the benefits of localization.

Finally, the performance under the diagonal proposal is more robust to the choice of $B$ than the non-diagonal proposal. Reducing $B$ from $100$ to $30$ has little effect on either the $n$-model or the single-model under the diagonal proposal. In contrast, the performance of both models deteriorates noticeably under the non-diagonal proposal. This degradation suggests that the estimated covariance matrix becomes inaccurate when only a limited number of localization solutions are available, causing the proposal to miss important regions of the posterior. The effect is particularly severe for the single-model, where the debiasing correction becomes ineffective in these missed regions, leading to significant error accumulation, as illustrated in \Cref{fig:benefit_debias}.

Overall, we advocate using a multivariate gaussian with diagonal covariance matrix as a baseline for the proposal distribution due to (1) its simplicity; (2) it requires smaller $B$ and thus lower simulation cost in the localization step; (3) it treats all directions equally, which is important for single data score matching.

}

\subsection{Boundary condition}\label{sec:boundary_sol}

Another challenge of applying the naive implementation in \eqref{eq:score_loss} origins from the boundary condition in \Cref{ass:boundary} required by \Cref{thm:score_matching}. While this condition is essential for the validity of the optimization objective in \eqref{eq:score_loss}, it is often violated in simulation-based models. The support of $\theta$ can be constrained by its prior distribution, such as uniform distribution or non-negative distribution, which are quite common in SBI.  For these cases, the constrained support can be resolved by using the change-of-variable trick.

The more challenging case is where the support of the parameters is constrained by the data, and this cannot be addressed by the change-of-variable trick.  For example, in our queuing model example in \Cref{sec:queuing_simu}, the boundary issues arise because (1) the joint density $p(\theta, \Xn)$ does not approach $0$ as $\theta$ approaches the boundary of the prior, and (2) the support of $\theta_1$ depends on $\Xn$ as $\theta_1\leq \min\{x_{i,j}\}$.

 For the second scenario, we consider two solutions in our project. The first one requires the constrained support to be fully known.  The second one does not require such knowledge but instead introduces a small amount of noise into the simulation process to address the problem.

{\bf Solution 1: Introducing a weight function} Our first solution involves incorporating a non-negative weight function $g(\theta, \Xn): \R^d \times \R^{np} \to [0, +\infty)^d$  into the score function, which was proposed in \citet{yu2019generalized,yu2022generalized}. This weight function ensures that the product of the score function and the weight function satisfied the boundary condition.  Basically we replace the objective in \eqref{eq:obj_cond_score} with the following
\begin{equation}\label{eq:obj_cond_score_wt}
\min_{\phi}\E_{ (\theta, \Xn)\sim  p(\theta)\pn_\theta(\Xn)} \norm{s_\phi(\theta,\Xn)\odot g^\frac{1}{2} (\theta, \Xn)-\nabla_\theta \log \pn_\theta(\Xn) \odot g^\frac{1}{2} (\theta, \Xn)}^2,
\end{equation}
where $\odot$ denotes element-wise multiplication. When the constrained support is fully known, one can tailor such weight function $g(\cdot)$ such that the product  $p(\theta)\pn_\theta(\Xn )s_\phi(\theta, \Xn)\odot g^{1/2}(\theta, \Xn)$ satisfies \Cref{ass:boundary}. By incorporating the weight function $g$, we can replace all conditions on $s_\phi(\theta, \Xn)$ with conditions on $s_\phi(\theta, \Xn)\odot g^{1/2}(\theta, \Xn)$ in \Cref{ass:boundary} and the proof for \Cref{thm:score_matching} can also be amended accordingly. This ensures that the score function can still be accurately estimated even when the original boundary condition in \Cref{ass:boundary} is violated. We provide the details of how such weight function is constructed below.

First we replace the assumptions introduce for deriving the score-matching objective in \eqref{eq:score_loss} with the weighted version. We use $p(\theta)$ to denote the distribution where $\theta$ is drawn, which can be either the prior distribution $\pi(\theta)$ or the proposal distribution $q(\theta)$.

\begin{assumption}\label{g_finite_E}
$\mathbb{E}_{(\theta, \Xn) \sim p(\theta)\pn_\theta(\Xn)}\; \left[\norm{ s_\phi (\theta, \Xn) \odot g^{\frac{1}{2}}(\theta, \Xn) }^2\right]$  is finite and  $\mathbb{E}_{(\theta, \Xn) \sim p(\theta)\pn_\theta(\Xn)}$ $\left[ \norm{ \nabla_\theta \log \pn_\theta(\Xn) \odot g^{\frac{1}{2}}(\theta, \Xn)}^2 \right]$ is also finite.
\end{assumption}

\begin{assumption}\label{g_boundary}
For any $\Xn\in \mX$, we have $p(\theta)\pn_\theta(\Xn) s_\phi(\theta, \Xn) g(\theta, x)_j \rightarrow 0$ for any $\theta$ approaching $\partial \Omega(\Xn)$.
\end{assumption}

Denote $\theta_{-j}=(\theta_1, \ldots, \theta_{j-1},\theta_{j+1}, \theta_{d_\theta})^T$ and  the joint support of $(\theta,\Xn)$ as $\Omega:=\{(\theta, \Xn): p(\theta)\pn_\theta(\Xn)>0\}$. Then we define the marginal support of $(\theta_{-j}, \Xn)$ as
$\Omega_{\theta_{-j},\Xn}:=\{(\theta_j, \Xn): \exists \theta_j \text{ such that } (\theta,\Xn)\in \Omega\}$ and the section of $\theta_j$ at $(\theta_{-j},\Xn)$ as $\Omega_{\theta_j\mid \theta_{-j},\Xn}:=\{\theta_j \in \R: (\theta, \Xn)\in \Omega\}$.

\begin{assumption}\label{secD}
$\forall j \in \{1, ..., d_\theta\}$, fix any $(\theta_{-j}, \Xn) \in \Omega_{\theta_{-j}, \Xn}$, $\partial \Omega_{\theta_j\mid \theta_{-j},\Xn}$ is a countable union of intervals.
\end{assumption}

We also implicitly assume that $p(\theta)$, $s_\phi(\theta, x)$ and $g(\theta, x)$ are continuous and differentiable.

According to \citet{yu2022generalized}, we can set the $j$-the coordinate of the weight function $g_j(\theta, x) = \text{dist}(\theta_j,  \partial \Omega_{\theta_j\mid \theta_{-j},\Xn})$ to satisfy Assumption \ref{g_boundary}, where $\partial\,\Omega$  is the set of all the boundary points of $\Omega$. Furthermore, \cite{yu2019generalized,yu2022generalized} suggest that using a composite function $h \circ \text{dist}(\theta_j, \partial\, \text{Sec}(\mathcal{D};\, x,\, \theta_{-j}))$ would improve the performance, where $h(\cdot)$ is a slowly-increasing function and $h(0) = 0$, e.g. $h(t) = \log (1+t)$. In our implementation, we use simply the scaled $L_2$ distance to weight the all coordinates fairly.

\begin{theorem}[Adopted from Lemma 3.2 of \citet{yu2022generalized}]\label{g_like_score_matching}
Under \Cref{g_finite_E,g_boundary,secD}, we have
\begin{equation}\label{sm_thm_eq}
\begin{split}
&\mathbb{E}_{(\theta, \Xn) \sim p(\theta)\pn_\theta(\Xn)}\; \frac{1}{2} \norm{ s_\phi (\theta, \Xn) \odot g^{\frac{1}{2}}(\theta, \Xn) - \nabla_\theta \log \pn_\theta(\Xn) \odot g^{\frac{1}{2}}(\theta, \Xn) }^2 \\
= &\mathbb{E}_{(\theta, \Xn) \sim p(\theta)\pn_\theta(\Xn)}\; \Bigg[ \frac{1}{2} \norm{s_\phi (\theta, \Xn) \odot g^{\frac{1}{2}}(\theta, \Xn)}^2 + (s_\phi(\theta, \Xn) \odot g(\theta, \Xn))^T \nabla_\theta \log p(\theta) + \\
&\sum_{j=1}^{d_\theta} \left(\frac{\partial s_{\phi,j}(\theta, \Xn)}{\partial \theta_j} g_j(\theta, \Xn) + s_{\phi,j}(\theta, \Xn) \frac{\partial g_j(\theta, \Xn)}{\partial \theta_j} \right) \Bigg] + const,
\end{split}
\end{equation}
\end{theorem}

\noindent The proof here is very similar to \citet{yu2022generalized}. We extend the results from unconditional scores under the general domain to conditional scores under the general domain.
\begin{proof}
\begin{align*}
&\mathbb{E}_{(\theta, x) \sim p(\theta)p(x\mid \theta)}\; \frac{1}{2} \lVert s_\phi (\theta, x) \odot g^{\frac{1}{2}}(\theta, x) - \nabla_\theta \log p(x \mid \theta) \odot g^{\frac{1}{2}}(\theta, x) \rVert^2 \\
= & \frac{1}{2} \mathbb{E} \left[ \lVert s_\phi(\theta, x) \odot g^{\frac{1}{2}}(\theta, x) \rVert^2 \right] - \mathbb{E} \left[ (s_\phi (\theta, x) \odot g^{\frac{1}{2}}(\theta, x))^T (\nabla_\theta \log p(x \mid \theta) \odot g^{\frac{1}{2}}(\theta, x)) \right] \\
&\quad + \frac{1}{2}\mathbb{E} \left[ \lVert \nabla_\theta \log p(x \mid \theta) \odot g^{\frac{1}{2}}(\theta, x) \rVert^2 \right],
\end{align*}
where the first and third terms are finite under Assumption \ref{g_finite_E}, and the second term is finite due to Cauchy-Schwartz inequality. The third term is a constant in $\phi$, and the second term does not involve the unknown true score, so we only need to address the second term.

\begin{align}
&-\mathbb{E}_{(\theta,\Xn) \sim p(\theta)\pn_\theta(\Xn)}\;[(s_\phi (\theta, \Xn) \odot g^{\frac{1}{2}}(\theta, \Xn))^T (\nabla_\theta \log \pn_\theta(\Xn) \odot g^{\frac{1}{2}}(\theta, \Xn))] \notag\\
=& -\iint p(\theta) \pn_\theta(\Xn)\sum_{j=1}^{d_\theta} s_{\phi,j}(\theta, \Xn) g_j(\theta, \Xn) \frac{\partial \log \pn_\theta(\Xn)}{\partial \theta_j} \;\diff\theta \diff\Xn \notag\\
=& -\int_{\Omega_{\theta_{-j},\Xn}} \diff(\theta_{-j}, \Xn) \sum_{j=1}^{d_\theta} \int_{\partial\Omega_{\theta_j\mid \theta_{-j}, \Xn}} p(\theta) s_{\phi,j} (\theta, \Xn) g_j(\theta, \Xn) \frac{\partial\pn_\theta(\Xn)}{\partial \theta_j} \;\diff\theta_j \label{deri1}\\
\intertext{\hfill (by Fubini's Thm, and Assumption \ref{g_finite_E})} \notag
\end{align}
For simplicity, assume for now that $\partial\Omega_{\theta_j\mid \theta_{-j}, \Xn}$ is a single interval for each $j\in\{1, ..., d_\theta\}$, and denote it as $(a_j, b_j)$, then
\begin{align}
&\int_{\partial\Omega_{\theta_j\mid \theta_{-j}, \Xn}} p(\theta) s_{\phi,j} (\theta, \Xn) g_j(\theta, \Xn) \frac{\partial \pn_\theta(\Xn)}{\partial \theta_j} \;\diff\theta_j \notag\\
=& p(\theta) s_{\phi,j} (\theta, \Xn) g_j(\theta, \Xn) \pn_\theta(\Xn) \left|^{\theta_j\nearrow b_j}_{\theta_j\searrow a_j} \right. - \int_{\Omega_{\theta_j\mid \theta_{-j}, \Xn}} \frac{\partial p(\theta) s_{\phi,j} (\theta, \Xn) g_j(\theta, \Xn)}{\partial \theta_j} \pn_\theta(\Xn) \;\diff\theta_j \notag\\
\intertext{\hfill (by the fundamental law of calculus)}
=& - \int_{\Omega_{\theta_j\mid \theta_{-j}, \Xn}} \frac{\partial p(\theta) s_{\phi,j} (\theta, \Xn) g_j(\theta, \Xn)}{\partial \theta_j} \pn_\theta(\Xn) \;\diff\theta_j\text{\qquad\qquad\qquad\qquad\, (by Assumption \ref{g_boundary})} \notag\\
=& - \int_{\Omega_{\theta_j\mid \theta_{-j}, \Xn}} \{\frac{\partial p(\theta)}{\partial \theta_j}s_{\phi,j} (\theta, \Xn) g_j(\theta, \Xn) + p(\theta) \frac{\partial s_{\phi,j} (\theta, \Xn) g_j(\theta, \Xn)}{\partial \theta_j} \} \pn_\theta(\Xn) \;\diff\theta_j \notag\\
=& - \int_{\Omega_{\theta_j\mid \theta_{-j}, \Xn}} \{\frac{\partial \log p(\theta)}{\partial \theta_j}s_{\phi,j} (\theta, \Xn) g_j(\theta, \Xn) + \frac{\partial s_{\phi,j} (\theta, \Xn) g_j(\theta, \Xn)}{\partial \theta_j} \} p(\theta) \pn_\theta(\Xn) \;\diff\theta_j \label{int_trick}
\end{align}
It is worth mentioning that although $\frac{\partial g(\theta, \Xn)}{\partial \theta_j}$ is discontinuous at the middle of the interval under our distance-based definition of $g(\theta, \Xn)$, the second line is still valid, as $g(\theta, \Xn)$ is continuous. Besides, it is easy to see that \eqref{int_trick} still holds when $\Omega_{\theta_j\mid \theta_{-j}, \Xn}$ is not an interval but a countable union of intervals. Therefore, we can plug the result of \eqref{int_trick} into \eqref{deri1} and apply Fubini's Theorem again, then the proof is completed.
\end{proof}

 {\bf Solution 2: Smoothing the boundary by adding random noise} Our second solution draws inspiration from  denoising score-matching methods \citep{lu2022maximum}.
 This approach is helpful when designing a complex weighting function is impractical or when the dependency of support is unclear.

Essentially we would revise the data generating process from $\Pn_\theta$ by applying some Gaussian smoothing. The new process $\wt P^{(n)}_{\theta,\sigma_\varepsilon}$ is defined as
\[
\theta\sim \pi(\theta), \, \Xn^\theta \sim \Pn_\theta, \, \wt \Xn^{\theta} := \Xn^\theta + {\boldsymbol\varepsilon}_n \sim \wt P^{(n)}_{\theta,\sigma_\varepsilon}\]
where ${\boldsymbol\varepsilon}_n \iid N(0, \sigma_\varepsilon^2I_{np})$. By introducing this noise, the support of $\wt \Xn^{\theta}$ is $\R^{np}$ and unconstrained, resolving the boundary condition issue. Furthermore, since the noise is independent of $\theta$, the score function of $\wt p^{(n)}_{\theta,\sigma_\varepsilon}(\wt X^{(n)}_\theta)$ can be expressed as
\[
\nabla_\theta \log \wt p^{(n)}_{\theta,\sigma_\varepsilon}(\wt X^{(n)}) = \E_{\varepsilon^{(n)}} \Big[\nabla_\theta \log \pn_\theta(\Xn) \mid \Xn+\varepsilon^{(n)}=\wt X^{(n)} \Big]
\]
Naturally when  $\sigma_\varepsilon\to 0$, we have
\[
\nabla_\theta\log \wt p^{(n)}_\theta(\wt X^{(n)}) \to  \nabla_\theta\log  \pn_\theta(X^{(n)}).
\]
This is similar to the denoising score-matching method \citep{lu2022maximum}, where the noise level $\sigma_\varepsilon$ is gradually reduced to zero. While \citet{lu2022maximum} uses the noise as a simulated-annealing strategy to   avoid local optima, we use it to resolve the boundary condition issue.  In our implementation, we set the noise level according to the variation in the datasets.

We apply the two solutions to our simulations on the M/G/1-queuing model, and we provide more discussions how to implement the two solutions in \Cref{sec:queuing_details}.

\subsection{Full data score matching}\label{sec:version1}

In this version, we focus on matching the full-data score across all $n$
observations, which requires generating the reference table
$\mD = \{(\theta^{(k)}, \Xn^{(k)})\}_{k=1}^N \iid q(\theta)\,\pn_\theta(\Xn)$ using the
localized proposal distribution $q$ for training. We first still focus on the scenario of i.i.d. datasets and present the theoretical analysis similar to \Cref{thm:post_convergence_iid}. Lastly we conclude with a discussion on how this can be generalized to non i.i.d. data setting.

The additive structure allows
us to simplify the score network to $s_\phi(\theta, X)$, so that the estimated
full-data score becomes $\sum_{i=1}^n s_\phi(\theta, X_i)$. To enforce the
curvature structure, we add a curvature-matching penalty to regularize the
score network and replace the objective in \eqref{eq:score_loss} with the
following:
{\footnotesize\begin{equation}
    \min_\phi  \  \E_{q(\theta)}  \bigg[ \E_{\pn_\theta}  \Big[\ \underbrace{\frac{1}{2n} \Big \|\sum_{i=1}^n \big(s_\phi(\theta,X_i)- s^\ast(\theta,X_i)\big)\Big\|^2}_\text{score-matching loss on $\Xn$} \Big] + \lambda \,\underbrace{\Big\|\E_{ p_\theta}\big[s_\phi(\theta,X)s_\phi(\theta,X)^T + \nabla_\theta s_\phi(\theta,X)\big]\Big\|_F^2}_\text{curvature-matching loss}\bigg],\label{eq:score_loss_struct}
\end{equation}}
where $\lambda>0$ is a hyperparameter that controls the strength of the curvature regularization, and $\norm{\cdot}_F$ denotes the Frobenius norm. Note that we introduce the scaling $1/n$ in the score-matching loss to balance the contribution of the two terms in the final loss. In practice, we approximate the expectation in \eqref{eq:score_loss_struct} by
the empirical average over the reference table $\mD$ as
\begin{equation}\label{eq:score_loss_regularized_empirical}
\begin{split}
\min_\phi &\frac{1}{N}\sum_{k=1}^N \Bigg\{ \frac{1}{n}\bigg[\frac{1}{2} \norm{\sum_{i=1}^n s_\phi(\theta^{(k)}, X_i^{(k)})}^2 + \big(\sum_{i=1}^n s_\phi(\theta^{(k)}, X_i^{(k)})\big)^T \nabla_\theta \log \pi(\theta)\big|_{\theta=\theta^{(k)}} \\
&+\sum_{j=1}^{d_\theta}\sum_{i=1}^n \frac{\partial s_{\phi,j}(\theta, X_i^{(k)})}{\partial \theta_{j}}\big|_{\theta=\theta^{(k)}} \bigg]   + \lambda \,\underbrace{\Big\|\frac{1}{n}\sum_{i=1}^n \big[s_\phi(\theta^{(k)},X^{(k)}_i)s_\phi(\theta^{(k)},X^{(k)}_i)^T + \nabla_\theta s_\phi(\theta^{(k)},X^{(k)}_i)\big]\Big\|_F^2}_\text{curvature-matching loss}\Bigg\}
\end{split}
\end{equation}

An algorithmic overview of the method is provided in  \Cref{alg:regularized_langevin}.

\begin{algorithm}[ht]
    \caption{Langevin Monte Carlo with regularized score matching}\label{alg:regularized_langevin}
    \begin{algorithmic}
            \STATE {\bf Input:} Prior distribution $\pi(\theta)$, observed dataset $\Xn^\ast$, number of particles $N$, number of Langevin steps $K$, step size $\tau_n$, score network $s_\phi(\theta, X)$, initial value $\theta^{(0)}$.
        \end{algorithmic}
        \hrule
        \begin{algorithmic}
        \STATE {\bf 1. Localization:} Construct a proposal distribution $q(\theta)$ using \eqref{eq:smm_sol}.
            \STATE {\bf 2. Reference Table:} Generate $\mD = \{(\theta^{(k)}, \Xn^{(k)})\}_{k=1}^N\iid q(\theta)\, \pn_\theta(\Xn)$.
            \STATE {\bf 3. Network Training:} Train $s_\phi(\theta, X)$ on $\mD$ using loss in \eqref{eq:score_loss_regularized_empirical} and obtain $\widehat \phi$.
            \STATE {\bf 4. Langevin Sampling:} {For} {$k = 1$ to $K$}
                \STATE \hspace{2em} $\theta^{(k)} \gets \theta^{(k-1)} + \tau_n\Big(\sum_{i=1}^n s_{\widehat\phi}\big(\theta^{(k-1)}, X^\ast_{i}\big)+ \nabla_\theta \log \pi(\theta^{(k)}) \Big) + \sqrt{2\tau_n}\,U_k$, \quad $U_k \iid \mathcal{N}(0, I_{d_\theta})$.
            \end{algorithmic}
            \hrule
            \begin{algorithmic}
            \STATE {\bf Return} $\{\theta^{(k)}\}_{k=1}^K$ as approximated posterior samples
    \end{algorithmic}
    \end{algorithm}

In the actual implementation, we randomly partition the data into a training set (50\%) and a validation set (50\%). We first initialize the neural network with only the score-matching loss without penalty on curvature as in \eqref{eq:score_loss_regularized_empirical}. Next,
we continue training our neural networks with the penalized loss \eqref{eq:score_loss_regularized_empirical} and use the total score-matching loss on $n$ data as in \eqref{eq:score_loss_regularized_empirical} evaluated on the validation set to select the optimal $\lambda$. The score-matching loss and the curvature-matching loss are evaluated on the same reference table during the training process.

Now we continue on the theoretical analysis of \Cref{alg:regularized_langevin}. Similar to \Cref{ass:uniform_sm_err_single}, we introduce the assumption on uniform estimation errors.

\begin{assumption}[Uniform estimation error]\label{ass:uniform_sm_err} Under the same set $\mA_{n,1}$ defined in \Cref{ass:uniform_sm_err_single}, we assume the score-matching error in this neighborhood is uniformly bounded as
    \[
    \varepsilon_{N,n,1}^2:=\sup_{\theta\in \mA_{n,1}}\E_{\Xn\sim \pn_\theta} \norm{\frac{1}{\sqrt n} \widehat s(\theta, \Xn)- \frac{1}{\sqrt n}s^\ast(\theta, \Xn)}_2^2,
    \]
and the curvature-matching error is also uniformly bounded as
\[
\varepsilon_{N,n,2}^2:=\sup_{\theta\in \mA_{n,1}}  \norm{\E_{X\sim p_\theta}[\nabla_\theta \widehat s(\theta, X)+\widehat s(\theta, X)\widehat s(\theta, X)^T]}_F^2.
\]
\end{assumption}

Under the simialr argument to \Cref{ass:uniform_sm_err_single}, we localize the uniform estimation error to the set $\mA_{n,1}$. We consider the $1/\sqrt n$ scaling in the score matching error since we focus on the non-degenerate transformed variable $\sqrt n(\theta-\theta^*)$ in our analysis (similar to \Cref{ass:log_sobolev}). This also matches with our scaling in \eqref{eq:score_loss_regularized_empirical} to balance the contribution of the score loss and the curvature loss in training. The score-matching error $\epsilon_{N,n,1}$ here depends on the complexity of the true score $s^*(\theta, \Xn)$ and the size $(N,n)$ of the reference table $\mD$. The curvature error $\epsilon_{N,n,2}$ is mainly determined by the Monte Carlo approximation of expectations, which is decaying at rate $1/\sqrt n$ in this case.

\begin{theorem}[Posterior approximation error under full data score matching]\label{thm:post_convergence_n_obs} Under \Cref{ass:lipschitz_score,ass:log_sobolev,ass:true_converge,ass:bdd_derivatives,ass:uniform_sm_err}, and assume that $I(\theta^\ast)<\infty$. If the step size $\tau_n$ and initial distribution of the Langevin Monte Carlo satisfy
\[
\tau_n = O\Big( \frac{1}{d_\theta C_{LSI}\lambda_L^2n}\Big) \quad \mbox{and}\quad d_{\chi^2}\big(\widehat \pi^0_n(\cdot\mid \Xn^\ast)|| \pi_n(\cdot\mid \Xn^\ast)\big)\leq \eta_\chi^2,
\]
then we have
\begin{align*}
& \E_{\Xn^\ast\sim \Pn_{\theta^\ast}} \Big[d_{TV}^2\big(\widehat\pi_n(\cdot\mid \Xn^\ast), \pi(\cdot\mid \Xn^\ast)\big)\Big]\\
&\qquad\lesssim \ \underbrace{\exp\Big(-\frac{n K\tau_n}{5C_\text{LSI}}\Big)\, \eta_\chi^2}_{\text{burn-in error}} \ + \  \underbrace{d_\theta C_\text{LSI} \lambda_L^2 n\tau_n}_\text{discretization error} \ +\ \underbrace{ \varepsilon_n (Kn \tau_n +\eta_\chi C_\text{LSI})}_\text{score error}
\end{align*}
where $\varepsilon_n^2= (\log n)^2 \varepsilon_{N,n,1}^2 +  (\log n)^2 \varepsilon_{N,n,2}^2 + (\log n)^3/n$
\end{theorem}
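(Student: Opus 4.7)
The plan is to follow the same two-part decomposition used in the proof of Theorem~\ref{thm:post_convergence_iid}, with the only substantive difference appearing in the analysis of the cumulative score-matching error. First I would change variables to $\alpha=\sqrt{n}(\theta-\theta^\ast)$ (as in Appendix~\ref{sec:post_convergence_n_obs_pf}) so that the transformed posterior is non-degenerate, giving an effective step size $\tau_\alpha=n\tau_n$ and an LSI constant that is $O(1)$ under Assumption~\ref{ass:log_sobolev}. By the triangle inequality on total variation,
\begin{equation*}
d_{\text{TV}}(\widehat\pi_n,\pi_n)\ \le\ d_{\text{TV}}(\pi^{K\tau_n},\pi_n)\ +\ d_{\text{TV}}(\widehat\pi^{K\tau_n},\pi^{K\tau_n}),
\end{equation*}
and the first (discretization + burn-in) term is controlled verbatim as in the single-data case by invoking Lemmas D.1 and H.4 of \citet{ding2024nonlinear}, which produce the $\exp(-nK\tau_n/(5C_{\text{LSI}}))\eta_\chi^2 + d_\theta C_{\text{LSI}}\lambda_L^2 n\tau_n$ contributions and justify the stated choice of $\tau_n$.

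Next I would handle the score error by Girsanov's theorem to write
\begin{equation*}
d_{\text{TV}}^2(\widehat\pi^{K\tau_n},\pi^{K\tau_n}) \ \lesssim\ \tau_\alpha \sum_{k=0}^{K-1}\E_{\alpha\sim\pi^{k\tau_\alpha}}\Bigl[\bigl\|\widehat s_\alpha(\alpha,\Xn^\ast) - s^\ast_\alpha(\alpha,\Xn^\ast)\bigr\|^2\Bigr],
\end{equation*}
then apply Cauchy--Schwarz with the LSI-based $\chi^2$ contraction to separate each summand into (i) a transient weight $\sqrt{\exp(-k\tau_\alpha/(5C_{\text{LSI}}))\eta_\chi^2+2}$ summable to $K\tau_\alpha+\eta_\chi C_{\text{LSI}}$, and (ii) the fourth moment of the rescaled score error under the true joint $\pi_n(\theta\mid \Xn^\ast)\,\Pn_{\theta^\ast}(\Xn^\ast)$. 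Everything up to this point is a direct transcription of the single-data argument.

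The one step that needs a fresh bound---and the main obstacle---is the fourth-moment lemma analogous to Lemma~\ref{lem:score_err_fourth_moment}. I would split the expectation over the event $\mA_{n,3}=\mA_{n,1}\cap\{\Xn^\ast:\sqrt n\|\widehat\theta_n^{\text{MLE}}-\theta^\ast\|\le C_0\sqrt{\log n}\}$ and its complement, with the tail controlled exactly as before using Assumption~\ref{ass:bdd_derivatives} (the clip bound $\|\widehat s(\theta,\Xn^\ast)\|\le C_3 n(1+\|\theta-\theta^\ast\|)$) together with the concentration bound in Assumption~\ref{ass:true_converge}. On $\mA_{n,3}$, the sup-norm control $\|\tfrac{1}{\sqrt n}\widehat s\|\lesssim\sqrt{\log n}$ from Assumption~\ref{ass:uniform_sm_err} reduces the fourth moment to $\log n$ times the second moment, which I further split via the triangle inequality into
\begin{equation*}
\E\bigl[\bigl\|\tfrac{1}{\sqrt n}(\widehat s-s^\ast)(\theta^\ast,\Xn^\ast)\bigr\|^2\bigr]\ +\ \E\bigl[\bigl\|\tfrac{1}{\sqrt n}(\widehat s-s^\ast)(\theta,\Xn^\ast)-\tfrac{1}{\sqrt n}(\widehat s-s^\ast)(\theta^\ast,\Xn^\ast)\bigr\|^2\,\mathds 1_{\mA_{n,3}}\bigr].
\end{equation*}
The first term is bounded directly by $\varepsilon_{N,n,1}^2$ via Assumption~\ref{ass:uniform_sm_err}; this is where the full-data version \emph{avoids} the debiasing machinery of Theorem~\ref{thm:post_convergence_iid}, since $\varepsilon_{N,n,1}^2$ already controls the $1/\sqrt n$-rescaled full-data error rather than a single-observation error that would otherwise inflate by a factor of $n$. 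The second term is handled by a second-order Taylor expansion of $\widehat s-s^\ast$ around $\theta^\ast$ (identical to the expansion in Lemma~\ref{lem:score_err_fourth_moment}), reducing it up to $O((\log n)^2/n)$ to $(\log n)\cdot\|I(\theta^\ast)-\widehat I(\theta^\ast)\|_F^2\cdot\E\|\sqrt n(\theta-\theta^\ast)\|^2$. Here I would apply the full-data analogue Lemma~\ref{lem:curv_err_full} (rather than Lemma~\ref{lem:curv_err_single}) to obtain $\|I(\theta^\ast)-\widehat I(\theta^\ast)\|_F\lesssim \varepsilon_{N,n,1}+\varepsilon_{N,n,2}$, the key observation being that $\varepsilon_{N,n,1}^2$ upper bounds the \emph{single-observation} squared error $\E_{X\sim P_{\theta^\ast}}\|(\widehat s-s^\ast)(\theta^\ast,X)\|^2$ under the i.i.d.\ structure (by discarding the non-negative bias cross-term in the $n^2$ decomposition). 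Combining the three pieces yields $\varepsilon_n^2\lesssim (\log n)^2\varepsilon_{N,n,1}^2+(\log n)^2\varepsilon_{N,n,2}^2+(\log n)^3/n$, and substituting back into the Girsanov bound completes the proof.
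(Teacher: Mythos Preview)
Your proposal is correct and follows essentially the same approach as the paper's own proof in \Cref{sec:post_convergence_n_obs_pf}: the same change of variables to $\alpha$, the same discretization/burn-in control via \citet{ding2024nonlinear}, the same Girsanov--Cauchy--Schwarz reduction to a fourth-moment bound, and the same split on $\mA_{n,3}$ with Taylor expansion and \Cref{lem:curv_err_full}. In particular, you have correctly identified the one place where the argument diverges from the single-data proof---(I.1) is bounded directly by $\varepsilon_{N,n,1}^2$ without debiasing, and the i.i.d.\ bias--variance decomposition in \Cref{lem:curv_err_full} lets $\varepsilon_{N,n,1}^2$ dominate the single-observation squared error needed for the Fisher-information comparison.
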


We provide the proof in \Cref{sec:post_convergence_n_obs_pf}. The error decomposition here is similar to \Cref{thm:post_convergence_iid} except the score error. The score error now only depends on the score-matching error $\varepsilon_{N,n,1}$, which is defined differently from $\wt\varepsilon_{N,1}$, the curvature-matching error $\epsilon_{N,n,2}$.
To ensure a diminishing error $\varepsilon_n=o(1)$ as $n\to\infty$, we need both $\varepsilon_{N,n,1}$ and $\varepsilon_{N,n,2}$ to converge at least at the rate of $1/(\log n)$.  This suggests that we should set $\lambda=\mO(1)$ to balance the two errors in \eqref{eq:score_loss_struct}. The Monte Carlo error $\varepsilon_{N,n,2}$ is decaying at the rate $1/\sqrt n$. For the score-matching error $\varepsilon_{N,n,1}$, see our discussion in \Cref{rmk:sm_err} regarding how to choose $N$.

The regularization idea here can be naturally extended to the general
non i.i.d.~data setting by introducing a computationally more costly
curvature-matching regularization term involving the full-data score, given by
$\big\|\E_{\Pn_\theta}\!\big[s_{\phi}(\theta, \Xn)s_{\phi}(\theta, \Xn)^T
+ \nabla_\theta s_{\phi}(\theta, \Xn)\big]\big\|_F^2$.
To evaluate this penalty, we construct a separate reference table in which
multiple independent $\Xn^{(k)}$ are simulated for each $\theta_{(k)}$, so that the
expectation with respect to $\Pn_\theta$ can be approximated by an empirical
average. Although this procedure increases simulation cost, our theoretical
analysis in \Cref{thm:post_convergence_n_obs} demonstrates that
curvature-matching is essential: it ensures that the estimated score remains
accurate when $\theta$ deviates slightly from the true parameter $\theta^\ast$,
which is critical for the stability of subsequent Langevin sampling.

\subsection{Full data score estimation via single data score matching}\label{sec:regularization_details}

We first provide an algorithm view of the method in \Cref{sec:regular} in \Cref{alg:langevin_single_obs}.

In the algorithm, we generate two reference tables $\mD^S = \{(\theta^{(k)}, X^{(k)})\}_{k=1}^N \iid q(\theta)\,p_\theta(\cdot)$ and $\mD^R = \{(\wt{\theta}^{(l)}, \wt{\X}_{m_R}^{(l)})\}_{l=1}^{N_R} \iid q(\theta)\,p^{(m_R)}_\theta(\cdot)$, where $q$ is the localized proposal distribution, and each $\wt{\X}_{m_R}^{(l)} = (\wt{X}_1^{(l)},\dots, \wt{X}_{m_R}^{(l)})^T$ is a dataset of $m_R$ observations.  Here we use a slightly different notation for samples in $\mD^R$ from our main text in order to differentiate the samples in different tables.
$\mD^S$ is used for evaluating the score matching loss, and $\mD^R$ is used for evaluating the curvature loss and the mean-regression loss. Concretely, the regularized score matching is conducted with the empirical mean of \eqref{eq:score_loss_struct_single}:
\begin{equation}\label{eq:emp_regu_score_loss}
\begin{split}
\min_\phi\Big\{& \frac{1}{N}\sum_{k=1}^N \Big[ \frac{1}{2} \norm{s_\phi(\theta^{(k)}, X^{(k)})}^2 + s_\phi (\theta^{(k)}, X^{(k)})^T \nabla_\theta \log \pi(\theta)\big|_{\theta=\theta^{(k)}} +\sum_{j=1}^{d_\theta} \frac{\partial s_{\phi,j}(\theta, X^{(k)})}{\partial \theta_{j}}\big|_{\theta=\theta^{(k)}}  \Big] \\
&+ \lambda_1 \frac{1}{N_R}\sum_{l=1}^{N_R} \Big\lVert \frac{1}{m_R}\sum_{i=1}^{m_R}\big[s_\phi(\wt{\theta}^{(l)}, \wt{X}_i^{(l)}) s_\phi(\wt{\theta}^{(l)}, \wt{X}_i^{(l)})^T + \nabla_\theta s_\phi(\wt{\theta}^{(l)}, \wt{X}_i^{(l)})\big] \Big\rVert_F^2  \Big\}
\end{split}
\end{equation}
and the mean regression is conducted with the empirical mean of \eqref{eq:score_demean}:
\begin{equation}\label{eq:emp_mean_reg_loss}
\begin{split}
\min_\psi& \frac{1}{N_R} \sum_{l=1}^{N_R} \bigg[ \Big\lVert h_\psi(\wt{\theta}^{(l)}) - \frac{1}{m_R}\sum_{i=1}^{m_R} s_{\hat{\phi}}(\wt{\theta}^{(l)}, \wt{X}_i^{(l)}) \Big\rVert^2 + \lambda_2 \Big\lVert h_\psi(\wt{\theta}^{(l)})h_\psi(\wt{\theta}^{(l)})^T - \nabla_\theta h_\psi(\wt{\theta}^{(l)}) \\
&- \Big[\frac{1}{m_R}\sum_{i=1}^{m_R} s_{\hat{\phi}}(\wt{\theta}^{(l)}, \wt{X}_i^{(l)})\Big]h_\psi(\wt{\theta}^{(l)})^T - h_\psi(\wt{\theta}^{(l)}) \Big[\frac{1}{m_R}\sum_{i=1}^{m_R} s_{\hat{\phi}}(\wt{\theta}^{(l)}, \wt{X}_i^{(l)})^T\Big] \Big\rVert^2 \bigg]
\end{split}
\end{equation}
In implementation, we randomly partition each of $\mathcal{D}^S$ and $\mathcal{D}^R$ into a training set (50\%) and a validation set (50\%). We first initialize the neural network using only the score-matching loss without penalty on curvature as in \eqref{eq:emp_regu_score_loss}. Then,
we continue training our neural networks with the penalized loss \eqref{eq:emp_regu_score_loss} and use the score-matching loss in \eqref{eq:emp_regu_score_loss} evaluated on the validation set to select the optimal $\lambda_1$. Next, we fix $s_{\hat{\phi}}$ and use $h_\psi$ to estimate its mean using \eqref{eq:emp_mean_reg_loss}. Still, we first initialize the network $h_\psi$ using only the regression loss without penalty on curvature as in \eqref{eq:emp_mean_reg_loss}, and then continue training $h_\psi$ with the penalized loss \eqref{eq:emp_mean_reg_loss}, where we use the regression loss in \eqref{eq:emp_mean_reg_loss} on the validation set to select the optimal $\lambda_2$.

\begin{algorithm}[ht]
    \caption{Langevin Monte Carlo with debiased score matching}\label{alg:langevin_single_obs}
    \begin{algorithmic}
            \STATE {\bf Input:} Prior distribution $\pi(\theta)$, observed dataset $\Xn^\ast$, number of particles $N$, number of Langevin steps $K$, step size $\tau_n$, networks $s_\phi(\theta, X)$ and $h_\psi(\theta)$, initial value $\theta^{(0)}$.
        \end{algorithmic}
        \hrule
        \begin{algorithmic}
        \STATE {\bf 1. Localization:} Construct a proposal distribution $q(\theta)$ using \eqref{eq:smm_sol}.
            \STATE {\bf 2. Reference Table:} Generate $\mD^S = \{(\theta^{(k)}, X^{(k)})\}_{k=1}^N\iid q(\theta)p_\theta(\cdot)$ and $\mD^{R} = \{\theta^{(l)}, \X^{(l)}_{m_R}\}_{l=1}^{N_R} \iid q(\theta)p^{(m_R)}_\theta(\cdot)$.

            \STATE {\bf 3. Network Training:} Train $s_\phi(\theta, X)$ on $\mD^S$ and $\mD^R$ using loss in \eqref{eq:emp_regu_score_loss} and obtain $\widehat \phi$.
            \STATE {\bf 4. Mean Regression:} Estimate the mean of $s_{\widehat\phi}(\theta, X)$ on $\mD^R$ using \eqref{eq:emp_mean_reg_loss} and obtain $\widehat\psi$.
            \STATE {\bf 5. Debiasing:}  $\wt s(\theta, X)= s_{\widehat\phi}(\theta, X)- h_{\widehat\psi}(\theta)$.
            \STATE {\bf 6. Langevin Sampling:}  {For} {$k = 1$ to $K$}
                \STATE \hspace{2em} $\theta^{(k)} \gets \theta^{(k-1)} + \tau_n\Big(\sum_{i=1}^n \tilde s\big(\theta^{(k-1)}, X^*_{i}\big)+ \nabla_\theta \log \pi(\theta^{(k)}) \Big) + \sqrt{2\tau_n}\,U_k$, \quad $U_k \iid \mathcal{N}(0, I_{d_\theta})$.
            \end{algorithmic}
            \hrule
            \begin{algorithmic}
            \STATE {\bf Return} $\{\theta^{(k)}\}_{k=1}^K$ as approximated posterior samples
    \end{algorithmic}
    \end{algorithm}

\begin{remark}[Weakly dependent data] \Cref{alg:langevin_single_obs} can also be generalized to weakly dependent
settings. For example, many time-series models, such as $MA(1)$ or the Lotka--Volterra
model, have the Markov property that the current state $X_i$ depends only on the last
state $X_{i-1}$ (or a small number of lags). In such cases, the full data likelihood can
still be factorized into conditional terms as
$\pn_\theta(\Xn) = \prod_{i=1}^{n-1} p(X_i \mid X_{i-1}, \theta)$ (with $X_0$
as the initial state). The resulting score function continues to satisfy the
three statistical properties:
1.~\emph{additive structure:}
   $s^\ast(\theta, \Xn) = \sum_{i=1}^{n-1} s^\ast(\theta, X_{i-1},X_{i})$.
2.~\emph{curvature structure:}
   $\E_{p(\cdot \mid X_{i-1}, \theta)}\!\left[s^\ast(\theta, X_{i-1}, X_{i})
   s^\ast(\theta, X_{i-1}, X_{i})^T
   + \nabla_\theta s^\ast(\theta, X_{i-1}, X_{i})\right] = 0$.
3.~\emph{mean-zero structure:}
   $\E_{p(\,\cdot\, \mid X_{i-1}=x, \theta)}[s^\ast(\theta, x, X_{i})] = 0$ for each $x$.
Thus, the two-step debiased score-matching procedure can still be applied after
accounting for the dependency structure by replacing the individual-level score
function $s(\theta, x)$ with a blockwise score function $s(\theta, x, x')$ for
approximating $\log p(X_i = x' \mid X_{i-1} = x, \theta)$. This modification
retains the benefits of structural regularization while reducing simulation costs.
\end{remark}

{
\subsection{Benefits of curvature matching}\label{sec:curvature_benefit}
Here we provide the details of the example shown in \Cref{fig:relative_error_heatmap} in \Cref{sec:curvature_structure}. We consider $n$ observations $X_i \iid \text{Categorical}(\theta_1, \theta_2, 1-\theta_1-\theta_2)$ for $i =1, \ldots, n$, where $\theta = (\theta_1, \theta_2)$ satisfies $\theta_1\geq 0, \theta_2\geq 0$ and $\theta_1+\theta_2 \leq 1$. We use a homogeneous Dirichlet prior $\text{Dir}(3)$ for $\theta$.

We compare two models trained with and without the curvature penalty, under the same simulation budget, same network complexity and same debiasing. We are interested in the score estimation error $\E_{\Xn^\ast \sim P_{\theta^\ast}^n} \normsm{\hat{s}(\theta, \Xn^\ast) - s^\ast(\theta, \Xn^\ast)}$ when $\theta$ is in a neighborhood of $\theta^\ast$. We also evaluate the curvature estimation error $\normsm{\hat I(\theta)- I^*(\theta)}$. In particular, we use the relative error normalized by the magnitude of the true score and curvature, defined as
\[
\frac{\E_{\Xn^\ast \sim P_{\theta^\ast}^n} \normsm{\hat{s}(\theta, \Xn^\ast) - s^\ast(\theta, \Xn^\ast)}^2}{\E_{\Xn^\ast \sim P_{\theta^\ast}^n} \normsm{s^\ast(\theta, \Xn^\ast)}^2} \text{ and } \frac{\normsm{\hat I(\theta)- I^*(\theta)}^2}{\normsm{I^*(\theta)}^2},
\]
to account for the scale of the score and curvature at different $\theta$.

\Cref{fig:relative_error_heatmap} in the main text shows the relative score error across the 2-dimensional parameter space, overlaid with posterior contours. We provide two additional plots in \Cref{fig:benefit_curv} to further illustrate the benefits of curvature matching. \Cref{fig:error_score_and_curv} shows maximum relative errors at different values of $\normsm{\theta - \theta^\ast}$, demonstrating that the curvature penalty leads to smaller curvature errors, which in turn leads to smaller score errors for $\theta$ near $\theta^\ast$. This is because the curvature structure characterizes how the score changes when $\theta$ deviates from $\theta^\ast$, and thus a better estimation of curvature leads to a more accurate estimation of the score in a neighborhood of $\theta^\ast$. Finally, \Cref{fig:marginal_density} shows that reduced score error in the posterior region leads to more accurate LMC posterior samples.

\begin{figure}[!ht]
    \centering
    \begin{subfigure}[b]{0.8\textwidth}
        \centering
        \includegraphics[width=\textwidth]{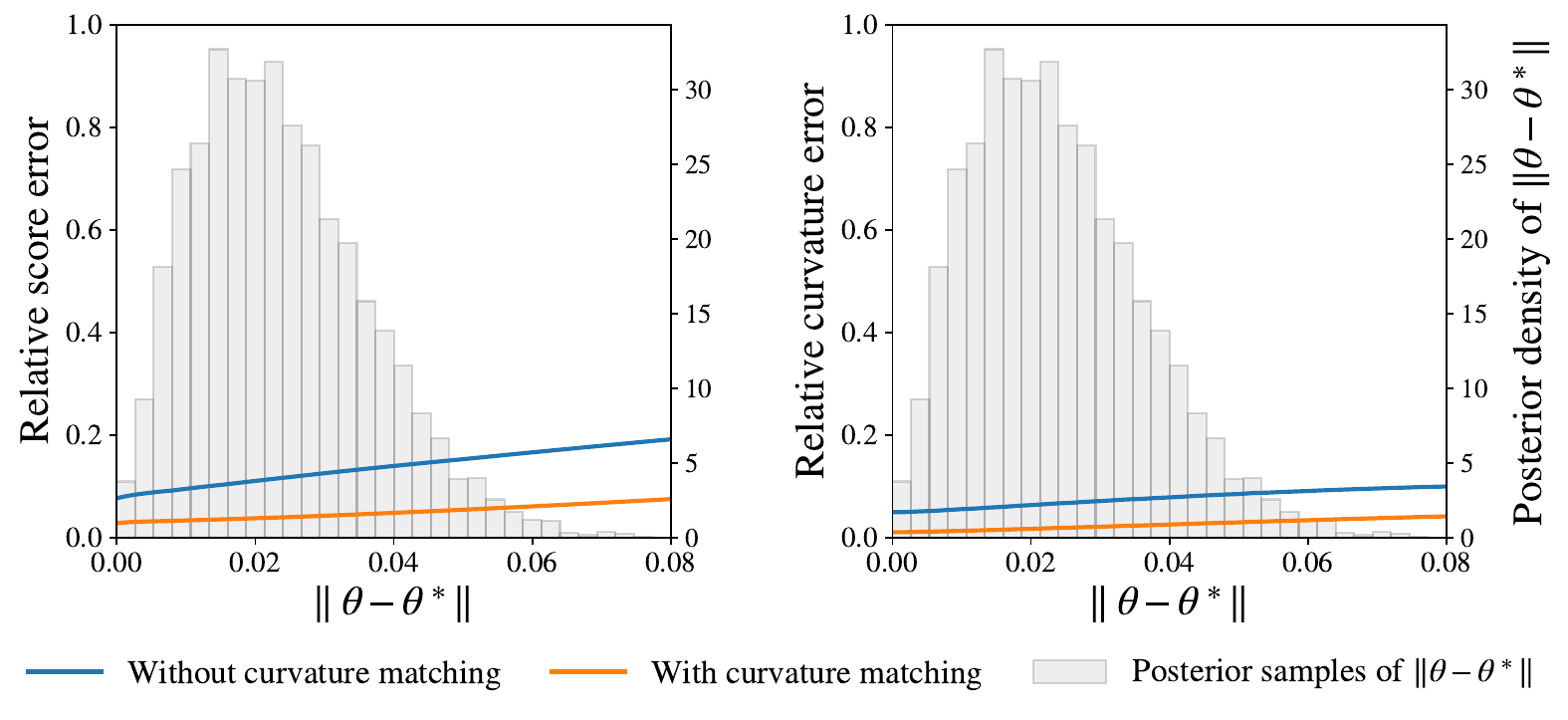}
        \caption{Relative score error and relative curvature error v.s. $\lVert \theta - \theta^\ast \rVert$}
        \label{fig:error_score_and_curv}
    \end{subfigure}
    \hfill
    \begin{subfigure}[b]{0.8\textwidth}
        \centering
        \includegraphics[width=\textwidth]{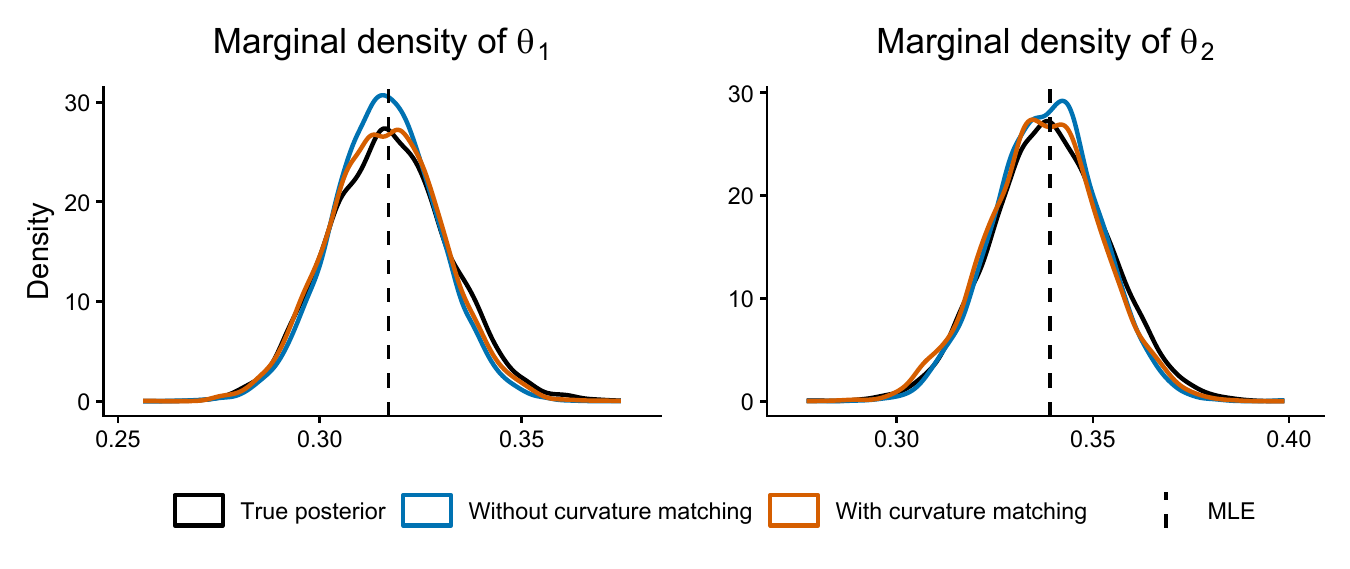}
        \caption{Marginal density of the posterior sampled using the two models}
        \label{fig:marginal_density}
    \end{subfigure}
    \caption{Benefits of the curvature penalty}
    \label{fig:benefit_curv}
\end{figure}
}

\subsection{Implementations of debiased score matching}\label{sec:alter_ways_demean}
In this subsection, we first introduce the debiased estimator used in \Cref{alg:langevin_single_obs}.

\begin{equation}\label{eq:score_demean}
 \begin{split}
 \widehat \psi =& \arg\min_{\psi} \mathbb{E}_{q(\theta)} \bigg[\normbig{h_\psi(\theta)- \E_{p_\theta} \big[s_{\widehat\phi} (\theta, X)\big] }^2 \\
 &+\lambda_2  \big\lVert h_\psi(\theta)h_\psi(\theta)^T - \nabla_\theta h_\psi(\theta) - \E_{p_\theta} \big[s_{\widehat\phi} (\theta, X)\big] h_\psi(\theta)^T - h_\psi(\theta) \E_{p_\theta} \big[s_{\widehat\phi} (\theta, X)^T\big]  \big\rVert_F^2 \bigg]
 \end{split}
 \end{equation}
 where $\lambda_2$ is again a hyperparameter that controls the strength of the curvature penalty. Let $\widehat \psi$ denote the solution to the above problem. The second curvature penalty ensures that the final debiased score, defined as $\wt s(\theta, X) = s_{\widehat\phi}(\theta, X) - h_{\widehat\psi}(\theta)$, continues to satisfy the curvature structure, which is verified below.

Using the triangle inequality, we rewrite the curvature-matching on $\wt s(\theta,X)$ as
\begin{align}
&\Big\lVert \E_{p_\theta}\Big[\big(s_{\widehat\phi} (\theta, X) - h_{\psi}(\theta)\big) \big(s_{\widehat\phi} (\theta, X) - h_{\psi}(\theta)\big)^T + \nabla_\theta\big(s_{\widehat\phi} (\theta, X) - h_{\psi}(\theta)\big)\Big] \Big\rVert_F \notag \\
\leq& \lVert \E_{p_\theta}\big[s_{\widehat\phi} (\theta, X)s_{\widehat\phi} (\theta, X)^T + \nabla_\theta s_{\widehat\phi} (\theta, X)\big] \rVert_F \label{eq:deb_est_curv_part1} \\
&+ \big\lVert h_\psi(\theta)h_\psi(\theta)^T - \nabla_\theta h_\psi(\theta) - \E_{p_\theta} \big[s_{\widehat\phi} (\theta, X)\big] h_\psi(\theta)^T - h_\psi(\theta) \E_{p_\theta} \big[s_{\widehat\phi} (\theta, X)^T\big]  \big\rVert_F \label{eq:deb_est_curv_part2}
\end{align}
where \eqref{eq:deb_est_curv_part1} is controlled during the score matching \eqref{eq:score_loss_struct_single}, and \eqref{eq:deb_est_curv_part2} is incorporated into the mean regression objective \eqref{eq:score_demean}. Thus, we ensure that with penalty in \eqref{eq:score_demean}, the debiased network has smaller bias while preserving the curvature structure. It is worth mentioning that $h_\psi$ can be any regression model not limited to neural networks, but we find empirically that a neural network with the same hidden layer structure as $s_\phi$ performs the best.

Next, we introduce two alternative approaches other than \Cref{alg:langevin_single_obs}.

{\bf Alternative 1:} If we further decompose
\eqref{eq:deb_est_curv_part2}, we have
\begin{align*}
\eqref{eq:deb_est_curv_part2} \leq& \big\lVert -h_\psi(\theta)h_\psi(\theta)^T - \nabla_\theta h_\psi(\theta)\big\rVert_F \\
&+ \big\lVert2h_\psi(\theta)h_\psi(\theta)^T - \E_{p_\theta} \big[s_{\widehat\phi} (\theta, X)\big] h_\psi(\theta)^T - h_\psi(\theta) \E_{p_\theta} \big[s_{\widehat\phi} (\theta, X)^T\big]  \big\rVert_F \\
=& \underbrace{\big\lVert h_\psi(\theta)h_\psi(\theta)^T + \nabla_\theta h_\psi(\theta)\big\rVert_F}_{\text{A1.1}} + 2 \underbrace{\big\lVert h_\psi(\theta) - \E_{p_\theta} \big[s_{\widehat\phi} (\theta, X)^T\big] \big\rVert_2}_{A1.2} \cdot \underbrace{\big\lVert h_\psi(\theta) \big\rVert_2}_{A1.3}
\end{align*}
In this decomposition, A1.2 is the regression error, and A1.3 can be bounded by A1.2 and the bias of the score network $\big\lVert \E_{p_\theta} \big[s_{\widehat\phi} (\theta, X)^T\big] \big\rVert_2 \le \sqrt{\E_{p_\theta}\big\lVert s_{\widehat\phi} (\theta, X)^T - s^\ast(\theta, X) \big\rVert_2^2}$. Since both A1.2 and A1.3 can be well controlled, it suffices to control A1.1, and we have the following alternative objective to replace \eqref{eq:score_demean} as
\begin{equation*}
\begin{split}
\widehat \psi =& \arg\min_{\psi} \mathbb{E}_{q(\theta)} \bigg[\normbig{h_\psi(\theta)- \E_{p_\theta} s_{\widehat\phi} (\theta, X)}^2 + \lambda_2  \big\lVert h_\psi(\theta)h_\psi(\theta)^T + \nabla_\theta h_\psi(\theta)\big\rVert_F^2 \bigg]
\end{split}
\end{equation*}
This objective here has a simpler form than \eqref{eq:score_demean}, although their computational cost is nearly the same.

{\bf Alternative 2:} This alternative utilizes the idea of projected gradient descent. We now write $\wt{s}_{\phi,\psi}(\theta, X) = s_{\phi}(\theta, X) - h_{\psi}(\theta)$.

We want to optimize $\wt{s}_{\phi,\psi}(\theta, X)$ by minimizing the regularized score matching loss with curvature penalty, while subject to the mean-zero constraint.
Similar to projected gradient descent methods, we alternatively minimize the regularized score loss and project the score model onto the mean-zero model family, where the projection is again enforced by mean regression.  Essentially, we iterate between the following two steps until convergence.
\begin{enumerate}
    \item Minimize the regularized score loss:
   {\small \begin{equation*}
    \min_{\phi,\psi}  \E_{q(\theta)}  \bigg[ \E_{p_\theta}  \Big[
    \|\wt{s}_{\phi,\psi}(\theta,X) - s^\ast(\theta,X)\|^2 \Big]
    + \lambda_1 \,\Big\| \E_{ p_\theta}\!\big[\wt{s}_{\phi,\psi}(\theta,X)\wt{s}_{\phi,\psi}(\theta,X)^T
    + \nabla_\theta \wt{s}_{\phi,\psi}(\theta,X)\big]\Big\|_F^2 \bigg],
    \end{equation*}}
    \item Projection:
    \[
    \min_{\psi} \mathbb{E}_{q(\theta)} \bigg[\normbig{h_\psi(\theta)- \E_{p_\theta} s_{\widehat\phi} (\theta, X)}^2 \bigg]
    \]
\end{enumerate}
When this procedure converges, we will get a score model within the mean-zero model family that minimizes the curvature regularized score loss.

We find \Cref{alg:langevin_single_obs} and the two alternatives have similar empirical performance in our examples, but Alternative 2 incurs higher computational costs.  We present \Cref{alg:langevin_single_obs} in the main text because its regularization term is more natural and straightforward.

\section{Proofs and Additional Theoretical Results}\label{sec:additional_theory}

\subsection{Proof of \Cref{thm:score_matching}}\label{pf:score_matching}

We restate  the proof from \citet{hyvarinen2005estimation,hyvarinen2007some} here to keep our content self-contained and to better motivate the discussions in \Cref{sec:boundary_sol}.

We first formally state the assumptions required for \Cref{thm:score_matching}.

For any $\Xn$ in its marginal support, we define the
section of $\theta$ as
$\Omega_{\theta \mid \Xn} := \big\{\theta \in \Theta : \pi(\theta)\,\pn_\theta(\Xn) > 0\big\}$,
and denote its boundary by $\partial \Omega_{\theta \mid \Xn}$.

\begin{assumption}[Boundary Condition]\label{ass:boundary} For any $\Xn \in \mX$ and score network parameter $\phi$, it holds that
$\pi(\theta)\,\pn_\theta(\Xn)\, s_\phi(\theta,\Xn) \to 0$
as $\theta$ approaches $\partial \Omega_{\theta \mid \Xn}$.
\end{assumption}

\begin{remark}[Boundary Condition]
Many simulation-based models violate the boundary condition required in \Cref{ass:boundary}. For example, for the benchmark M/G/1-queuing model in \Cref{sec:queuing_simu}, it violates the boundary condition because (1) the uniform prior has non-diminishing density near the boundary, and (2) the support of $\theta_1$ depends on the data. We introduce two solutions to this issue in \Cref{sec:boundary_sol} and illustrate the details of the treatments on the queuing model in \Cref{sec:queuing_details}.
\end{remark}

\begin{assumption}[Finite Moments]\label{ass:finite_moment}
     For any $\phi$, $\E_{(\theta, \Xn)\sim \pi(\theta)\,\pn_\theta(\Xn)} \big[\|\nabla_\theta\log \pn_\theta(\Xn)\|^2\big]$ and $\E_{(\theta, \Xn)\sim \pi(\theta)\,\pn_\theta(\Xn)} \big[\|s_\phi(\theta, \Xn)\|^2\big]$ are both finite.
\end{assumption}

\proof Now we are stating the complete proof. For notational simplicity, we denote the data as $X$ and we write the training distribution of $\theta$ as $p(\theta)$, which can be either the prior distribution $\pi(\theta)$ or any proposal distribution $q(\theta)$. We first rewrite the score-matching objective as
\begin{align*}
&\E_{(\theta,X)\sim p(\theta)p(X\mid \theta)} \norm{s_\phi(\theta, X) - \nabla_\theta \log p(X\mid \theta)}^2 \\
=& \E_{p(\theta,X)} \norm{s_\phi(\theta, X)}^2 + \E_{p(\theta,X)}\norm{\nabla_\theta \log p(X\mid\theta)}^2 - 2\E_{p(\theta,X)} \Big[ s_\phi(\theta, X)^T\nabla_\theta \log p( X\mid\theta)\Big].
\end{align*}

Here the first two terms are finite uder \Cref{ass:finite_moment} and the last term is also finite due to the Cauchy-Schwarz inequality. Additionally, the second term $\E_{p(\theta,X)} \norm{\nabla_\theta \log p_\theta(X)}^2$ is a constant in $\phi$ can thus can be ignored in the optimization program. The first term does not depend on unknown quantity $p(X\mid\theta)$, so we only need to address the last term.

Denote the joint support of $(\theta, X)$ as $\Omega:=\{(\theta, X)\in \Theta\times\mX: p(\theta)p(X\mid\theta)>0\}$. We denote $\theta_{-j} = (\theta_1, \ldots, \theta_{j-1}, \theta_{j+1}, \theta_{d_\theta})^T$  and the marginal support of $(\theta_{-j},X)$ as $\Omega_{(\theta_{-j}, X)} :=\{ (\theta_{-j}, X): (\theta, X)\in \Omega \text{ for some } \theta_j\}$. We denote the boundary segments orthogonal to the $j$-th axis at $(\theta_{-j},X)$ as $\text{Sec}(\Omega; \theta_{-j},X):=\{\theta_j \in \R: (\theta, X) \in \Omega\}$.

\begin{align*}
&\E_{p(\theta,X)} \Big[ s_\phi(\theta, X)^T\nabla_\theta \log p( X\mid\theta)\Big]\\
=&\int_{\mX} \diff X \int_{\Omega(X)}  p(\theta )p(X\mid \theta) s_{\phi}(\theta, X)^T\nabla_\theta \log p( X\mid\theta) \diff \theta \\
=& \int_{\mX} \diff X \int_{\Omega(X)} p(\theta) \sum_{j=1}^{d_\theta} s_{\phi,j}(\theta, X) \nabla_{\theta_j} p(X\mid \theta) \diff\theta \\
= &\sum_{j=1}^{d_\theta} \int_{\Omega_{(\theta_j, X)}}\diff X \diff \theta_{-j} \int_{\text{Sec}(\Omega; \theta_{-j},X)} p(\theta)s_{\phi,j} (\theta, X)\frac{\partial p(X\mid \theta)}{\partial \theta_j}\diff\theta_j
\end{align*}

For each coordinate $j$ and the inside integral, assuming $\text{Sec}(\Omega; X, \theta_{-j})$ is an interval and denote it as $(a_j,b_j)$, we have
\begin{align*}
&\int_{\text{Sec}(\Omega; \theta_{-j},X)} p(\theta)s_{\phi,j} (\theta, X)\frac{\partial p(X\mid \theta)}{\partial \theta_j}\diff\theta_j \\
=& p(\theta)s_{\phi,j}(\theta, X) p(X\mid \theta) \big|_{a_j}^{b_j}- \int_{\text{Sec}(\Omega; \theta_{-j},X)} \frac{\partial p(\theta) s_{\phi, j}(\theta, X)}{\partial\theta_j}p(X\mid \theta)\diff\theta_j \\
=& -\int_{\text{Sec}(\Omega; \theta_{-j},X)} \Big[\frac{\partial p(\theta)}{\partial\theta_j} s_{\phi, j}(\theta, X)+p(\theta)\frac{\partial  s_{\phi, j}(\theta, X)}{\partial\theta_j}\Big]p(X\mid \theta)\diff\theta_j \qquad \text{(by \Cref{ass:boundary})} \\
=& - \int_{\text{Sec}(\Omega; \theta_{-j},X)} \Big[\frac{\partial \log p(\theta)}{\partial \theta_j}  s_{\phi, j}(\theta, X)+\frac{\partial  s_{\phi, j}(\theta, X)}{\partial\theta_j}\Big]p(\theta)p(X\mid \theta)\diff\theta_j
\end{align*}
This concludes our proof.

\subsection{Convergence analysis on localization scheme}\label{sec:rate_loc}
Recall from \Cref{sec:preconditioning} that we assumed the simulation process for
$\Xn^\theta$ can be represented as a deterministic map $\tau(\theta, \Zm)$ applied
to latent variables $\Zm = (Z_1,\ldots,Z_m)$ drawn i.i.d.\ from a known distribution
$P_Z$. For generality, here we allow the simulated datasets used in the localization
step to have size $m$, which may differ from the size $n$ of the observed dataset
$\Xn^\ast$.

Next, we list our assumptions. Our first assumption ensures that closeness of
parametric distributions implies closeness of the corresponding parameters.

\begin{assumption}[Lipschitz Continuity]\label{ass:lipschitz} Assume that there exists some constant $C>0$, such that for any $\theta_1, \theta_2 \in \Theta$, we have $\norm{\theta_1-\theta_2}\leq C\cdot d_\text{SW}(P_{\theta_1}, P_{\theta_2})$.
\end{assumption}
\Cref{ass:lipschitz} is equivalently stating that the inverse mapping
$P_\theta \mapsto \theta$ must be Lipschitz continuous. This condition holds for
many parametric models under standard regularity assumptions. For example,
in a location family with $\tau(\theta,\Zm)=\theta+\Zm$, we have
$d_\text{SW}(P_{\theta_1}, P_{\theta_2}) = c_p\norm{\theta_1-\theta_2}$, where $c_p$ is the expected norm of the unit projection vector in $\R^p$.
For  exponential families with density
$p_\theta(X) = h(X)\exp\!\big(\langle \theta, T(X) \rangle - A(\theta)\big)$,
Assumption~\ref{ass:lipschitz} holds provided that $T(X)$ is injective and smooth
and that $A(\theta)$ is strongly convex.

Next, we need a condition ensuring that empirical distributions are close to
their population counterparts. Denote the two empirical distributions by
$\PP_n = \tfrac{1}{n}\sum_{i=1}^n \delta_{X_{0,i}}$ and
$\Q_m^\theta = \tfrac{1}{m}\sum_{j=1}^m \delta_{\tau(\theta, Z_j)}$,
where $\delta_x$ denotes the Dirac measure at $x$.

\begin{assumption}[Uniform Convergence]\label{ass:unif_convergence} Assume for $\Zm \sim P_Z^{(m)}$, we have
    $\sup_{\theta\in \Theta} d_\text{SW}\big(\Q^\theta_m, P_\theta\big) = O_p(m^{-\frac{1}{2}})$.
\end{assumption}

In many SBI applications, the prior on $\theta$ is chosen to be uniform, which
implies a compact parameter space $\Theta$. In this setting,
\Cref{ass:unif_convergence} generally holds when the simulator
$\tau(\cdot,\cdot)$ is jointly Lipschitz in $(\theta, Z)$ and the induced data distributions are sub-Gaussian. A similar result regarding $\E[d_{SW}(\Q_{m}^\theta, P_\theta)]$ was studied in \citet{nietert2022statistical}.

\begin{theorem}[Convergence rate in localization scheme]\label{thm:precond_convergence} Under \Cref{ass:lipschitz,ass:unif_convergence}, for any $\widehat\theta_{m,n} \in \arg\min_\theta d_\text{SW}\big(\tau(\theta, \Zm), \Xn^\ast\big)$, we have $\|\widehat\theta_{m,n}-\theta^\ast\|=O_p(n^{-\frac{1}{2}}+m^{-\frac{1}{2}})$.
\end{theorem}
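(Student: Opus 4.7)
The plan is to reduce the parameter-space bound to a distribution-space bound via Assumption 3 (Lipschitz Continuity), and then control the distribution-space discrepancy by a short triangle-inequality argument that exploits both the optimality of $\widehat\theta_{m,n}$ and Assumption 4 (Uniform Convergence). This is the standard $M$-estimator template: first show that the minimized objective is small, then deduce that the minimizer's induced distribution is close to the truth, and finally invert the mapping $\theta \mapsto P_\theta$.

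Concretely, let $\PP_n = \tfrac{1}{n}\sum_i \delta_{X^\ast_i}$ denote the empirical measure of $\Xn^\ast$. I would start from
\[
\|\widehat\theta_{m,n}-\theta^\ast\| \;\le\; C\cdot d_\text{SW}\!\big(P_{\widehat\theta_{m,n}}, P_{\theta^\ast}\big)
\]
by Assumption 3, and then split the right-hand side via the triangle inequality into three pieces:
\[
d_\text{SW}\!\big(P_{\widehat\theta_{m,n}}, \Q_m^{\widehat\theta_{m,n}}\big) \;+\; d_\text{SW}\!\big(\Q_m^{\widehat\theta_{m,n}}, \PP_n\big) \;+\; d_\text{SW}(\PP_n, P_{\theta^\ast}).
\]
The first term is $O_p(m^{-1/2})$ by Assumption 4 (applied with a supremum over $\theta$). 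The third term is $O_p(n^{-1/2})$ by applying Assumption 4 pointwise at $\theta=\theta^\ast$ with sample size $n$, using the fact that $\Xn^\ast$ is itself a sample drawn via the same simulator $\tau(\theta^\ast,\cdot)$ on i.i.d.\ latent $Z_i \sim P_Z$. For the middle term I would invoke the optimality of $\widehat\theta_{m,n}$: since the criterion is $d_\text{SW}(\Q_m^\theta,\PP_n)$, we have
\[
d_\text{SW}\!\big(\Q_m^{\widehat\theta_{m,n}}, \PP_n\big) \;\le\; d_\text{SW}\!\big(\Q_m^{\theta^\ast}, \PP_n\big) \;\le\; d_\text{SW}\!\big(\Q_m^{\theta^\ast}, P_{\theta^\ast}\big) + d_\text{SW}(P_{\theta^\ast},\PP_n),
\]
which is again $O_p(m^{-1/2}+n^{-1/2})$ by the two uses of Assumption 4 above. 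Summing gives $d_\text{SW}(P_{\widehat\theta_{m,n}},P_{\theta^\ast})=O_p(n^{-1/2}+m^{-1/2})$, and the Lipschitz step finishes the argument.

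The main obstacle, insofar as there is one, is the bookkeeping for the third term: Assumption 4 is phrased in terms of the simulated empirical measure $\Q_m^\theta$, so to apply it to $\PP_n$ I need to observe that the observed sample can itself be written as $\tau(\theta^\ast, Z^\ast_1),\dots,\tau(\theta^\ast, Z^\ast_n)$ for i.i.d.\ latent $Z^\ast_i \sim P_Z$, which is exactly the reparametrization setup already in force in Section~\ref{sec:preconditioning}. Once this identification is made, no empirical-process machinery beyond what is already packaged into Assumption 4 is required, and the result follows immediately from the triangle-inequality decomposition above.
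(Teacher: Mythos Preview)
Your proposal is correct and matches the paper's proof essentially step for step: the same triangle-inequality decomposition of $d_\text{SW}(P_{\widehat\theta_{m,n}},P_{\theta^\ast})$ through $\Q_m^{\widehat\theta_{m,n}}$ and $\PP_n$, the same use of optimality to bound the middle term by $d_\text{SW}(\Q_m^{\theta^\ast},\PP_n)$, the same appeals to Assumption~\ref{ass:unif_convergence} (uniformly for the $m$-terms, pointwise at $\theta^\ast$ for the $n$-term), and the final application of Assumption~\ref{ass:lipschitz}. The only cosmetic difference is that the paper first isolates the bound on the minimized criterion before doing the three-way split, whereas you fold it in directly; the content is identical.
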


\noindent A proof of this result is provided in \Cref{sec:precond_convergence_pf}.
\Cref{thm:precond_convergence} shows that setting $m = \mathcal{O}(n)$ allows
the localization step to restrict the search to a neighborhood of $\theta^\ast$
with radius $\mathcal{O}(n^{-1/2})$.

While these results are stated under SWD, a similar
$\sqrt{n}$ convergence rate can also be shown for the
max-sliced Wasserstein distance \citep{deshpande2019max} and Maximum Mean
Discrepancy (MMD) \citep{gretton2012kernel} under similar assumptions. However, using the 1-Wasserstein distance yields a slower rate of
$\mathcal{O}_p(n^{-1/p} + m^{-1/p})$, due to the curse of dimensionality in the
convergence rate $m^{-1/p}$ of the empirical distribution to its population counterpart \citep{boissard2011simple}. An additional advantage of SWD is its computational scalability with sample sizes
$m$ and $n$, as well as its robustness to increasing dimensionality.

\subsection{Conditions for \Cref{ass:unif_convergence}}\label{sec:unif_convergence_cond}
\begin{lemma}\label{lem:unif_convergence}Assume $\Theta \subset \R^{d_\theta}$ is compact and the simulator $\tau(\cdot, \cdot)$ is jointly Lipschitz in both $\theta$ and $Z$ such that
    \[
   \norm{\tau(\theta_1, Z)- \tau(\theta_2,Z)}\leq L(Z)\norm{\theta_1-\theta_2}, \quad L_*:=\E(L(Z))<\infty.
    \]
    Additionally, we assume $X=\tau(\theta, Z)$ is subgaussian for any $\theta \in \Theta$, then we have
    \[
        \sup_{\theta\in \Theta} d_\text{SW}\big(\Q^\theta_m, P_\theta\big) = O_p(m^{-\frac{1}{2}}).
        \]
\end{lemma}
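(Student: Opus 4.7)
The plan rests on three ingredients: (i) a common-latent coupling that yields Lipschitz continuity of the map $\theta \mapsto d_{\text{SW}}(\Q_m^\theta, P_\theta)$ in $\theta$; (ii) a pointwise rate $\E[d_{\text{SW}}(\Q_m^\theta, P_\theta)] = O(m^{-1/2})$ obtained from the one-dimensional empirical Wasserstein rate applied slicewise under the sub-Gaussian hypothesis; and (iii) a covering / union-bound argument on $\Theta$ exploiting compactness.

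First, to establish stochastic equicontinuity, I would couple $\Q_m^{\theta_1}$ and $\Q_m^{\theta_2}$ through the \emph{shared} latent sample $Z_1,\ldots,Z_m$, which gives
\[
W_1(\Q_m^{\theta_1},\Q_m^{\theta_2}) \,\leq\, \frac{1}{m}\sum_{i=1}^m \norm{\tau(\theta_1,Z_i)-\tau(\theta_2,Z_i)} \,\leq\, \bar L_m \norm{\theta_1-\theta_2},
\]
with $\bar L_m := m^{-1}\sum_{i=1}^m L(Z_i)$. A direct Lipschitz estimate on the pushforward yields $W_1(P_{\theta_1},P_{\theta_2}) \leq L_*\norm{\theta_1-\theta_2}$. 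Since projection onto a unit direction is $1$-Lipschitz, $d_{\text{SW}} \leq W_1$, so the triangle inequality produces
\[
\abs{d_{\text{SW}}(\Q_m^{\theta_1},P_{\theta_1}) - d_{\text{SW}}(\Q_m^{\theta_2},P_{\theta_2})} \,\leq\, (\bar L_m + L_*) \norm{\theta_1-\theta_2}.
\]
The strong law of large numbers gives $\bar L_m \to L_*$ almost surely, so the Lipschitz modulus is $O_p(1)$.

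For the pointwise rate at a fixed $\theta$, I would unfold the SWD as an integral of one-dimensional Wasserstein distances and note that $\langle u,\tau(\theta,Z)\rangle$ inherits a sub-Gaussian bound (uniform in $(\theta,u)$ by compactness of $\Theta$ and continuity of the sub-Gaussian norm in $\theta$). A classical 1D empirical $W_1$ estimate of Fournier--Guillin / Bobkov--Ledoux type then gives $\E[W_1(\Q_m^{\theta,u},P_\theta^u)] \lesssim m^{-1/2}$ uniformly in $u$, and Fubini delivers $\E[d_{\text{SW}}(\Q_m^\theta,P_\theta)] \lesssim m^{-1/2}$; this matches the SWD concentration bounds in \citet{nietert2022statistical} already cited in the paper. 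Exponential concentration of $d_{\text{SW}}(\Q_m^\theta,P_\theta)$ about its mean follows from bounded differences after truncating at radius $\sqrt{\log m}$, yielding, for any $\delta>0$,
\[
\PP\Big(d_{\text{SW}}(\Q_m^\theta,P_\theta) \geq C m^{-1/2}\sqrt{\log(1/\delta)}\Big) \,\leq\, \delta.
\]

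Finally, I would combine the two ingredients via a covering argument. Compactness of $\Theta \subset \R^{d_\theta}$ supplies an $\epsilon$-net $\{\theta_k\}_{k=1}^{N_\epsilon}$ with $N_\epsilon \lesssim \epsilon^{-d_\theta}$. Taking $\epsilon = m^{-1/2}$, the Lipschitz bound from step one gives
\[
\sup_{\theta\in\Theta} d_{\text{SW}}(\Q_m^\theta,P_\theta) \,\leq\, \max_{1\leq k\leq N_\epsilon} d_{\text{SW}}(\Q_m^{\theta_k},P_{\theta_k}) + (\bar L_m + L_*)\epsilon,
\]
in which the second term is $O_p(m^{-1/2})$. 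Applying the pointwise concentration inequality with $\delta = (mN_\epsilon)^{-1}$ and union-bounding over the net yields $\max_k d_{\text{SW}}(\Q_m^{\theta_k},P_{\theta_k}) = O_p(m^{-1/2}\sqrt{\log m})$, producing the claimed supremum rate up to a benign logarithmic factor (removable by a chaining refinement if a fully log-free rate is desired). The main obstacle is securing pointwise exponential concentration sharp enough to survive a union bound over a polynomial-size net: the sub-Gaussian hypothesis is essential, since it allows a truncation at scale $\sqrt{\log m}$ that makes bounded differences applicable with the required small constant $O(m^{-1}\sqrt{\log m})$, while the untruncated contribution is absorbed by the sub-Gaussian tail.
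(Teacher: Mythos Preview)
Your proposal is correct and shares the paper's overall architecture---pointwise one-dimensional Wasserstein rate from Fournier--Guillin, Lipschitz continuity in $\theta$ via the shared-latent coupling, a covering/union-bound step, and a chaining refinement to kill the logarithm---but there is one structural difference worth noting. The paper upper-bounds $d_{\text{SW}}(\Q_m^\theta,P_\theta)$ by $\sup_{\omega\in S^{p-1}} W_1(\Q_{m,\omega}^\theta,P_{\theta,\omega})$ and then runs the covering argument over the \emph{product} $\Theta\times S^{p-1}$, using Lipschitz continuity of $\Delta_m(\theta,\omega)$ in both coordinates; the final Dudley integral is over this joint space. You instead integrate out the sphere first (via Fubini) to obtain a pointwise bound and concentration for $d_{\text{SW}}$ itself, so your covering lives only on $\Theta$. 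Your route is a bit more economical---it avoids the sphere net and keeps the metric-entropy integral $d_\theta$-dimensional rather than $(d_\theta+p-1)$-dimensional---at the cost of needing concentration for the integrated quantity $d_{\text{SW}}$ directly, which you handle by truncation plus bounded differences (the paper does essentially the same McDiarmid step but applies it to $\sup_{\theta,\omega}\Delta_m$). Both routes are sound and land at the same $O_p(m^{-1/2})$ after chaining.
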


\begin{proof}
We denote the 1-Wasserstein distance as $d_{W1}$ and recall the relationship between the 1-Wasserstein distance and the sliced Wasserstein distance as
\[
d_{SW}(\Q^{\theta}_m,P_\theta) = \int_{\mathcal{S}^{p-1}} d_{W1}(\Q^{\theta}_{m,\omega}, P_{\theta,\omega}) \diff\sigma(\omega)
\]
where $\omega\in \mathcal{S}^{p-1}:=\{\omega' \in \R^{p-1}:\norm{\omega'}\leq 1\}$ is a projection direction, $\sigma(\cdot)$ is the uniform measure on the unit sphere, and $\Q^{\theta}_{m,\omega}$ and $P_{\theta,\omega}$ are the projections of $\Q^{\theta}_m$ and $P_\theta$ onto the direction $\omega$ by $x\mapsto \omega^Tx$.

First, since $\tau(\theta, Z)$ is subgaussian, its projected variable $\omega^T\tau(\theta,Z)$ is also subgaussian.  For any fixed $(\theta, u)$, from \citet[Theorem 2]{fournier2015rate} (plugging in $p=d=1$, and condition (1) is satisfied with $\alpha=2$), we have
\begin{equation}\label{eq:W1_univariate_concentrate}
P(d_{W1}(\Q^{\theta}_{m,\omega}, P_{\theta,\omega})>t)\leq c_1 \exp(-c_2{mt^2})
\end{equation}
for some constant $c_1, c_2>0$. This leads to $\E(d_{W1}(\Q^{\theta}_{m,\omega}, P_{\theta,\omega}))\leq C_1 m^{-1/2}$ with some constant $C_1>0$.

Since all W1 distances are non-negative, we have
\[
d_{SW}(\Q^{\theta}_m,P_\theta)\leq \sup_{\omega\in \mathcal{S}^{p-1}} d_{W1}(\Q^{\theta}_{m,\omega}, P_{\theta,\omega}).
\]

Since $\Theta\subset \R^{d_\theta}$ is compact, we refer its euclidean radius as $R$. We can cover $\Theta$ with $N_\varepsilon\leq (R/\varepsilon)^{d_\theta}$ balls of radius $\varepsilon$ such that for any $\theta\in \Theta$, there exists a ball $B_{d_\theta}(\theta^i, \varepsilon)$ such that $\theta\in B_{d_\theta}(\theta^i, \varepsilon) =\{\theta':\norm{\theta'-\theta^i}\leq\varepsilon\}$. Similarly, since $\omega\in \mathcal{S}^{p-1}$, we can cover $\mathcal{S}^{p-1}$ with  $M_{\gamma}\leq (1/\gamma)^{p-1}$ balls of radius $\Gamma$ such that for any $\omega \in \mathcal{S}^{p-1}$, there exists a ball $B_{p-1}(\omega_j,\gamma)=\{\omega': \norm{\omega'-\omega_i}\leq \gamma\}$ such that $\omega\in B_p(\omega_j,\gamma)$.

Let $\Delta_m(\theta, \omega):= d_{W1}(\Q^{\theta}_{m,\omega}, P_{\theta,\omega})$. For any $\theta \in B_{d_\theta}(\theta^i, \varepsilon)$ and any $\omega \in B_{p-1}(\omega_j, \gamma)$, we have
\begin{align*}
    \abs{\Delta_m(\theta, \omega)-\Delta_m(\theta^i, \omega_j)}&\leq \abs{\Delta_m(\theta, \omega)-\Delta_m(\theta, \omega_j)}+\abs{\Delta_m(\theta, \omega_j)-\Delta_m(\theta^i,\omega_j)}\\
    & \leq \abs{\Delta_m(\theta, \omega)-\Delta_m(\theta, \omega_j)}+ L_*\varepsilon
\end{align*}

Using the four-point form of the triangle inequality, we have
\begin{align*}
    \abs{\Delta_m(\theta, \omega)-\Delta_m(\theta, \omega_j)} & \leq W_1(\Q^{\theta}_{m,\omega}, \Q^{\theta}_{m,\omega_j}) + W_1(P_{\theta,\omega}, P_{\theta,\omega_j}) \\
    & \leq \frac{1}{m}\sum_{i=1}^m\norm{\tau(\theta, Z_i)}\norm{\omega-\omega_j} + \E_{X\sim P_\theta}\norm{X}\norm{\omega-\omega_j} \leq 2C_X\gamma.
\end{align*}

Thus, we can rewrite the supremum  as
\begin{align*}
\sup_{\theta\in \Theta}\sup_{\omega\in \mathcal{S}^{p-1}} \Delta_m(\theta, \omega)  &\leq \max_{i}\max_j \Delta_m(\theta^i, \omega_j) + 2C_X\gamma+L_*\varepsilon.
\end{align*}

Combing the above with the inequality in \eqref{eq:W1_univariate_concentrate}, we have the union bound as
\begin{equation}\label{eq:DKW_union}
    P(\max_{i,j} \Delta_m(\theta^i, \omega_j))\leq c_1 N_\varepsilon M_\gamma \exp(-c_2mt^2).
\end{equation}

Setting $t=\kappa m^{-1/2}\sqrt{\log m}, \varepsilon =t/(3L_*), \gamma =t/(3C_X)$, we have
\begin{align*}
    N_\varepsilon M_\gamma \exp(-2mt^2)&\leq (3RL_*)^{d_\theta}(3C_x)^{p-1} t^{-(d_\theta+p-1)}\exp(-c_2mt^2)\\
    &\leq C_2 m^{(d_\theta+p-1)/2}(\log m)^{-\frac{d_\theta+p-1}{2}}m^{-c_2\kappa^2}
\end{align*}
for some constant $C_2\geq (3RL_*)^{d_\theta}(3C_x)^{p-1}$. For fixed $d_\theta, p$, we can choose $\kappa$ large enough such that $c_2\kappa^2 = \beta+ (d_\theta+p-1)/2$ with $\beta>0$, then
\[
    P(\sup_{\theta\in \Theta}\sup_{\omega\in \mathcal{S}^{p-1}} \Delta_m(\theta, \omega)>t)\leq C_2 m^{-\beta}  (\log m)^{-(d_\theta+p-1)/2}
\]
Thus we have $\sup_{\theta\in \Theta}\sup_{\omega\in \mathcal{S}^{p-1}} d_{W1}(\Q^{\theta}_{m,\omega}, P_{\theta,\omega})= O_p(m^{-1/2}\sqrt{\log m})$.

We can further refine the bound by using the generic-chaining bound \citep{dudley1967sizes}, as
\begin{align*}
\E \big[\sup_{\theta,\omega}\Delta_m(\theta, \omega)  \big] &\leq  \frac{C_3}{\sqrt m} \int_{0}^1 \sqrt{\log N_{\Theta\times\mathcal{S}^{p-1}}(\varepsilon)}\diff\varepsilon
\end{align*}
where $N_{\Theta\times\mathcal{S}^{p-1}}(\varepsilon)$ is the covering number of the joint space of $\Theta \times\mathcal{S}^{p-1}$ with balls of radius $\varepsilon$, and we can bound it as $N_{\Theta\times\mathcal{S}^{p-1}}(\varepsilon)\leq (R/\varepsilon)^{d_\theta}(1/\varepsilon)^{p-1}\leq C_4\varepsilon^{-(d_\theta+p-1)}$. Plugging this number into the inequality above, we have
\[
\E \big[\sup_{\theta,\omega}\Delta_m(\theta, \omega)  \big] \leq \frac{C_3'\sqrt{d_\theta+p-1}}{\sqrt m}
\]
where $C_3'$ is again a constant depending on $C_3$ and $C_4$.

Using McDiarmid's inequality on the subgaussian variables and the fact that $\Delta_m$ is Lipschitz in Z, we have
\[
P\left(\abs{\sup_{\theta, \omega}\Delta_m(\theta, \omega) - E\sup_{\theta, \omega}\Delta_m(\theta, \omega)} >t\right)\leq 2 \exp(-C_5 t^2/(m\sigma^2))
\]
where $\sigma:=\sup_\theta \E \norm{{\tau(\theta, Z)}}_{\psi_2}$ is from the subgaussian assumption, and $C_5$ is another constant.

Taking $t=\sigma m^{-1/2}$, we have
\[
\sup_{\theta,\omega}\Delta_m(\theta, \omega) =O_p(m^{-1/2}).
\]

\end{proof}

\subsection{Regularity Assumpions for \Cref{thm:post_convergence_iid}}\label{sec:regularity_conditions_thm}

Here we list the regularity assumptions required for the convergence of the approximated posterior distribution in the i.i.d.\ setting in \Cref{thm:post_convergence_iid}.

Denote the maximum likelihood estimator (MLE) by $\widehat{\theta}^{\text{\rm MLE}}_n := \arg\max_\theta \pn_\theta(\Xn^\ast)$. The next assumption requires that both the true posterior $\pi_n$
and the MLE $\widehat{\theta}^{\text{\rm MLE}}_n$ concentrate in a neighborhood of $\theta^\ast$, a condition that is standard in the literature on Bayesian and frequentist large-sample theory for parametric models (see, e.g., \cite{ghosal2000convergence,spokoiny2012parametric}).

\begin{assumption}[Concentration of the true posterior and MLE]\label{ass:true_converge} There exists some constant $C_1>0$, such that for every $t>\sqrt{\frac{\log n}{n}}$, we have
\begin{align*}
    \E_{\Pn_{\theta^\ast}} \Pi_n\big( \norm{\theta-\theta^\ast} >t \mid \Xn^\ast \big) \ &\leq\   \exp(-{C_1 n t^2}), \\
    \Pn_{\theta^\ast} \big( \|\widehat{\theta}^{\text{\rm MLE}}_n-\theta^\ast\| >t \mid \Xn^\ast \big) \ &\leq \  \exp(-{C_1 n t^2}).
\end{align*}
\end{assumption}

We now introduce two assumptions that are essential for the convergence of LMC under the true score and for controlling this discretization error.

\begin{assumption}[True Score Lipschitz continuity]\label{ass:lipschitz_score}The true likelihood score $s^\ast$ is uniformly $\lambda_L$-Lipschitz in $\theta$ over $\R^{d_\theta}$. That is, for every $x\in \mX $ and every $\theta_1, \theta_2\in \Theta$, we have
    $
    \norm{s^\ast(\theta_1, x) -s^\ast(\theta_2, x)}\leq \lambda_L\norm{\theta_1-\theta_2}.
    $
\end{assumption}
\Cref{ass:lipschitz_score} is essential for guaranteeing that the drift
$s^\ast(\theta, X_i)$ grows at most linearly, ensuring that the
Langevin diffusion admits a unique stationary measure
\citep{chewi2024analysis,lee2022convergence}. Our proof for controlling the score error also requires this assumption in order to carry out a perturbation analysis.

\begin{assumption}[Log-Sobolev inequality]\label{ass:log_sobolev} The posterior distribution of $\sqrt n(\theta-\theta^\ast)$ satisfies a log-Sobolev inequality with constant $C_\text{LSI}$, i.e., for each function $f\in C_0^\infty(\R^{d_\theta})$,
    we have $\text{Ent}(f^2)\,\leq \,2C_\text{LSI}\, \E_{\alpha := \sqrt n (\theta- \theta^\ast)}[\norm{\nabla_\alpha f}^2]$,
    where the entropy is defined as $\text{Ent}(g)=\E[g\log g]-\E[g]\log\E[g]$.
\end{assumption}
\noindent Here we impose the condition on the transformed variable $\alpha:=\sqrt n(\theta-\theta^*)$, since \Cref{ass:true_converge} suggests that the distribution of $\alpha$ is non-degenerate while $\theta$ concentrates to $\theta^*$.
\Cref{ass:log_sobolev} is a commonly used assumption to ensure the convergence
of LMC \citep{chewi2024analysis,lee2022convergence}. In
the Bayesian setting, this assumption is mild when $n$ is large, since the
posterior is approximately Gaussian by the Bernstein-von Mises (BvM) theorem,
and the log-Sobolev inequality is then immediately satisfied
\citep{nickl2022polynomial,tang2024computational}.

Next we make some regularity assumptions on the true score and also on the estimated score. Similar regularity conditions on score functions are standard in the literature to guarantee the asymptotic normality of the MLE and the posterior; see, for example, \cite{ghosh2003bayesian,vanderVaart1998}.

\begin{assumption}\label{ass:bdd_derivatives}
There exist some constant $C_5 > 0$ and $\delta>0$, such that
\[\E_{X\sim P_{\theta^\ast}}\big[\norm{s^\ast(\theta^\ast, X)}^2\big], \E_{X\sim P_{\theta^\ast}}\big[\norm{\widehat{s}(\theta^\ast, X)}^2\big], \E_{X\sim P_{\theta^\ast}}\big[ \lVert \nabla_\theta s^{\ast}(\theta^\ast, X)\rVert_F^2\big],  \; \E_{X\sim P_{\theta^\ast}} \big[\lVert \nabla_\theta \widehat{s}(\theta^\ast, X)\rVert_F^2\big] \]
are all finite and bounded by $C_5^2$, and for each $x$,
\[
\sup_{\theta:\,\|\theta-\theta^\ast\|\leq \delta} \bigg\{\sum_{j=1}^{d_\theta} \big\lVert \nabla_\theta^2 s_j^\ast(\theta, x) \big\rVert_F^2 \bigg\}\leq M(x),\ \ \mbox{and}\ \ \sup_{\theta:\,\|\theta-\theta^\ast\|\leq \delta}\bigg\{\sum_{j=1}^{d_\theta}\big\lVert \nabla_\theta^2 \widehat{s}_j(\theta, x) \big\rVert_F^2 \bigg\} \leq M(x).
\]
where $s_j^\ast(\theta, X)$ denotes the j-th coordinate of $s^\ast(\theta, X)$, and the function $M(\cdot)$ satisfies $\E_{X\sim P_{\theta^\ast}}[M(X)]\leq C_5^2$. Additionally, we also assume $\norm{\widehat s(\theta, \Xn^\ast)}_2\leq C_3n\,(1+\lVert \theta - \theta^\ast\rVert_2)$ and $\sup_{\theta \in \mA_{n,1}}\norm{\widehat s(\theta, \Xn^\ast)}_2\leq C_3\sqrt{n\log n}$ holds with probability at least $1-n^{-1}$. Here, the set $\mA_{n,1}$ is defined in Assumption~\ref{ass:uniform_sm_err_single}.
\end{assumption}
For the last part of \Cref{ass:bdd_derivatives},
note that the Bernstein--von Mises theorem suggests that
the true score $s^\ast(\theta, \Xn^\ast)$ is of order $O_p(\sqrt{n \log n})$
within the neighborhood $\mA_{n,1}$.
This motivates us to assume that the estimated score
$\widehat s(\theta, \Xn^\ast)$ satisfies a similar bound.
Otherwise, one can always clip $\widehat s(\theta, \Xn^\ast)$ during the sampling process, which corresponds to a projection operator that never increases
the score matching error.

\subsection{Proof of \Cref{thm:precond_convergence}}\label{sec:precond_convergence_pf}

We refer the latent variables corresponding to the observed data $\Xn^\ast$ as $\Zn^\ast$, such that $\Xn^\ast= \tau(\theta^\ast, \Zn^\ast)$, then we can write
\begin{equation}\label{eq:theta_matching}
    \arg\min_\theta d_\text{SW}\big(\tau(\theta, \Zm), \Xn^\ast\big)= \arg\min_\theta d_\text{SW}\big(\tau(\theta, \Zm), \tau(\theta^\ast, \Zn^0)\big).
\end{equation}

Although the solution to \eqref{eq:theta_matching} might not be unique, we can show that for any solution $\widehat\theta_{m,n} \in \Big\{ \theta: \arg\min_\theta d_\text{SW}\big(\tau(\theta, \Zm), \tau(\theta^\ast, \Zn^0)\big)\Big\}$, using the triangle inequality, we have
\begin{align*}
d_\text{SW}\big(\tau(\widehat\theta_{m,n}, \Zm), \tau(\theta^\ast, \Zn^0)\big) &\leq d_\text{SW}\big(\tau(\theta^\ast, \Zm),\tau(\theta^\ast, \Zn^0)\big)\\
&\leq d_\text{SW}(P_{\theta^\ast}, \Q_m^{\theta^\ast}) + d_\text{SW}(P_{\theta^\ast}, \PP_n) = O_p(n^{-\frac{1}{2}}+m^{-\frac{1}{2}}).
\end{align*}

Furthermore, we show the distance between the two distributions $d(P_{\theta^\ast}, P_{\widehat\theta_{m,n}})$ is bounded by the distance between the two datasets $d(\tau(\widehat\theta, \Zm), \tau(\theta^\ast, \Zn^0))$.
\begin{align*}
    d_\text{SW}(P_{\widehat\theta_{m,n}}, P_{\theta^\ast}) &\leq  d_\text{SW}(P_{\widehat\theta_{m,n}},\Q^{\widehat\theta_{m,n}}_m)+ d_\text{SW}(P_{\theta^\ast}, \PP_n) +d_\text{SW}\big(\Q^{\widehat\theta_{m,n}}_m, \PP_n\big)\\
    &\leq \sup_{\theta\in \Theta} d_\text{SW}(P_\theta, \Q^\theta_m)+d_\text{SW}(P_{\theta^\ast}, \PP_n) +d_\text{SW}\big(\tau(\widehat\theta_{m,n}, \Zm), \tau(\theta^\ast, \Zn^0)\big)\\
    &= O_p(m^{-\frac{1}{2}})+O_p(n^{-\frac{1}{2}})+O_p(m^{-\frac{1}{2}}+n^{-\frac{1}{2}})= O_p(m^{-\frac{1}{2}}+n^{-\frac{1}{2}}).
\end{align*}
Combing the inequality above and \Cref{ass:lipschitz}, we
\[
\norm{\theta_1-\theta_2} = O_p(m^{-\frac{1}{2}}+n^{-\frac{1}{2}}).
\]

\subsection{Proof of \Cref{thm:post_convergence_n_obs}}\label{sec:post_convergence_n_obs_pf}

The proof consists of two parts. We first show how we control the discretization error, which provides theoretical guidance on how we should choose the step size $\tau_n$ and the initial distribution. Later we focus on analyzing the score error.

 Here we introduce the local variable $\alpha= \sqrt{n} (\theta-\theta^\ast)$, then its scores satisfy $s^\ast_\alpha(\alpha,x)= \frac{1}{\sqrt n}s^\ast(\theta, x)$ and $s_{\widehat\phi,\alpha}(\alpha, x) =\frac{1}{\sqrt n}s_{\widehat\phi} (\theta, x)$. It is easier to work with $\alpha$ since it has constant independent of $n$. We refer all transformed densities and functions under $\alpha$ with subscript $\alpha$, such as $\tau_\alpha, \pi_\alpha$. We  rewrite the Langevin Monte Carlo update as
\[
\alpha^{(k)} = \alpha^{(k-1)} +\tau_\alpha  \big( s_{\widehat\phi, \alpha}(\alpha^{(k-1)}, \Xn^\ast)+ \nabla_\alpha\log \pi_\alpha(\alpha) \big)+\sqrt{2\tau_\alpha} U_k
\]
with $\tau_\alpha= n\tau_n$. For notational simplicity, we write $\widehat s:=s_{\widehat\phi}$ and $\widehat s_\alpha:= s_{\widehat\phi, \alpha}$.

{\bf Part 1: The discretization error and burn-in error.} Since total variation distance is invariant under any bijective transformation, we can rewrite the discretiztion error as
\begin{align*}
    d_\text{TV}(\pi^{K\tau_n} (\theta\mid \Xn^\ast), \pi_n(\theta\mid\Xn^\ast))=d_\text{TV}(\pi^{K\tau_\alpha}(\alpha\mid\Xn^\ast), \pi_n(\alpha\mid \Xn^\ast))
\end{align*}

Using results from \citet[Lemmas D.1 and H.4]{ding2024nonlinear}, we have, at the terminal time $T=K\tau_\alpha$,
\begin{align*}
d^2_{TV}\big(\pi^{T}(\alpha\mid \Xn^\ast), \pi_n(\alpha\mid \Xn^\ast) \big) & \leq \frac{1}{4}d_{\chi^2}\big(\pi^T(\alpha\mid \Xn^\ast),\pi_n(\alpha\mid \Xn^\ast)\big)\\
&\leq \frac{1}{4}\exp\big(-\frac{T}{5C_\text{LSI}}\big)\eta_\chi^2 + 35 d_\theta C_\text{LSI} \lambda_L^2 \tau_\alpha
\end{align*}
where the step size $\tau_\alpha$  and the initial distribution $\pi^0(\cdot\mid \Xn^\ast )$ satisfy
\[
400 d_\theta C_\text{LSI}\lambda_L^2 \tau_\alpha\leq 1, \qquad d_{\chi^2}(\pi^0(\alpha\mid \Xn^\ast),\pi_n(\cdot\mid \Xn^\ast) )\leq \eta_\chi^2.
\]

This suggests that the step size should be $\tau_n =\tau_\alpha/n = O(n^{-1})$.

{\bf Part 2: The score error.}

 The score error is induced by using the inexact score $s_{\widehat\phi}$. We first break down the score error as summation of score-matching errors at each Langevin update, by applying the Girsanov theorem \citep{chen2023sampling}
\begin{align}
&d_\text{TV}^2\big(\widehat \pi^{K\tau_n}(\theta\mid \Xn^\ast), \pi^{K\tau_n}(\theta\mid \Xn^\ast)\big)\\
 &= d_\text{TV}^2\big(\pi^{K\tau_\alpha}(\alpha\mid \Xn^\ast),  \widehat \pi^{K\tau_\alpha}(\alpha \mid \Xn^\ast)\big) \nonumber\\
 &\leq \frac{1}{2}d_{KL} (\pi^{K\tau_\alpha}(\alpha\mid \Xn^\ast) ||  \widehat \pi^{K\tau_\alpha}(\alpha \mid \Xn^\ast)) \quad \text{(Pinsker's inequality)} \nonumber\\
 &\leq  \frac{1}{8} \sum_{k=0}^{K-1} \tau_\alpha \E_{\alpha\sim \pi^{k\tau_\alpha}\big(\alpha\mid \Xn^\ast\big)}\norm{\widehat s_\alpha(\alpha, \Xn^\ast)- s^\ast_\alpha(\alpha, \Xn^\ast)}^2.\label{eq:girsanov}
\end{align}

Using the Cauchy-Schwarz inequality, we further bound each term in the  summation in \eqref{eq:girsanov} as
\begin{align*}
&\E_{\alpha\sim \pi^{k\tau_\alpha}(\alpha\mid \Xn^\ast)}\norm{\widehat s_\alpha(\alpha, \Xn^\ast)- s^\ast_\alpha(\alpha, \Xn^\ast)}^2  \\
=& \E_{\alpha\sim \pi_n(\alpha\mid \Xn^\ast)} \Big[ \norm{\widehat s_\alpha(\alpha, \Xn^\ast)- s^\ast_\alpha(\alpha, \Xn^\ast)}^2 \frac{\pi^{k\tau_\alpha}(\alpha\mid \Xn^\ast)}{\pi_n(\alpha\mid \Xn^\ast)} \Big]\\
= & \sqrt{\E_{\alpha\sim \pi_n(\alpha\mid \Xn^\ast)} \norm{\widehat s_\alpha(\alpha, \Xn^\ast)- s^\ast_\alpha(\alpha, \Xn^\ast)}^4}\sqrt{\E_{\alpha\sim \pi_n(\alpha\mid \Xn^\ast)}\Big[ \frac{\pi^{k\tau_\alpha}(\alpha\mid \Xn^\ast)}{\pi_n(\alpha\mid \Xn^\ast)}\Big]^2}
\end{align*}

Following Lemma D.1 from \citet{ding2024nonlinear}, we can bound the second term as
\begin{align*}
\sqrt{\E_{\alpha\sim \pi_n(\alpha\mid \Xn^\ast)}\Big[ \frac{\pi^{k\tau_\alpha}(\alpha\mid \Xn^\ast)}{\pi_n(\alpha\mid \Xn^\ast)}\Big]^2} &= \sqrt{d_{\chi^2}\big({\pi^{k\tau_\alpha}(\alpha\mid \Xn^\ast)}||{\pi_n(\alpha\mid \Xn^\ast)}\big) +1} \\
& \leq \sqrt{\exp(-\frac{k\tau_\alpha}{5C_\text{LSI}})\eta_\chi^2+2}.
\end{align*}

For the first term, its expectation w.r.t. $\Xn^\ast$ can be bounded as
\begin{align*}
&\E_{\Xn^\ast \sim \Pn_{\theta^\ast}} \sqrt{\E_{\alpha\sim \pi_n(\alpha\mid \Xn^\ast)} \norm{\widehat s_\alpha(\alpha, \Xn^\ast)- s^\ast_\alpha(\alpha, \Xn^\ast)}^4} \\
=& \E_{\Xn^\ast \sim \Pn_{\theta^\ast}} \sqrt{\E_{\theta\sim \pi_n(\theta\mid \Xn^\ast)} \normbigg{\frac{1}{\sqrt n}\widehat s(\theta, \Xn^\ast)- \frac{1}{\sqrt n}s^\ast(\theta, \Xn^\ast)}^4} \\
\leq& \sqrt{\E_{(\theta, \Xn^\ast)\sim \pi_n(\theta\mid \Xn^\ast)\pn_{\theta^\ast}(\Xn^\ast)} \normbigg{\frac{1}{\sqrt n}\widehat s(\theta, \Xn^\ast)- \frac{1}{\sqrt n}s^\ast(\theta, \Xn^\ast)}^4} \\
\lesssim& \sqrt{(\log n)^2 \varepsilon_{N,n,1}^2 + (\log n)^2 \varepsilon_{N,n,2}^2 + \frac{(\log n)^3}{n}}
\end{align*}
where the first step is due to the scale transform between $\alpha$ and $\theta$, the second step is by Jensen's inequality, and the last step is by \Cref{lem:score_err_fourth_moment}.

This leads to the final bound on the score error as
\begin{align*}
    & \E_{\Xn^\ast \sim \Pn_{\theta^\ast}} \Big[d_\text{TV}^2\big(\widehat \pi^{K\tau_n}(\theta\mid \Xn^\ast), \pi^{K\tau_n}(\theta\mid \Xn^\ast)\big) \Big]\\
    \lesssim  &\sqrt{(\log n)^2 \varepsilon_{N,n,1}^2 + (\log n)^2 \varepsilon_{N,n,2}^2 + \frac{(\log n)^3}{n}} \times \tau_\alpha \sum_{k=0}^{K-1}\sqrt{\exp(-\frac{k\tau_\alpha}{5C_\text{LSI}})\eta_\chi^2+2}\\
    &\lesssim \sqrt{(\log n)^2 \varepsilon_{N,n,1}^2 + (\log n)^2 \varepsilon_{N,n,2}^2 + \frac{(\log n)^3}{n}}  (K\tau_\alpha+ \eta_\chi C_\text{LSI}).
\end{align*}
where the bound of the summation in the last step is because $\sqrt{\exp(-\frac{k\tau_\alpha}{5C_\text{LSI}})\eta_\chi^2+2} \leq 2\exp(-\frac{k\tau_\alpha}{10C_\text{LSI}})\eta_\chi + 2\sqrt{2}$ for each $k$, and $\sum_{k=0}^{K-1}\exp(-\frac{k\tau_\alpha}{10C_\text{LSI}}) \leq \frac{20C_\text{LSI}}{3\tau_\alpha}$ under the chosen $\tau_\alpha$. Finally, adding the two sources of error concludes the proof of \Cref{thm:post_convergence_n_obs}.

\begin{lemma}\label{lem:score_err_fourth_moment}
Under the assumptions in \Cref{thm:post_convergence_n_obs}, we have
\begin{align*}
\E_{(\theta, \Xn^\ast)\sim \pi_n(\theta\mid \Xn^\ast)\pn_{\theta^\ast}(\Xn^\ast)} \norm{\frac{1}{\sqrt n}\widehat s(\theta, \Xn^\ast)- \frac{1}{\sqrt n}s^\ast(\theta, \Xn^\ast)}^4
\lesssim  (\log n)^2 \varepsilon_{N,n,1}^2 + (\log n)^2 \varepsilon_{N,n,2}^2 + \frac{(\log n)^3}{n}
\end{align*}
\end{lemma}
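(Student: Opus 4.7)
The plan is to split the expectation based on whether the parameter $\theta$ falls into the concentration neighborhood $\mA_{n,1}$, and then reduce the fourth moment to a second moment using a sup-norm bound. On the complement $\mA_{n,1}^c$, I would invoke Assumption~\ref{ass:true_converge} to show that the $\Pn_{\theta^\ast}$-averaged posterior mass decays as $\exp(-C_1 C_0^2 \log n) \le n^{-6}$ under our choice of $C_0$. On $\mA_{n,1}^c$ the last part of Assumption~\ref{ass:bdd_derivatives} gives $\|\widehat s(\theta,\Xn^\ast)\| \le C_3 n(1+\|\theta-\theta^\ast\|)$, and a matching bound for $s^\ast$ follows from Assumption~\ref{ass:lipschitz_score} combined with Assumption~\ref{ass:bdd_derivatives}. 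Hence the fourth-moment integrand grows only polynomially in $n$ and $\|\theta-\theta^\ast\|$, and integrating against the Gaussian-type posterior tail yields a contribution of order $n^{-k}$ for large $k$, which is absorbed into $(\log n)^3/n$.

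On the good event $\mA_{n,1}$, the uniform bound $\|\widehat s(\theta,\Xn^\ast)\|,\ \|s^\ast(\theta,\Xn^\ast)\| = O(\sqrt{n\log n})$ from Assumption~\ref{ass:bdd_derivatives} reduces the fourth moment to a second moment at the cost of one factor of $\log n$:
\[
\big\|\tfrac{1}{\sqrt n}(\widehat s - s^\ast)\big\|^4 \ \le\ C\,\log n\ \big\|\tfrac{1}{\sqrt n}(\widehat s - s^\ast)\big\|^2.
\]
It therefore suffices to show the second-moment bound
$\E\big[\|\widehat s - s^\ast\|^2 \mathbf{1}_{\mA_{n,1}}\big]/n \ \lesssim\ \log n\,(\varepsilon_{N,n,1}^2 + \varepsilon_{N,n,2}^2) + (\log n)^2/n$.
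For this I would change measure from the posterior $\pi_n(\theta\mid\Xn^\ast)\pn_{\theta^\ast}(\Xn^\ast)$ to the training law $q(\theta)\pn_\theta(\Xn)$ under which the uniform score-matching error $\varepsilon_{N,n,1}^2$ and the curvature-matching error $\varepsilon_{N,n,2}^2$ are defined. The Radon-Nikodym derivative is controlled on $\mA_{n,1}$ since both densities concentrate in an $O(\sqrt{\log n/n})$ neighborhood of $\theta^\ast$, producing a factor that is at most logarithmic in $n$ through Hellinger-type comparisons. A first-order Taylor expansion of $\widehat s(\theta,\cdot) - s^\ast(\theta,\cdot)$ about $\theta^\ast$ then relates the pointwise error at $\theta \in \mA_{n,1}$ to $\varepsilon_{N,n,1}^2$ at zeroth order and to $\varepsilon_{N,n,2}^2$ at first order, while the second-order Taylor remainder, bounded by the dominating function $M(x)$ from Assumption~\ref{ass:bdd_derivatives} and integrated over the $O(\sqrt{\log n/n})$ radius, yields the $(\log n)^2/n$ residual.

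The main obstacle I expect is executing the change of measure cleanly without incurring spurious $n^{O(1)}$ factors. A naive Cauchy-Schwarz on the likelihood ratio $\pn_{\theta^\ast}/\pn_\theta$ produces an $\exp(O(n\|\theta-\theta^\ast\|^2))$ factor that can be as large as $n^{O(1)}$ on $\mA_{n,1}$, which would destroy the logarithmic rate. Keeping this change-of-measure factor at logarithmic order requires fully exploiting the narrow $\sqrt{\log n/n}$ radius of $\mA_{n,1}$ and the local Fisher information structure that the curvature-matching penalty enforces; this is also where the extra factor $\log n$ multiplying $\varepsilon_{N,n,1}^2$ and $\varepsilon_{N,n,2}^2$ in the final bound ultimately originates.
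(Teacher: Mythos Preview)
Your change-of-measure step is the problem, and in fact it is unnecessary. You correctly sense that passing from the posterior data law $\pn_{\theta^\ast}$ to $\pn_\theta$ via a likelihood ratio would cost an $\exp(O(n\|\theta-\theta^\ast\|^2))$ factor that is $n^{O(1)}$ on $\mA_{n,1}$; this cannot be repaired by ``Hellinger-type comparisons'' without already knowing the answer. The paper sidesteps this entirely by \emph{anchoring the Taylor expansion at $\theta^\ast$ before touching the data law}. Observe that $\varepsilon_{N,n,1}^2$ is defined as a supremum over $\theta\in\mA_{n,1}$ of $\E_{\Xn\sim\Pn_\theta}[\cdot]$, so taking $\theta=\theta^\ast$ gives $\E_{\Pn_{\theta^\ast}}\big[\|\tfrac{1}{\sqrt n}(\widehat s-s^\ast)(\theta^\ast,\Xn^\ast)\|^2\big]\le \varepsilon_{N,n,1}^2$ directly, with no change of measure. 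The remaining piece $\tfrac{1}{\sqrt n}[(\widehat s-s^\ast)(\theta,\Xn^\ast)-(\widehat s-s^\ast)(\theta^\ast,\Xn^\ast)]$ is then Taylor-expanded in $\theta$ with $\Xn^\ast$ fixed; the first-order term is $[I(\theta^\ast)-\widehat I(\theta^\ast)]\sqrt n(\theta-\theta^\ast)$, whose operator norm is bounded by $\varepsilon_{N,n,1}+\varepsilon_{N,n,2}$ via the curvature identity, and the zeroth- and second-order residuals give the $(\log n)^3/n$ term through the i.i.d.\ variance calculation and the $M(x)$ envelope. So the Taylor expansion does all the work; the change of measure should be deleted, not executed.

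A smaller gap: you split only on $\mA_{n,1}$ and claim a deterministic $O(\sqrt{n\log n})$ bound on $\|s^\ast(\theta,\Xn^\ast)\|$ there from Assumptions~\ref{ass:lipschitz_score} and~\ref{ass:bdd_derivatives}. But those assumptions give no pointwise control of $s^\ast$ at a fixed $\theta$. The paper instead uses $s^\ast(\MLE,\Xn^\ast)=0$ together with the $n\lambda_L$-Lipschitz bound to get $\|s^\ast(\theta,\Xn^\ast)\|\le n\lambda_L\|\theta-\MLE\|$, which is $O(\sqrt{n\log n})$ only once one also intersects with the MLE concentration event $\mA_{n,2}=\{\sqrt n\|\MLE-\theta^\ast\|\le C_0\sqrt{\log n}\}$. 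The tail contribution from $\mA_{n,2}^c$ is then handled by Cauchy--Schwarz against the eighth moment, exactly as you proposed for $\mA_{n,1}^c$.
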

\begin{proof}
For notational simplicity, we use $\E_{\Pi_n \cdot \Pn_{\theta^\ast}}$ to denote $\E_{(\theta, \Xn^\ast)\sim \pi_n(\theta\mid \Xn^\ast)\pn_{\theta^\ast}(\Xn^\ast)}$ in this proof. Recall that $\mA_{n,1}:= \{\theta: \norm{\sqrt n(\theta- \theta^\ast)}_2 \leq C_0\sqrt{\log n}\}$ in \Cref{ass:uniform_sm_err}. We define an event $\mA_{n,2} := \Big\{\X_n^\ast: \sqrt n \Big\lVert \widehat{\theta}^{\text{\rm MLE}}_n - \theta^\ast \Big\rVert\leq C_0\sqrt{\log n}\Big\}$ and let $\mA_{n,3}:=\mA_{n,1} \cap \mA_{n,2}$. Now, we split the integral into two parts
\begin{align*}
&\E_{\Pi_n \cdot \Pn_{\theta^\ast}} \normbigg{\frac{1}{\sqrt n}\widehat s(\theta, \Xn^\ast)- \frac{1}{\sqrt n}s^\ast(\theta, \Xn^\ast)}^4 \\
=& \E_{\Pi_n \cdot \Pn_{\theta^\ast}} \bigg[\normbigg{\frac{1}{\sqrt n}\widehat s(\theta, \Xn^\ast)- \frac{1}{\sqrt n}s^\ast(\theta, \Xn^\ast)}^4 \, \I_{\mA_{n,3}}\bigg] &\text{(I)}\\
&+ \E_{\Pi_n \cdot \Pn_{\theta^\ast}} \bigg[\normbigg{\frac{1}{\sqrt n}\widehat s(\theta, \Xn^\ast)- \frac{1}{\sqrt n}s^\ast(\theta, \Xn^\ast)}^4\, \I_{\mA_{n,3}^C}\bigg] &\text{(II)}
\end{align*}
For term (I), we have $\normbig{\frac{1}{\sqrt n}\widehat s(\theta, \Xn^\ast)}\leq C_3 \sqrt{\log n}$ by \Cref{ass:uniform_sm_err} and $\big\lVert \frac{1}{\sqrt n} s^\ast(\theta, \Xn^\ast) \big\rVert \leq \lambda_L\big\lVert \sqrt n(\theta - \MLE) \big\rVert \leq 2\lambda_LC_0 \sqrt{\log n}$ on $\mA_{n,3}(\Xn^\ast)$ by \Cref{lem:true_sco_bd} and triangle inequality. Thus, we can bound the fourth moment by the product of the second moment and the sup norm as
\begin{align*}
  \text{(I)}\lesssim & \log n \E_{\Pi_n \cdot \Pn_{\theta^\ast}}\Big[\normbig{\frac{1}{\sqrt n}\widehat s(\theta, \Xn^\ast)- \frac{1}{\sqrt n}s^\ast(\theta, \Xn^\ast)}^2 \, \I_{\mA_{n,3}}\Big] \\
  \lesssim & \log n \E_{\Pn_{\theta^\ast}(\Xn^\ast)}\Big[ \normbig{\frac{1}{\sqrt n}\widehat s(\theta^\ast, \Xn^\ast)- \frac{1}{\sqrt n}s^\ast(\theta^\ast, \Xn^\ast)}^2 \Big] \qquad \text{(I.1)} \\
    &+ \log n \E_{\Pi_n \cdot \Pn_{\theta^\ast}}\Big[\normbig{\frac{1}{\sqrt n}(\widehat s-s^\ast)(\theta, \Xn^\ast)- \frac{1}{\sqrt n}(\widehat s-s^\ast)(\theta^\ast, \Xn^\ast)}^2 \I_{\mA_{n,3}}\Big]  \;\text{(I.2)}
\end{align*}
where the last step is due to triangle inequality. Term (I.1) is bounded by the uniform score-matching error $\varepsilon_{N,n,1}^2$. Term (I.2) can be simplified by applying Taylor's expansion for $\widehat s - s^\ast$ around $\theta^\ast$ as
\begin{equation}\label{eq:taylor_sdiff}
\begin{split}
\frac{1}{\sqrt n} (\widehat{s} - s^\ast)(\theta, \Xn^\ast)
=& \frac{1}{\sqrt n} (\widehat{s} - s^\ast)(\theta^\ast, \Xn^\ast) +  \frac{\nabla_\theta(\widehat{s} - s^\ast)(\theta^\ast, \Xn^\ast)}{n} \sqrt{n}(\theta - \theta^\ast) \\
&+ \frac{1}{\sqrt{n}}\cdot\frac{T(\theta^\ast, \theta, \Xn^\ast)}{2n} [\sqrt{n}(\theta - \theta^\ast),\sqrt{n}(\theta - \theta^\ast)],
\end{split}
\end{equation}
where $T(\theta^\ast, \theta, \Xn^\ast)$ is obtained by applying a second-order Taylor expansions to each coordinate of $(\hat s-s^*)$. Specifically, let $\nabla_{\theta,j} (\hat s-s^*):\R^{d_\theta\times d_\theta\times d_\theta} \mapsto \R^{d_\theta\times d_\theta}$ be the gradient with respect to $\theta$ on the $j$-th coordinate of $(\hat s-s^\ast)$, then
\[T(\theta^\ast, \theta, \Xn^\ast) = [\nabla_{\theta,1}^2 (\widehat{s} - s^\ast)(\theta_1', \Xn^\ast), \dots, \nabla_{\theta,{d_\theta}}^2 (\widehat{s} - s^\ast)(\theta_{d_\theta}', \Xn^\ast)] \in \R^{d_\theta \times d_\theta \times d_\theta}\] is a tensor collecting the corresponding Hessian matrices evaluated at $\theta_j' =c_j\theta+(1-c_j)\theta^\ast$ for some $c_j\in [0,1], j = 1,\cdots,d_\theta$.
For $z\in \R^d$ and tensor $A=[A_1, \dots, A_p] \in \R^{p\times d\times d}$ with $A_1,\dots,A_p \in \R^{d\times d}$, $A[z, z] \in \R^p$ is defined as
\[
A[z, z] := (z^T A_1 z, \dots, z^TA_pz)^T
\]
Next, we use \eqref{eq:taylor_sdiff} to show that the term in (I.2) is close to $\big(I(\theta^\ast)-\widehat I(\theta^\ast)\big)\sqrt n (\theta-\theta^\ast)$ in $L^2$, where $I(\theta) = \mathbb{E}_{X\sim P_{\theta}}[-\nabla_\theta s^\ast(\theta, X)]$ and $\widehat{I}(\theta) = \mathbb{E}_{X\sim P_{\theta}}[-\nabla_\theta \widehat{s}(\theta, X)]$.
{\small
\begin{equation}\label{eq:L2_close}
\begin{split}
& \E_{\Pi_n \cdot \Pn_{\theta^\ast}} \Bigg[ \bigg\lVert \frac{1}{\sqrt n} (\widehat{s} - s^\ast)(\theta, \Xn^\ast) - \frac{1}{\sqrt n} (\widehat{s} - s^\ast)(\theta^\ast, \Xn^\ast) - [I(\theta^\ast) - \widehat{I}(\theta^\ast)]\sqrt{n}(\theta - \theta^\ast) \bigg\rVert^2 \I_{\mA_{n,3}}\Bigg] \\
=& \E_{\Pi_n \cdot \Pn_{\theta^\ast}} \Bigg[\bigg\lVert  \bigg[\frac{\nabla_\theta(\widehat{s} - s^\ast)(\theta^\ast, \Xn^\ast)}{n} - I(\theta^\ast) + \widehat{I}(\theta^\ast)\bigg]\sqrt{n}(\theta - \theta^\ast) \\
&+ \frac{1}{\sqrt{n}}\cdot\frac{T(\theta^\ast, \theta, \Xn^\ast)}{2n} [\sqrt{n}(\theta - \theta^\ast), \sqrt{n}(\theta - \theta^\ast)] \bigg\rVert^2 \I_{\mA_{n,3}}\Bigg] \\
\lesssim& \log n \E_{\Pn_{\theta^\ast}} \Bigg[ \bigg\lVert \frac{\nabla_\theta(\widehat{s} - s^\ast)(\theta^\ast, \Xn^\ast)}{n} - I(\theta^\ast) + \widehat{I}(\theta^\ast)\bigg\rVert_2^2 \Bigg] \quad\text{(III)} \\
&+ 2\E_{\Pi_n \cdot \Pn_{\theta^\ast}} \Bigg[ \bigg\lVert\frac{1}{\sqrt{n}}\cdot\frac{T(\theta^\ast, \theta, \Xn^\ast)}{2n} [\sqrt{n}(\theta - \theta^\ast), \sqrt{n}(\theta - \theta^\ast)] \bigg\rVert^2  \I_{\mA_{n,3}} \Bigg] \quad\text{(IV)} \\
&\lesssim \frac{(\log n)^2}{n}
\end{split}
\end{equation}}
where the first step uses the expansion \eqref{eq:taylor_sdiff}, the second step is by triangle inequality and the fact that $\lVert \sqrt{n} (\theta - \theta^\ast) \rVert \le C_0\sqrt{\log n}$ on $\I_{\mA_{n,3}}$, and the last step is because:
\begin{align*}
\text{(III)} &\lesssim \log n \Bigg\{ \E_{\Pn_{\theta^\ast}} \Bigg[ \bigg\lVert \frac{\nabla_\theta\widehat{s}(\theta^\ast, \Xn^\ast)}{n} + \widehat{I}(\theta^\ast)\bigg\rVert_F^2 \Bigg] + \E_{\Pn_{\theta^\ast}} \Bigg[ \bigg\lVert \frac{\nabla_\theta s^\ast(\theta^\ast, \Xn^\ast)}{n} + I(\theta^\ast) \bigg\rVert_F^2 \Bigg] \Bigg\}\\
&\lesssim \frac{\log n}{n}\{ \E_{P_{\theta^\ast}} \lVert \nabla_\theta \widehat{s}(\theta^\ast, \Xn^{\ast})\rVert_F^2 + \E_{P_{\theta^\ast}} \lVert \nabla_\theta s^{\ast}(\theta^\ast, \Xn^{\ast})\rVert_F^2 \} \\
&\lesssim \frac{\log n}{n}  \qquad\qquad \text{(by \Cref{ass:bdd_derivatives})} \\
\text{(IV)} &\leq\frac{1}{2n} \E_{\Pi_n \cdot \Pn_{\theta^\ast}} \bigg[ \sum_{j=1}^{d_\theta} \lVert \sqrt{n}(\theta - \theta^\ast)  \rVert^4 \cdot \bigg\lVert \frac{\nabla_{\theta, j}^2 (\widehat{s} - s^\ast)(\theta_j', \Xn^{\ast})}{n} \bigg\rVert_F^2 \I_{\mA_{n,3}} \bigg] \\
&\lesssim \frac{(\log n)^2}{n} \qquad\qquad \text{(by \Cref{ass:bdd_derivatives})}
\end{align*}

Now, with \eqref{eq:L2_close} and triangle inequality, we can rewrite (I.2) as
\begin{align*}
    \text{(I.2)} &\lesssim \log n \bigg[\frac{(\log n)^2}{n} + \E\big[\big\lVert [I(\theta^\ast) - \widehat{I}(\theta^\ast)] \sqrt{n}(\theta - \theta^\ast) \big\rVert^2 \I_{\mA_{n,3}} \big] \bigg] \\
    &\lesssim (\log n)^2 \varepsilon_{N,n,1}^2 + (\log n)^2 \varepsilon_{N,n,2}^2 +  \frac{(\log n)^3}{n}
\end{align*}
where the second step is by \Cref{lem:curv_err_full} and the fact that $\lVert \sqrt{n}(\theta - \theta^\ast) \rVert \leq C_0\sqrt{\log n}$ on $\I_{\mA_{n,3}}$. Therefore, (I) is bounded by
\begin{equation*}
    \text{(I)} \lesssim \text{(I.1)} + \text{(I.2)} \lesssim  (\log n)^2 \varepsilon_{N,n,1}^2 +  (\log n)^2 \varepsilon_{N,n,2}^2 + \frac{(\log n)^3}{n} ,
\end{equation*}

Next, we continue the proof on the second part (II) on the complement set $\mA_{n,3}^c$.
Then, we can bound (II) as
\begin{align*}
    \text{(II)} =& \E_{\Pi_n \cdot \Pn_{\theta^\ast}}\bigg[\bigg\lVert\frac{1}{\sqrt n}\widehat s(\theta, \Xn^\ast)-\frac{1}{\sqrt n}s^\ast(\theta, \Xn^\ast)\bigg\rVert^4 \I_{\mA_{n,3}^C} \bigg] \\
    \leq& \sqrt{\E_{\Pi_n \cdot \Pn_{\theta^\ast}}\bigg[\bigg\lVert\frac{1}{\sqrt n}\widehat s(\theta, \Xn^\ast)-\frac{1}{\sqrt n}s^\ast(\theta, \Xn^\ast)\bigg\rVert^8 \bigg] \E_{\Pi_n \cdot \Pn_{\theta^\ast}}\big[ \I_{\mA_{n,3}^C} \big]} \\
    \lesssim& \sqrt{\E_{\Pi_n \cdot \Pn_{\theta^\ast}}\Big[n^4 +  \lVert \sqrt{n}(\theta - \theta^\ast) \rVert^8 +  \big\lVert \sqrt{n}(\theta - \MLE) \big\rVert^8 \Big] \E_{\Pi_n \cdot \Pn_{\theta^\ast}}\big[ \I_{\mA_{n,3}^C} \big]} \\
    \lesssim& n^{-(\frac{C_1C_0^2}{2} - 2)}
\end{align*}
where the second step is by Cauchy–Schwarz inequality, the third step is by \Cref{ass:bdd_derivatives}, \Cref{lem:true_sco_bd} and triangle inequality, and the last step is by \Cref{lem:tail_prob_A2} and \ref{lem:8th_mmt}.

Finally, adding (I) and (II) together, we have the fourth moment of the score error as .
\begin{align*}
&\E_{(\theta, \Xn^\ast)\sim \pi_n(\theta\mid \Xn^\ast)\pn_{\theta^\ast}(\Xn^\ast)} \normbigg{\frac{1}{\sqrt n}\widehat s(\theta, \Xn^\ast)- \frac{1}{\sqrt n}s^\ast(\theta, \Xn^\ast)}^4 \\
=& \text{(I)} + \text{(II)} \\
\lesssim&  (\log n)^2 \varepsilon_{N,n,1}^2 + (\log n)^2 \varepsilon_{N,n,2}^2 + \frac{(\log n)^3}{n} + n^{-(\frac{C_1C_0^2}{2} - 2)} \\
\lesssim& (\log n)^2 \varepsilon_{N,n,1}^2 + (\log n)^2 \varepsilon_{N,n,2}^2 + \frac{(\log n)^3}{n}
\end{align*}
where the last step is because $C_0 \ge \sqrt{\frac{6}{C_1}}$.
\end{proof}

\subsection{Proof of \Cref{thm:post_convergence_iid}}\label{sec:post_convergence_iid_pf}

The proof is similar to \Cref{sec:post_convergence_n_obs_pf}. The major difference is that we have a different way to control the total score-matching error, which was shown in \Cref{lem:score_err_fourth_moment}. Below we provide an equivalent of \Cref{lem:score_err_fourth_moment} for the case when we are matching the score on  a single observation.

\begin{lemma}Under the assumptions in \Cref{thm:post_convergence_iid}, we have
\begin{align*}
&\E_{(\theta, \Xn^\ast)\sim \pi_n(\theta\mid \Xn^\ast)p_{\theta^\ast}^{(n)}(\Xn^\ast)} \norm{\frac{1}{\sqrt n}\widehat s(\theta, \Xn^\ast)- \frac{1}{\sqrt n}s^\ast(\theta, \Xn^\ast)}^4 \\
\lesssim& (\log n)^2 \widetilde{\varepsilon}_{N,1}^2 + (\log n)^2 \widetilde{\varepsilon}_{N_R,m_R,2}^2 + n \log n \widetilde{\varepsilon}_{N_R,m_R,3}^2 + \frac{(\log n)^3}{n}
\end{align*}
\end{lemma}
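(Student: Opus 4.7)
The plan is to mirror the proof of Lemma~\ref{lem:score_err_fourth_moment} step by step, splitting the expectation via the indicator of the good event $\mA_{n,3} := \mA_{n,1} \cap \mA_{n,2}$ and its complement. On $\mA_{n,3}$ both $\frac{1}{\sqrt n}\widehat s(\theta, \Xn^\ast)$ and $\frac{1}{\sqrt n} s^\ast(\theta, \Xn^\ast)$ are of order $\sqrt{\log n}$ by \Cref{ass:bdd_derivatives} and the Bernstein--von Mises expansion of the true score, so that the sup-norm bound reduces the fourth moment to $\log n$ times the second moment. The complement contributes a negligible $n^{-(C_1C_0^2/2-2)}$ term by exactly the same Cauchy--Schwarz / tail argument used in Lemma~\ref{lem:score_err_fourth_moment}, combining the posterior and MLE tails in \Cref{ass:true_converge} with the linear-growth bound $\|\widehat s(\theta, \Xn^\ast)\| \leq C_3 n(1 + \|\theta - \theta^\ast\|)$ in \Cref{ass:bdd_derivatives}; no change from the n-observation proof is needed here.

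The only genuinely new piece is bounding the zero-th order error $\E\big[\|\tfrac{1}{\sqrt n}(\widehat s - s^\ast)(\theta^\ast, \Xn^\ast)\|^2\big]$. Here the plan is to exploit the i.i.d.\ additive structure together with the mean-zero property $\E s^\ast(\theta^\ast, X_1^\ast) = 0$ to write
\begin{equation*}
\E\Big[\big\|\tfrac{1}{\sqrt n}(\widehat s - s^\ast)(\theta^\ast, \Xn^\ast)\big\|^2\Big]
= \E\big[\|(\widehat s - s^\ast)(\theta^\ast, X_1^\ast)\|^2\big]
+ (n-1)\,\big\|\E \widehat s(\theta^\ast, X_1^\ast)\big\|^2
\leq \wt\varepsilon_{N,1}^2 + n\,\wt\varepsilon_{N_R,m_R,3}^2,
\end{equation*}
by \Cref{ass:uniform_sm_err_single}. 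After the $\log n$ factor from the sup-norm reduction this becomes $\log n\,\wt\varepsilon_{N,1}^2 + n\log n\,\wt\varepsilon_{N_R,m_R,3}^2$, producing the third term of the target bound. The mean-matching penalty in \eqref{eq:score_demean} is exactly what prevents the bias from blowing up to $n^2\,\wt\varepsilon_{N_R,m_R,3}^2$.

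The perturbation term $\E\big[\|\tfrac{1}{\sqrt n}\{(\widehat s - s^\ast)(\theta, \Xn^\ast) - (\widehat s - s^\ast)(\theta^\ast, \Xn^\ast)\}\|^2\I_{\mA_{n,3}}\big]$ is handled by the Taylor expansion \eqref{eq:taylor_sdiff} exactly as in the previous proof. The first-order piece $[\widehat I(\theta^\ast) - I(\theta^\ast)]\sqrt n(\theta - \theta^\ast)$ dominates up to a sample-mean fluctuation of order $1/n$; combining the population curvature identity $\E[-\nabla_\theta s^\ast(\theta^\ast, X)] = \E[s^\ast(s^\ast)^T]$ with the curvature-matching error and the single-data score-matching error via Cauchy--Schwarz gives $\|\widehat I(\theta^\ast) - I(\theta^\ast)\|_F^2 \lesssim \wt\varepsilon_{N_R,m_R,2}^2 + \wt\varepsilon_{N,1}^2$. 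Together with $\|\sqrt n(\theta - \theta^\ast)\|^2 \leq C_0^2\log n$ on $\mA_{n,3}$ and the outer $\log n$, this yields the $(\log n)^2\big(\wt\varepsilon_{N,1}^2 + \wt\varepsilon_{N_R,m_R,2}^2\big)$ terms, while the second-order Taylor remainder is $(\log n)^3/n$ by \Cref{ass:bdd_derivatives}. The main technical obstacle is the careful bookkeeping of the bias: the mean-zero constraint is the single ingredient that keeps the $n^2$ coefficient arising from the cross-terms of $\|\sum_i(\widehat s - s^\ast)(\theta^\ast, X_i^\ast)\|^2$ from destroying the posterior convergence rate in \Cref{thm:post_convergence_iid}.
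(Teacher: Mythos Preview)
Your proposal is correct and follows essentially the same approach as the paper: the same $\mA_{n,3}$ decomposition into (I.1), (I.2), (II); the bias--variance decomposition of (I.1) via the i.i.d.\ additive structure and the mean-zero property to pick up the $n\,\wt\varepsilon_{N_R,m_R,3}^2$ term; the Taylor expansion \eqref{eq:taylor_sdiff} and the Fisher-information bound $\|I(\theta^\ast)-\widehat I(\theta^\ast)\|_2 \lesssim \wt\varepsilon_{N,1} + \wt\varepsilon_{N_R,m_R,2}$ (the paper's \Cref{lem:curv_err_single}) for (I.2); and the unchanged tail bound for (II). The only cosmetic difference is that the paper cites \Cref{lem:true_sco_bd} (Lipschitz continuity plus $s^\ast(\MLE,\Xn^\ast)=0$) rather than ``the Bernstein--von Mises expansion'' for the $\sqrt{\log n}$ sup-norm bound on the true score on $\mA_{n,3}$.
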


\begin{proof}
With the same definition of $\mA_{n,3}(\Xn^\ast)$ as in \Cref{lem:score_err_fourth_moment}, we can do the same decomposition and obtain
\[
\E_{(\theta, \Xn^\ast)\sim \pi_n(\theta\mid \Xn^\ast)p_{\theta^\ast}^{(n)}(\Xn^\ast)} \norm{\frac{1}{\sqrt n}\widehat s(\theta, \Xn^\ast)- \frac{1}{\sqrt n}s^\ast(\theta, \Xn^\ast)}^4 \lesssim \text{(I.1)} + \text{(I.2)} + \text{(II)},
\]
where (I.1), (I.2) and (II) are defined the same as in \Cref{lem:score_err_fourth_moment}.

For (I.1), we have
\begin{align*}
\text{(I.1)} &= \log n \E_{\Pn_{\theta^\ast}(\Xn^\ast)}\Big[ \normbig{\frac{1}{\sqrt n}\widehat s(\theta^\ast, \Xn^\ast)- \frac{1}{\sqrt n}s^\ast(\theta^\ast, \Xn^\ast)}^2 \Big] \\
&= \log n \Big\{ \E_{P_{\theta^\ast}} \big[ \| \widehat s(\theta^\ast, X^\ast)-s^\ast(\theta^\ast, X^\ast) \|^2 \big] + (n-1) \| \E_{P_{\theta^\ast}}[\widehat s(\theta^\ast, X^\ast)] \|^2 \Big\} \\
&\lesssim \log n \big(\widetilde{\varepsilon}_{N,1}^2 + n \widetilde{\varepsilon}_{N_R, m_R,3}^2 \big)
\end{align*}
where the second equality is because $X_i^\ast,\;i=1,\dots,n$ are i.i.d., and the last step is by the uniform error bound assumptions in \Cref{thm:post_convergence_iid}.

For (I.2), we can bound it using the same way as in \Cref{lem:score_err_fourth_moment}. The only difference is that the upper bound for the error of the estimated fisher information matrix is
\[
\big\lVert I(\theta^\ast) - \widehat{I}(\theta^\ast) \big\rVert_2 \le 2C_5\widetilde{\varepsilon}_{N,1} + \widetilde{\varepsilon}_{N_R,m_R,2}
\]
by \Cref{lem:curv_err_single}, and the bound for (I.2) becomes
\[
\text{(I.2)} \lesssim (\log n)^2 \widetilde{\varepsilon}_{N,n,1}^2 + (\log n)^2 \widetilde{\varepsilon}_{N,n,2}^2 +  \frac{(\log n)^3}{n}
\]

For (II), it is the tail expectation and has the same bound as in \Cref{lem:score_err_fourth_moment}, i.e.
\[
\text{(II)} \lesssim n^{-(\frac{C_1C_0^2}{2} - 2)}
\]

Finally, adding (I.1), (I.2) and (II) together, we obtain
\begin{align*}
&\E_{(\theta, \Xn^\ast)\sim \pi_n(\theta\mid \Xn^\ast)p_{\theta^\ast}^{(n)}(\Xn^\ast)} \norm{\frac{1}{\sqrt n}\widehat s(\theta, \Xn^\ast)- \frac{1}{\sqrt n}s^\ast(\theta, \Xn^\ast)}^4 \\
\lesssim& (\log n)^2 \widetilde{\varepsilon}_{N,n,1}^2 + (\log n)^2 \widetilde{\varepsilon}_{N,n,2}^2 + n \log n \widetilde{\varepsilon}_{N,n,3}^2 + \frac{(\log n)^3}{n}
\end{align*}

Note that if we do not have the debiasing step, then we would bound (I.1) using the score error alone as
\begin{align*}
\text{(I.1)} &= \log n \E_{\Pn_{\theta^\ast}(\Xn^\ast)}\Big[ \normbig{\frac{1}{\sqrt n}\widehat s(\theta^\ast, \Xn^\ast)- \frac{1}{\sqrt n}s^\ast(\theta^\ast, \Xn^\ast)}^2 \Big] \\
&\leq n\log n \E_{P_{\theta^\ast}} \big[ \| \widehat s(\theta^\ast, X^\ast)-s^\ast(\theta^\ast, X^\ast) \|^2 \big] \\
&\leq n\log n \widetilde{\varepsilon}_{N,1}^2
\end{align*}
and the final bound becomes
\begin{align*}
\E_{(\theta, \Xn^\ast)\sim \pi_n(\theta\mid \Xn^\ast)p_{\theta^\ast}^{(n)}(\Xn^\ast)} \norm{\frac{1}{\sqrt n}\widehat s(\theta, \Xn^\ast)- \frac{1}{\sqrt n}s^\ast(\theta, \Xn^\ast)}^4
\lesssim  n\log n \widetilde{\varepsilon}_{N,n,1}^2 +  (\log n)^2 \widetilde{\varepsilon}_{N,n,2}^2  + \frac{(\log n)^3}{n}
\end{align*}
That is, now we need to control $\widetilde{\varepsilon}_{N,n,1}^2$ under $O(\frac{1}{n\log n})$, instead of controlling $\widetilde{\varepsilon}_{N,n,3}^2$ under $O(\frac{1}{n\log n})$ with the debiasing step.
\end{proof}

\subsection{Proof of \Cref{lem:debias_error}}\label{sec:debias_error_pf}

For simplicity, we write $\wh{s}(\theta,X)=s_{\hat\phi}(\theta,X)$ and $\hat h(\theta)=h_{\psi}(\theta)$.

Since $h(\theta) \equiv 0$ is feasible under \eqref{eq:score_demean} and $\widehat{h}$ is a minimizer, we have
\begin{equation}\label{res_gis0}
\mathbb{E}_{\theta\sim q(\theta)} \lVert \widehat{h}(\theta) - \mathbb{E}_{X\sim P_\theta} \widehat{s}(\theta, X) \rVert^2 \le \mathbb{E}_{\theta\sim q(\theta)} \lVert \mathbb{E}_{X\sim P_\theta} \widehat{s}(\theta, X) \rVert^2
\end{equation}
For the debiasd score, using the fact that $\mathbb{E}_{X\sim P_\theta} [s^\ast(\theta, X)] = \bm{0}$, we have
\begin{align*}
& \mathbb{E}_{\theta\sim q(\theta)}\Big\{\mathbb{E}_{X\sim P_\theta} \lVert \widehat{s}(\theta, X) - \widehat{h}(\theta) - s^\ast(\theta, X) \rVert^2 \Big\} \\
=& \mathbb{E}_{\theta\sim q(\theta)}\Big\{\mathbb{E}_{X\sim P_\theta} \lVert \widehat{s}(\theta, X) - \mathbb{E}_{X\sim P_\theta}[\widehat{s}(\theta, X)] - s^\ast(\theta, X) + \mathbb{E}_{X\sim P_\theta}[\widehat{s}(\theta, X)] - \widehat{h}(\theta) \rVert^2 \Big\} \\
=& \mathbb{E}_{\theta\sim q(\theta)}\Big\{\mathbb{E}_{X\sim P_\theta} \lVert \widehat{s}(\theta, X)  - \mathbb{E}_{X\sim P_\theta}[\widehat{s}(\theta, X) ] - s^\ast(\theta, X) \rVert^2 + \lVert \mathbb{E}_{X\sim P_\theta}[\widehat{s}(\theta, X)] - \widehat{h}(\theta) \rVert^2 \Big\}.
\end{align*}

Similarly for the score-matching error of $\hat s(\theta,X)$, we have
\begin{align*}
& \mathbb{E}_{\theta\sim q(\theta)}\Big\{\mathbb{E}_{X\sim P_\theta} \lVert \widehat{s}(\theta, X) - s^\ast(\theta, X) \rVert^2 \Big\} \\
=& \mathbb{E}_{\theta\sim q(\theta)}\Big\{\mathbb{E}_{X\sim P_\theta} \lVert \widehat{s}(\theta, X)  - \mathbb{E}_{X\sim P_\theta}[\widehat{s}(\theta, X) ] - s^\ast(\theta, X) \rVert^2 + \lVert \mathbb{E}_{X\sim P_\theta}[\widehat{s}(\theta, X)] \rVert^2 \Big\}
\end{align*}
Then, the result is proved because of \eqref{res_gis0}.

\subsection{Auxiliary Lemmas}\label{pf:aux_lemmas}
The following Lemma shows the scaled score $\frac{1}{\sqrt n} s^\ast(\theta, \Xn^\ast)$ is upper bounded by $\sqrt n(\theta - \MLE)$, which agrees with the score given by the limit distribution in the BvM theorem.
\begin{lemma}\label{lem:true_sco_bd}
Under \Cref{ass:lipschitz_score}, we have
\[
\bigg\lVert \frac{1}{\sqrt n} s^\ast(\theta, \Xn^\ast) \bigg\rVert \leq \lambda_L\lVert \sqrt n(\theta - \MLE) \rVert
\]
\begin{proof}
Since $s^\ast(\MLE, \Xn^\ast) = 0$, we have
\begin{equation*}
\lVert s^\ast(\theta, \Xn^\ast) \rVert = \lVert s^\ast(\theta, \Xn^\ast) - s^\ast(\MLE, \Xn^\ast) \rVert
\leq n\lambda_L\lVert \theta - \MLE \rVert
\end{equation*}
where the last step is because $s^\ast(\cdot, \Xn^\ast)$ is $n\lambda_L$-Lipschitz, by \Cref{ass:lipschitz_score}.
\end{proof}
\end{lemma}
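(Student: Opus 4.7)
The plan is to prove this via a two-line calculation combining the Lipschitz bound from Assumption~\ref{ass:lipschitz_score} with the first-order optimality condition satisfied by the MLE. The key observation is that, since $\MLE = \arg\max_\theta \pn_\theta(\Xn^\ast)$ lies in the interior of $\Theta$ with overwhelming probability (by Assumption~\ref{ass:true_converge}), its first-order optimality condition gives $s^\ast(\MLE, \Xn^\ast) = 0$. This lets us introduce $\MLE$ into the right-hand side essentially for free, so the bound reduces to a pure Lipschitz estimate applied to the two points $\theta$ and $\MLE$.

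Concretely, I would proceed in three short steps. First, I would observe that in the i.i.d.\ setting of Section~\ref{sec:version2} the full-data score decomposes additively as $s^\ast(\theta, \Xn^\ast) = \sum_{i=1}^n s^\ast(\theta, X_i^\ast)$, so the per-observation bound in Assumption~\ref{ass:lipschitz_score} together with the triangle inequality upgrades to the full-data estimate $\lVert s^\ast(\theta_1, \Xn^\ast) - s^\ast(\theta_2, \Xn^\ast) \rVert \le n\lambda_L \lVert \theta_1 - \theta_2 \rVert$. Second, I would use the vanishing of the score at the MLE to write $\lVert s^\ast(\theta, \Xn^\ast) \rVert = \lVert s^\ast(\theta, \Xn^\ast) - s^\ast(\MLE, \Xn^\ast) \rVert$, and then apply the Lipschitz bound just obtained. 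Third, I would divide both sides by $\sqrt n$ and recognize the factor on the right as $\sqrt n \lVert \theta - \MLE \rVert = \lVert \sqrt n(\theta - \MLE) \rVert$, matching the statement of the lemma exactly.

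There is no substantive obstacle in this argument; it is essentially a telescoping identity plus the additivity of Lipschitz constants under summation of per-observation score contributions. The only subtle point worth flagging is the implicit assumption that $\MLE$ is an interior maximizer so that its gradient vanishes, which is justified by Assumption~\ref{ass:true_converge}: the assumed concentration forces $\MLE$ into an $O_p(\sqrt{(\log n)/n})$ neighborhood of $\theta^\ast$, and hence into the interior of $\Theta$ for all sufficiently large $n$ whenever $\theta^\ast$ itself lies in the interior.
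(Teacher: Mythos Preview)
Your proposal is correct and follows essentially the same approach as the paper: use the first-order condition $s^\ast(\MLE,\Xn^\ast)=0$, then apply the $n\lambda_L$-Lipschitz bound on the full-data score (obtained from Assumption~\ref{ass:lipschitz_score} via additivity), and rescale by $1/\sqrt n$. Your explicit justification of why the full-data score inherits Lipschitz constant $n\lambda_L$ from the per-observation assumption, and your remark on the MLE being an interior maximizer, are minor elaborations but the argument is identical.
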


\begin{lemma}\label{lem:tail_prob_A2}
For any $C_0 \ge 1$, denote $\mA_{n,3}= \Big\{\theta: \sqrt n \lVert \theta-\theta^\ast \rVert\leq C_0\sqrt{\log n}\Big\} \bigcap \Big\{\Xn^\ast: \sqrt n \big\lVert \MLE - \theta^\ast \big\rVert\leq C_0\sqrt{\log n}\Big\}$, then
\[
\E_{(\theta, \Xn^\ast)\sim \pi_n(\theta\mid \Xn^\ast)\pn_{\theta^\ast}(\Xn^\ast)}\big[\I_{\mA_{n,3}^C}\big] \leq 2n^{-C_1C_0^2}
\]
\end{lemma}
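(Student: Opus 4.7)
The plan is to decompose $\mA_{n,3}^c$ and apply the two concentration bounds in Assumption~\ref{ass:true_converge} separately. Writing $\mB_{n,1} := \{\theta: \sqrt{n}\|\theta-\theta^\ast\| \leq C_0\sqrt{\log n}\}$ and $\mB_{n,2} := \{\Xn^\ast: \sqrt{n}\|\MLE - \theta^\ast\|\leq C_0\sqrt{\log n}\}$, we have $\mA_{n,3} = \mB_{n,1}\cap \mB_{n,2}$, so by De~Morgan and a union bound,
\[
\E_{\pi_n(\theta\mid \Xn^\ast)\,\Pn_{\theta^\ast}(\Xn^\ast)}\!\big[\I_{\mA_{n,3}^c}\big] \;\leq\; \E_{\Pn_{\theta^\ast}}\!\big[\Pi_n(\mB_{n,1}^c\mid \Xn^\ast)\big] + \Pn_{\theta^\ast}(\mB_{n,2}^c).
\]

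First I would set the threshold $t = C_0\sqrt{(\log n)/n}$ in Assumption~\ref{ass:true_converge}. Since $C_0\geq 1$, this $t$ satisfies $t > \sqrt{(\log n)/n}$, so both bounds in the assumption apply. Plugging in,
\[
\E_{\Pn_{\theta^\ast}}\!\big[\Pi_n(\mB_{n,1}^c\mid \Xn^\ast)\big] \;\leq\; \exp\!\big(-C_1 n t^2\big) \;=\; \exp\!\big(-C_1 C_0^2 \log n\big) \;=\; n^{-C_1 C_0^2},
\]
and by the same computation $\Pn_{\theta^\ast}(\mB_{n,2}^c) \leq n^{-C_1 C_0^2}$. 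Summing the two contributions gives the desired bound $2n^{-C_1 C_0^2}$.

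There is no real obstacle here; the only thing to be careful about is that the event $\mB_{n,2}$ depends only on $\Xn^\ast$ (through $\MLE$) and not on $\theta$, so after taking the joint expectation it collapses to a pure probability under $\Pn_{\theta^\ast}$ without any marginalization issue. Similarly, for $\mB_{n,1}^c$, the expectation over $\theta \sim \pi_n(\cdot\mid\Xn^\ast)$ for fixed $\Xn^\ast$ reproduces $\Pi_n(\mB_{n,1}^c\mid\Xn^\ast)$ before the outer $\Pn_{\theta^\ast}$ expectation. This matches exactly the two bounds supplied by Assumption~\ref{ass:true_converge}, so nothing else is needed.
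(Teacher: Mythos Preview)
Your proposal is correct and matches the paper's own proof essentially line for line: split $\mA_{n,3}^c$ via a union bound into the posterior-tail event and the MLE-tail event, then apply the two bounds in Assumption~\ref{ass:true_converge} with $t = C_0\sqrt{(\log n)/n}$ to get $n^{-C_1C_0^2}$ for each. The only cosmetic difference is that you name the two pieces $\mB_{n,1},\mB_{n,2}$ and spell out the $t$-substitution explicitly, whereas the paper does both steps in one displayed inequality.
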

\begin{proof}
\begin{align*}
&\E_{(\theta, \Xn^\ast)\sim \pi_n(\theta\mid \Xn^\ast)\pn_{\theta^\ast}(\Xn^\ast)}\big[\I_{\mA_{n,3}^C}\big] \\
\le& \E_{\Pn_{\theta^\ast}}\Pi_n\Big[\sqrt n \lVert \theta-\theta^\ast \rVert > C_0\sqrt{\log n} \bigm| \Xn^\ast\Big] + \Pn_{\theta^\ast}\Big[\sqrt n \big\lVert \MLE - \theta^\ast \big\rVert > C_0\sqrt{\log n}\Big] \\
\le& 2n^{-C_1C_0^2}
\end{align*}
where the last step is by \Cref{ass:true_converge}.
\end{proof}

\begin{lemma}\label{lem:8th_mmt}
Under \Cref{ass:true_converge}, we have
\begin{align*}
\E_{(\theta, \Xn^\ast)\sim \pi_n(\theta\mid \Xn^\ast)\pn_{\theta^\ast}(\Xn^\ast)}\big[\big\lVert\sqrt{n}(\theta - \theta^\ast)\big\rVert^8\big] &\lesssim  \log^4 n \\
\E_{(\theta, \Xn^\ast)\sim \pi_n(\theta\mid \Xn^\ast)\pn_{\theta^\ast}(\Xn^\ast)}\big[\big\lVert\sqrt{n}(\theta - \MLE)\big\rVert^8\big] &\lesssim  \log^4 n
\end{align*}
\begin{proof}
The first line in the result is because
\begin{equation}\label{eq:lem_8th_mmt}
\begin{split}
& \E_{(\theta, \Xn^\ast)\sim \pi_n(\theta\mid \Xn^\ast)\pn_{\theta^\ast}(\Xn^\ast)}\big[\big\lVert\sqrt{n}(\theta - \theta^\ast)\big\rVert^8\big] \\
=& n^4 \int_0^{+\infty} 8t^7 P_{(\theta, \Xn^\ast)\sim \pi_n(\theta\mid \Xn^\ast)\pn_{\theta^\ast}(\Xn^\ast)}\big(\lVert \theta - \theta^\ast \rVert > t \big)\;dt \\
=& n^4 \int_0^{\sqrt{\frac{\log n}{n}}} 8t^7 \E_{\Pn_{\theta^\ast}}\big[\Pi_n\big(\lVert \theta - \theta^\ast \rVert > t \bigm| \Xn^\ast \big)\big]\;dt \\
&+ n^4 \int_{\sqrt{\frac{\log n}{n}}}^{+\infty} 8t^7 \E_{\Pn_{\theta^\ast}}\big[\Pi_n\big(\lVert \theta - \theta^\ast \rVert > t \bigm| \Xn^\ast \big)\big]\;dt \\
\leq& \log^4 n + n^4 \int_{\sqrt{\frac{\log n}{n}}}^{+\infty} 8t^7 \exp(-C_1nt^2)\;dt \qquad\text{(by \Cref{ass:true_converge})} \\
\leq& \log^4 n + \frac{4n^{-C_1}}{C_1^4}(6+6C_1\log n + 3C_1^2 \log^2 n + C_1^3 \log^3 n)
\end{split}
\end{equation}
We next show the second line in the result. By triangle inequality, we have
\begin{align*}
&\E_{(\theta, \Xn^\ast)\sim \pi_n(\theta\mid \Xn^\ast)\pn_{\theta^\ast}(\Xn^\ast)}\big[\big\lVert\sqrt{n}(\theta - \MLE)\big\rVert^8\big]\\
\lesssim& \E_{(\theta, \Xn^\ast)\sim \pi_n(\theta\mid \Xn^\ast)\pn_{\theta^\ast}(\Xn^\ast)}\big[\big\lVert\sqrt{n}(\theta - \theta^\ast)\big\rVert^8\big] + \E_{\Pn_{\theta^\ast}}\big[\big\lVert\sqrt{n}(\MLE - \theta^\ast)\big\rVert^8\big]
\end{align*}
Using \Cref{ass:true_converge} and similar calculations in \eqref{eq:lem_8th_mmt}, we can also get
\[
\E_{\Pn_{\theta^\ast}}\big[\big\lVert\sqrt{n}(\MLE - \theta^\ast)\big\rVert^8\big] \leq \log^4 n + \frac{4n^{-C_1}}{C_1^4}(6+6C_1\log n + 3C_1^2 \log^2 n + C_1^3 \log^3 n)
\]
and the result follows.
\end{proof}
\end{lemma}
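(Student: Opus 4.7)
The plan is to prove both bounds via the layer cake representation of the eighth moment, splitting the tail integral at the threshold $t_n := \sqrt{\log n / n}$ given by \Cref{ass:true_converge}. For the first bound, I would write
\[
\E\big[\|\sqrt n(\theta - \theta^\ast)\|^8\big] \;=\; n^4 \int_0^\infty 8 t^7 \, \E_{\Pn_{\theta^\ast}}\!\big[\Pi_n(\|\theta-\theta^\ast\|>t \mid \Xn^\ast)\big]\, dt,
\]
using Fubini, then split the integral over $[0, t_n]$ and $[t_n,\infty)$. On the short interval the probability is at most $1$, which contributes at most $n^4 \cdot t_n^8 = \log^4 n$. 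On the long interval \Cref{ass:true_converge} applies and yields integrand bounded by $8 t^7 \exp(-C_1 n t^2)$, which after substitution $u = C_1 n t^2$ reduces to a standard gamma-type integral $\int_{C_1 \log n}^\infty u^3 e^{-u}\, du$, giving a contribution of order $n^{-C_1}\log^3 n$. Summing the two pieces yields the desired $O(\log^4 n)$ bound.

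For the second bound involving $\MLE$, I would use the triangle inequality $\|\sqrt n(\theta - \MLE)\| \leq \|\sqrt n(\theta - \theta^\ast)\| + \|\sqrt n(\MLE - \theta^\ast)\|$, raise to the eighth power, apply the elementary inequality $(a+b)^8 \lesssim a^8 + b^8$, and bound the two resulting expectations separately. The first is exactly the quantity from part one. The second, $\E_{\Pn_{\theta^\ast}}\big[\|\sqrt n(\MLE - \theta^\ast)\|^8\big]$, is handled by the same layer cake argument, now using the second tail bound in \Cref{ass:true_converge}, which is of identical exponential form. Both terms are thus $O(\log^4 n)$.

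The proof is essentially routine once the split point $t_n$ is correctly identified from \Cref{ass:true_converge}; there is no real obstacle, only bookkeeping of constants in the gamma integral. The one place to be slightly careful is verifying that Fubini applies (nonnegative integrand, so this is automatic) and that the prefactor $n^4$ from rescaling $\theta \mapsto \sqrt n(\theta-\theta^\ast)$ is correctly tracked; both checks are straightforward. No new assumptions are needed beyond \Cref{ass:true_converge}.
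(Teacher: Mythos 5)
Your proof is correct and follows essentially the same route as the paper: layer-cake representation of the eighth moment with the tail integral split at $t_n=\sqrt{\log n/n}$, a trivial probability bound on $[0,t_n]$ contributing $\log^4 n$, the exponential tail from Assumption~\ref{ass:true_converge} on $[t_n,\infty)$ reducing to an incomplete gamma integral of order $n^{-C_1}\log^3 n$, and a triangle-inequality reduction for the MLE version. The only differences are cosmetic (explicit substitution $u=C_1nt^2$ versus direct evaluation), and the bookkeeping you flag is exactly what the paper carries out.
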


The following two Lemmas bound the difference between the true and estimated Fisher information matrices in the single data and full data score matching.

\begin{lemma}\label{lem:curv_err_single}
Under the error assumptions in \Cref{thm:post_convergence_iid},
\[
\big\lVert I(\theta^\ast) - \widehat{I}(\theta^\ast) \big\rVert_2 \le 2C_5\widetilde{\varepsilon}_{N,1} + \widetilde{\varepsilon}_{N_R,m_R,2}
\]
\end{lemma}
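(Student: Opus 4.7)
The plan is to use the curvature identity satisfied by the true score to rewrite $I(\theta^\ast)$ as an expectation of an outer product, then compare against the estimated score's outer product via the curvature-matching residual, and finally control the difference of outer products by Cauchy--Schwarz.

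First I would note that under the regularity assumptions (standing in this i.i.d.\ setting), the true score satisfies the curvature structure $\E_{X\sim P_{\theta^\ast}}[s^\ast(\theta^\ast, X) s^\ast(\theta^\ast, X)^T] = -\E_{X\sim P_{\theta^\ast}}[\nabla_\theta s^\ast(\theta^\ast, X)] = I(\theta^\ast)$. Hence
\begin{align*}
I(\theta^\ast) - \widehat I(\theta^\ast)
 &= \E\big[s^\ast(\theta^\ast, X) s^\ast(\theta^\ast, X)^T\big] + \E\big[\nabla_\theta \widehat s(\theta^\ast, X)\big] \\
 &= \underbrace{\E\big[s^\ast (s^\ast)^T - \widehat s\, \widehat s^{\,T}\big]}_{(\mathrm{A})}
  \;+\; \underbrace{\E\big[\widehat s\, \widehat s^{\,T} + \nabla_\theta \widehat s\big]}_{(\mathrm{B})},
\end{align*}
where all expectations are under $X \sim P_{\theta^\ast}$. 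The term $(\mathrm{B})$ is exactly the quantity inside the sup defining $\widetilde\varepsilon_{N_R,m_R,2}$, so $\|(\mathrm{B})\|_2 \leq \|(\mathrm{B})\|_F \leq \widetilde\varepsilon_{N_R,m_R,2}$ by virtue of $\theta^\ast \in \mA_{n,1}$.

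Next I would handle $(\mathrm{A})$ via the algebraic decomposition $s^\ast (s^\ast)^T - \widehat s\, \widehat s^{\,T} = s^\ast (s^\ast - \widehat s)^T + (s^\ast - \widehat s)\, \widehat s^{\,T}$. Taking operator norms and applying Jensen's inequality, then Cauchy--Schwarz to each factor:
\begin{align*}
\big\|\E[s^\ast (s^\ast - \widehat s)^T]\big\|_2
 &\leq \E\big[\|s^\ast\| \cdot \|s^\ast - \widehat s\|\big]
 \leq \sqrt{\E\|s^\ast(\theta^\ast, X)\|^2}\cdot \sqrt{\E\|\widehat s(\theta^\ast, X) - s^\ast(\theta^\ast, X)\|^2} \\
 &\leq C_5 \cdot \widetilde\varepsilon_{N,1},
\end{align*}
where the second moment bound on $s^\ast$ comes from Assumption~\ref{ass:bdd_derivatives} and the score-matching error bound from Assumption~\ref{ass:uniform_sm_err_single} at $\theta=\theta^\ast\in \mA_{n,1}$. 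The same argument with the roles of $s^\ast$ and $\widehat s$ swapped gives $\|\E[(s^\ast - \widehat s)\, \widehat s^{\,T}]\|_2 \leq C_5 \widetilde\varepsilon_{N,1}$, using the bound $\E\|\widehat s(\theta^\ast, X)\|^2 \leq C_5^2$ also from Assumption~\ref{ass:bdd_derivatives}. Summing yields $\|(\mathrm{A})\|_2 \leq 2C_5 \widetilde\varepsilon_{N,1}$, and combining with the bound on $(\mathrm{B})$ by triangle inequality gives the claim.

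There is no substantive obstacle here; the only subtlety is ensuring that $\theta^\ast$ lies in $\mA_{n,1}$ so that the uniform error bounds in Assumption~\ref{ass:uniform_sm_err_single} apply (which is immediate since $\sqrt n\|\theta^\ast-\theta^\ast\|=0 < C_0\sqrt{\log n}$), and that the $F$-norm controls the $2$-norm for the curvature residual.
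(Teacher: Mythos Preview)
Your proof is correct and follows essentially the same route as the paper: both arguments add and subtract $\E[\widehat s\,\widehat s^{\,T}]$ to split $I(\theta^\ast)-\widehat I(\theta^\ast)$ into the curvature-matching residual and the outer-product difference, then decompose the latter as $s^\ast(s^\ast-\widehat s)^T+(s^\ast-\widehat s)\widehat s^{\,T}$ and bound each piece by Jensen plus Cauchy--Schwarz against the second-moment bounds in Assumption~\ref{ass:bdd_derivatives}. Your added remarks that $\theta^\ast\in\mA_{n,1}$ trivially and that $\|\cdot\|_2\le\|\cdot\|_F$ are exactly the small checks the paper leaves implicit.
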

\begin{proof}
We introduce $s^\ast(\theta^\ast, X)s^\ast(\theta^\ast, X)^T$ and $\widehat{s}(\theta^\ast, X) \widehat{s}(\theta^\ast, X)^T$ as intermediate terms, and use the triangle inequality.
\begin{align*}
& \big\lVert I(\theta^\ast) - \widehat{I}(\theta^\ast) \big\rVert_2 \\
\le& \big\lVert \mathbb{E}_{X\sim P_{\theta^\ast}}\big[ s^\ast(\theta^\ast, X)s^\ast(\theta^\ast, X)^T - \widehat{s}(\theta^\ast, X) \widehat{s}(\theta^\ast, X)^T \big] \big\rVert_2 \\
&+ \big\lVert \mathbb{E}_{X\sim P_{\theta^\ast}} \big[ \widehat{s}(\theta^\ast, X) \widehat{s}(\theta^\ast, X)^T + \nabla \widehat{s}(\theta^\ast, X) \big] \big\rVert_2
 \\
\le&  \Big\lVert \mathbb{E}_{X\sim P_{\theta^\ast}} \Big[s^\ast(\theta^\ast, X) \big(s^\ast(\theta^\ast, X) - \widehat{s}(\theta^\ast, X) \big)^T \Big] \Big\rVert_2 \\
&+ \big\lVert \mathbb{E}_{X\sim P_{\theta^\ast}} \big[\big(s^\ast(\theta^\ast, X) - \widehat{s}(\theta^\ast, X) \big) \widehat{s}(\theta^\ast, X)^T \big] \big\rVert_2 + \tilde{\varepsilon}_{2} \\
\le& \mathbb{E}_{X\sim P_{\theta^\ast}} \bigg[ \norm{ s^\ast(\theta^\ast, X) \big(s^\ast(\theta^\ast, X) - \widehat{s}(\theta^\ast, X) \big)^T  }_2 \bigg] \\
&+ \mathbb{E}_{X\sim P_{\theta^\ast}} \Big[\Big\lVert \big(s^\ast(\theta^\ast, X) - \widehat{s}(\theta^\ast, X) \big) \widehat{s}(\theta^\ast, X)^T \Big\rVert_2 \Big] + \tilde{\varepsilon}_{2}  &\text{(by Jensen's inequality)} \\
=&  \mathbb{E}_{X\sim P_{\theta^\ast}} \big[ \norm{ s^\ast(\theta^\ast, X)}_2 \cdot \norm{ s^\ast(\theta^\ast, X) - \widehat{s}(\theta^\ast, X) }_2 \big] + \\
&\qquad \mathbb{E}_{X\sim P_{\theta^\ast}} \big[\norm{ s^\ast(\theta^\ast, X) - \widehat{s}(\theta^\ast, X) }_2 \cdot \norm{ \widehat{s}(\theta^\ast, X) }_2 \big] + \tilde{\varepsilon}_{2}    \\
\le& 2C_5\tilde{\varepsilon}_{N,1} + \tilde{\varepsilon}_{N_R, m_R,2},
\end{align*}

where the last step is by Cauchy-Schwarz inequality, \Cref{ass:bdd_derivatives} and \Cref{ass:uniform_sm_err}.
\end{proof}

\begin{lemma}\label{lem:curv_err_full}
Under \Cref{ass:uniform_sm_err},
\[
\big\lVert I(\theta^\ast) - \widehat{I}(\theta^\ast) \big\rVert_2 \le 2C_5\varepsilon_{N,n,1} + \varepsilon_{N,n,2}
\]
\end{lemma}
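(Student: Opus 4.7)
The plan is to mirror the proof of Lemma~\ref{lem:curv_err_single}, substituting the single-data error bounds $\widetilde\varepsilon_{N,1},\widetilde\varepsilon_{N_R,m_R,2}$ with their full-data counterparts $\varepsilon_{N,n,1},\varepsilon_{N,n,2}$ from Assumption~\ref{ass:uniform_sm_err} (the full-data analogue introduced in \Cref{sec:version1}). The starting point is the classical Fisher information identity for the true score. Under the regularity conditions inherited from \Cref{ass:bdd_derivatives} together with the mean-zero structure, differentiating $\int p_\theta(x)\,\diff x \equiv 1$ twice at $\theta=\theta^\ast$ gives the information identity
\[
I(\theta^\ast) \;=\; \E_{X\sim P_{\theta^\ast}}\!\big[-\nabla_\theta s^\ast(\theta^\ast,X)\big] \;=\; \E_{X\sim P_{\theta^\ast}}\!\big[s^\ast(\theta^\ast,X)\, s^\ast(\theta^\ast,X)^T\big].
\]
I would then insert $\E[s^\ast s^{\ast T}]$ into $I(\theta^\ast)-\widehat I(\theta^\ast)$ and apply the triangle inequality:
\[
\big\lVert I(\theta^\ast)-\widehat I(\theta^\ast)\big\rVert_2 \;\le\; \big\lVert \E[s^\ast s^{\ast T} - \widehat s\,\widehat s^T]\big\rVert_2 \;+\; \big\lVert \E[\widehat s\,\widehat s^T + \nabla_\theta \widehat s]\big\rVert_2.
\]

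The second summand is directly controlled by the uniform curvature-matching error: by \Cref{ass:uniform_sm_err} (full-data version), evaluating the curvature residual at $\theta^\ast\in\mA_{n,1}$ yields the bound $\varepsilon_{N,n,2}$. For the first summand, I would use the algebraic decomposition
\[
s^\ast s^{\ast T} - \widehat s\,\widehat s^T \;=\; s^\ast(s^\ast-\widehat s)^T + (s^\ast-\widehat s)\,\widehat s^T,
\]
then pull the operator norm inside the expectation via Jensen's inequality and bound each rank-one matrix by the product of the norms of its two factors. Applying the Cauchy--Schwarz inequality in expectation gives
\begin{align*}
\big\lVert \E[s^\ast(s^\ast-\widehat s)^T]\big\rVert_2
&\le \sqrt{\E\|s^\ast(\theta^\ast,X)\|^2}\;\sqrt{\E\|s^\ast-\widehat s\|^2}, \\
\big\lVert \E[(s^\ast-\widehat s)\,\widehat s^T]\big\rVert_2
&\le \sqrt{\E\|\widehat s(\theta^\ast,X)\|^2}\;\sqrt{\E\|s^\ast-\widehat s\|^2}.
\end{align*}
The two moment factors are bounded by $C_5$ under \Cref{ass:bdd_derivatives}, and $\sqrt{\E\|s^\ast-\widehat s\|^2}$ is bounded by $\varepsilon_{N,n,1}$ by the full-data uniform score-matching assumption localized at $\theta^\ast$. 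Summing these two contributions with the curvature-matching term gives the claimed bound $2C_5\varepsilon_{N,n,1}+\varepsilon_{N,n,2}$.

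The only step that requires any care is the invocation of the Fisher information identity, since in the full-data setting the estimated score $\widehat s(\theta,\Xn)$ does not automatically satisfy $\E[\widehat s]=0$ even though the true score does; fortunately the identity is only needed for $s^\ast$, and the mismatch between $\E[\widehat s\,\widehat s^T]$ and $-\E[\nabla_\theta \widehat s]$ is precisely what the curvature-matching term $\varepsilon_{N,n,2}$ is designed to measure. Everything else is a routine triangle-inequality plus Cauchy--Schwarz argument identical in structure to the single-data case.
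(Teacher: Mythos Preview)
Your decomposition and the triangle/Cauchy--Schwarz steps are exactly what the paper does, and the way you handle the curvature term is fine. But you skip the one step that actually distinguishes this lemma from Lemma~\ref{lem:curv_err_single}. When you write ``$\sqrt{\E\|s^\ast-\widehat s\|^2}$ is bounded by $\varepsilon_{N,n,1}$ by the full-data uniform score-matching assumption,'' the expectation on the left is over a \emph{single} observation $X\sim P_{\theta^\ast}$ (that is how $I(\theta^\ast)$, $\widehat I(\theta^\ast)$, and $\varepsilon_{N,n,2}$ are all defined), whereas $\varepsilon_{N,n,1}^2$ in Assumption~\ref{ass:uniform_sm_err} is the \emph{full-data} error
\[
\varepsilon_{N,n,1}^2 \;=\; \sup_{\theta\in\mA_{n,1}}\; \E_{\Xn\sim \Pn_\theta}\Big\|\tfrac{1}{\sqrt n}\widehat s(\theta,\Xn)-\tfrac{1}{\sqrt n}s^\ast(\theta,\Xn)\Big\|^2.
\]
These are not the same object, and the assumption does not directly control the single-observation quantity you need.

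The paper closes this gap with a short i.i.d.\ expansion: writing $\widehat s(\theta,\Xn)-s^\ast(\theta,\Xn)=\sum_{i=1}^n[\widehat s(\theta,X_i)-s^\ast(\theta,X_i)]$ and using independence,
\[
\varepsilon_{N,n,1}^2 \;\ge\; \tfrac{1}{n}\E\Big\|\sum_{i=1}^n[\widehat s-s^\ast]\Big\|^2
\;=\; \E_{X\sim P_{\theta^\ast}}\|\widehat s(\theta^\ast,X)-s^\ast(\theta^\ast,X)\|^2 \;+\; (n-1)\big\|\E_{X}[\widehat s-s^\ast]\big\|^2,
\]
and then drops the nonnegative bias term to conclude $\E_X\|\widehat s-s^\ast\|^2\le \varepsilon_{N,n,1}^2$. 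You need this line (or an equivalent argument) before you can plug $\varepsilon_{N,n,1}$ into your Cauchy--Schwarz bound; without it the proof is incomplete.
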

\begin{proof}
The proof is the same as in \Cref{lem:curv_err_single}, except that we need to upper bound the score error on a single observation in the last step, which is shown below
\begin{align*}
\varepsilon_{N,n,1}^2 &\ge \frac{1}{n} \E_{\Xn \sim P_{\theta^\ast}^{(n)}} \bigg[ \bigg\lVert\sum_{i=1}^n [\widehat{s}(\theta, X_i)- s^\ast(\theta, X_i)] \bigg\rVert^2 \bigg] \\
&= \frac{1}{n} \Big\{ n \E_{X \sim P_{\theta^\ast}}\norm{\widehat{s}(\theta, X) - s^\ast(\theta, X)}^2 +  n(n-1)\norm{\E_{X\sim P_{\theta^\ast}}[\widehat{s}(\theta, X) - s^\ast(\theta, X)]}^2 \Big\} &\text{(since i.i.d.)} \\
&\ge \E_{X\sim P_{\theta^\ast}} \big[ \norm{\widehat{s}(\theta, X) - s^\ast(\theta, X)}^2 \big]
\end{align*}
\end{proof}

\section{Details of the Empirical Studies and Additional Experimental Results
}\label{sec:additional_implementation}
In this section, we provide further details on the empirical studies in \Cref{sec:simulation}, together with additional experimental results that complement those reported in the main paper.

For each empirical study, we provide (1) the data generating process when it is not described in the main paper and (2) implementation details for all methods, including simulation costs and training and sampling configurations.

The additional experimental results include the following.  In \Cref{sec:solve_bdr_queuing}, we use the queuing model example to demonstrate how to address violations of the boundary condition required for score matching. In \Cref{sec:mono_reg_comp}, we examine more carefully how different parts of the score-matching procedure affect estimation accuracy by comparing variants of the single-model and n-model in terms of three types of score error.  For the mRNA transfection example, we present the BSL results for the simulation study in \Cref{sec:SDE_BSL}, as well as the NLE results and a posterior predictive assessment of the bimodal posterior obtained by our single-model for the real-data study in \Cref{sec:more_res_SDErealdata}. We present an additional empirical study on partially observed stochastic epidemic models in \Cref{sec:sto_epi_details}.

\subsection{Details of the queuing model example}\label{sec:queuing_details}

In this subsection, we first provide a more detailed discussion on how we resolve the constrained support problem outlined in \Cref{sec:boundary_sol}. Next we provide all details on implementing our methods and the compared methods in this example.

\subsubsection{Solving the boundary condition}\label{sec:solve_bdr_queuing}
As we discuss in \Cref{sec:boundary_sol}, the score-matching objective in \eqref{eq:score_loss} cannot be directly applied to the queuing model since the boundary condition in \Cref{ass:boundary} is violated. In this example, we consider both solutions mentioned in \Cref{sec:boundary_sol}.

{\bf Solution 1:} In this example, we have full knowledge of the support of $\Xn$ as $\{\Xn: \pn_\theta(\Xn) > 0\} = [\theta_1, +\infty)^{\otimes np}$, which means that the support of the posterior is
\begin{align*}
&\{(\theta_1, \theta_2 - \theta_1, \theta_3): \pi_n(\theta\mid \Xn) > 0\} \\
=& \{(\theta_1, \theta_2 - \theta_1, \theta_3): \pi(\theta) > 0\} \bigcap \{(\theta_1, \theta_2 - \theta_1, \theta_3): \pn_\theta(\Xn) > 0\} \\
=& \Big[0, \min\big(10, \min_{i,j}\{x_{ij}\}\big)\Big] \times [0, 10] \times [0, 0.5]
\end{align*}
Therefore, we use the weight function $g(\theta_1, \theta_2 - \theta_1, \theta_3) = \Big(\text{dist}(\theta_1, \Big[0, \min\big(10, \min_{i,j}\{x_{ij}\}\big)\Big]),$ $ \text{dist}(\theta_2-\theta_1, [0, 10]), \text{dist}(\theta_3, [0, 0.5]) \Big)$ and train the score network using \eqref{sm_thm_eq}.  For this example, we just use Euclidean distance.

Note that we do not apply the weight function when we train the network matching on single data score, since the mean-zero property of the likelihood score no longer holds, making the debiasing step inapplicable and the score error on $\Xn$ could accumulate. This can potentially be amended by including the weight function in the debiasing step as well.

{\bf Solution 2:} Similar to denoising score matching, there is a trade-off in choosing the noise level $\sigma_\varepsilon$. The posterior distribution based on the noised model will be far from the true posterior if $\sigma_\varepsilon$ is too large, while the training of the score network can suffer from numerical instability if $\sigma_\varepsilon$ is too small, as we show in \Cref{fig:queuing_noise}. As a heuristic approach, we recommend choosing $\sigma_\varepsilon$ based on the variance of $\Xn^\ast$. In the queuing example, the standard deviation of all the dimensions of $\Xn^\ast$ is around $5$, and we find $\sigma_\varepsilon$ values from around $5\text{-}
10\%$ of that works well. Finally, we use $\sigma_\varepsilon = 0.25$  and train the debiased score network in  \Cref{alg:langevin_single_obs}.

Moreover, it is worth noting that Solution 2 is ineffective when we train the network using the score matching loss on $n$ data as in \Cref{alg:regularized_langevin}. Since the support boundary of $\pn_\theta(\Xn)$ is much sharper, adding a small amount of Gaussian noise does not help too much, compared to the case when we target at $p_\theta(X)$. The resulting noisy score still has a drastic change near the original boundary, which makes the training problematic.

\begin{figure}[htbp]
    \centering
    \includegraphics[width=0.99\textwidth]{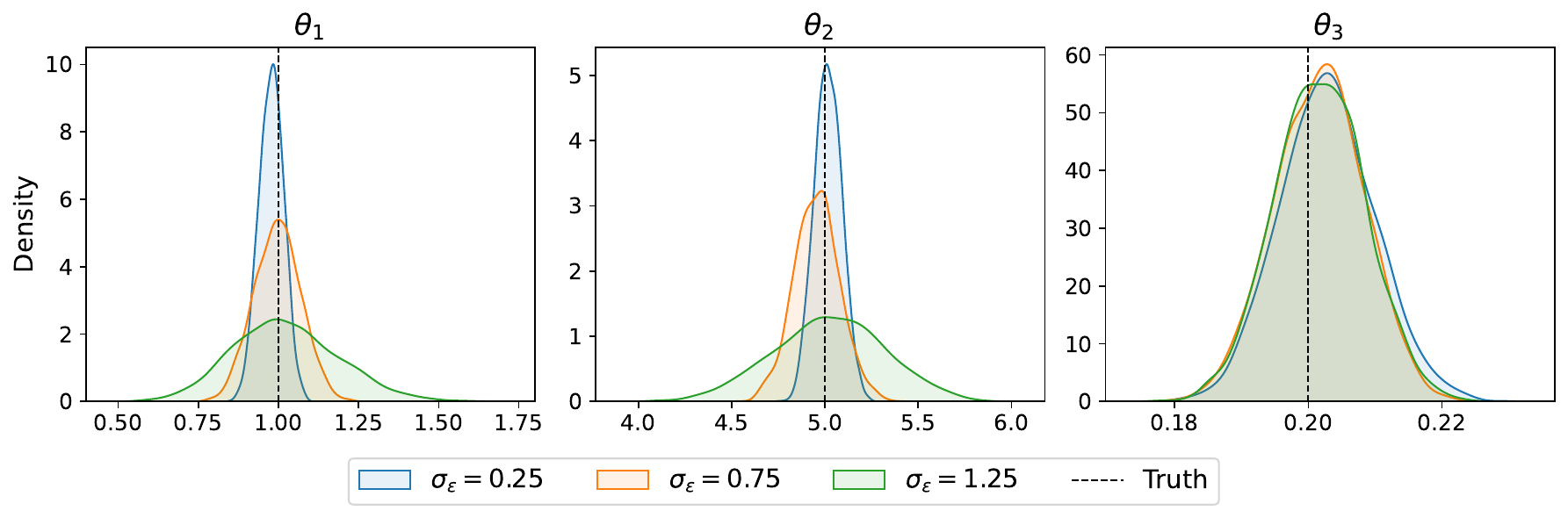}
    \caption{Results of \Cref{alg:langevin_single_obs} on the queuing model under different noise levels}
    \label{fig:queuing_noise}
\end{figure}

As a result, we recommend using Solution 1 when the data has strong dependency structure or the weight function can provide a lot of information into the sampling procedure, as we observe for $\theta_1$ of the queuing model. Solution 2 is preferred and straightforward when the data has i.i.d. structure.

\subsubsection{Implementation details}\label{sec:que_settings}

{\bf single-model training details.}
In order to address the boundary issue, we add noise to all $X$ with $\sigma_\varepsilon = 0.25$. We have $N = 2\times 10^5$ for the reference table $\mathcal{D}^S$, and $(N_R, m_R) = (1\times 10^4, 5\times10^2)$ for the reference table $\mathcal{D}^R$. We use an ELU neural network with $3$ hidden layer and $64$ units in each hidden layer. The network is trained with batch size $500$ and learning rate gradually decreasing from $1\times 10^{-3}$ to $1\times 10^{-5}$, for $200$ epochs or till convergence, and another $50$ epochs after adding the curvature regularization. The mean regression is implemented using an ELU neural network with $3$ hidden layer and $64$ units in the hidden layer. The network is trained with batch size $10$ and learning rate gradually decreasing from $1\times 10^{-3}$ to $1\times 10^{-5}$, for $500$ epochs or until convergence, and another $10$ epochs after adding the curvature regularization. The curvature penalty parameters are chosen as $\lambda_1=1\times 10^{-3}$ in the score matching part and $\lambda_2=1\times 10^{-4}$ in the mean regression part. For sampling, we inject $3$ different sets of noise to $\Xn^\ast$ and obtain $1\;000$ samples using each set of noisy observed data, and aggregate them to get $3\;000$ posterior samples.

{\bf n-model training details.} We have reference table size $N = 2\times 10^4$. We use a Tanh neural network with $2$ hidden layer and $64$ units in the hidden layer. The network is trained with batch size $500$ and learning rate $1\times 10^{-3}$, for $3000$ epochs or till convergence. For sampling, we obtain $8\;000$ samples from $10$ independent Langevin Markov chains.

{\bf Details of NLE and NPE.} We use the ``sbi" Python package from \citep{tejero-cantero2020sbi} to implement both NLE and NPE, and we choose the flow network as a Masked Autoregressive Flow network \citep{papamakarios2017masked} with $5$ autoregressive transforms, each of which has $2$ hidden layers of $50$ units each. For NLE, we have reference table size $N = 1\times 10^7$ . The network is trained with batch size $200$ and learning rate $5\times 10^{-4}$, for $30$ epochs. We draw $2\;000$ posterior samples from $4$ Markov Chains using the trained NLE model. For NPE, we have a permutation invariant embedding network $f_\phi(\Xn) = f_{\phi_1}(\sum_{i=1}^n f_{\phi_2}(X_i))$ before the flow model, where $f_{\phi_2}$ is a fully connected neural network with $3$ hidden layers of $128$ units each and output dimension $128$, and $f_{\phi_1}$ is a fully connected neural network with $2$ hidden layers of $100$ units each and output dimension $20$. We have reference table size $N = 2\times 10^4$. The network is trained with batch size $200$ and learning rate $5\times 10^{-4}$ until convergence. We draw $10\;000$ posterior samples from the trained NPE model.

{\bf Details of other models.}  For BSL, we use the sample mean and element-wise standard deviation of $\Xn$ as  summary statistics.
For the ABC method, we generate reference table of size $N=20\,000$ and keep $200$ samples that have the smallest $\text{W}_1$ distance. For the BSL method, we obtain $20\,000$ samples from $10$ independent Markov chains, where each chain is drawn using Metropolis–Hastings algorithm, with length $3\;000$ ($1\,000$ burn-in's), and we use $100$ simulations at each iteration to estimate the mean and covariance of the synthetic Gaussian likelihood.

{\bf Simulation cost comparison.} The simulation cost for all the methods is listed in \Cref{simu_budget_que},
where one unit is the cost of generating $n$ observations. For the n-model and ABC method, the cost is the size $N$ of the reference table. For BSL, its cost is the product of the number of chains, the length per chain, and the number of simulations at each iteration within each chain. For the single-model, its cost is $N/n + N_Rm_R/n$.

\begin{table}[!ht]
    \centering
\caption{Simulation cost in the queuing example}
\begin{tabular}{cccccc}
\toprule
single-model & n-model & ABC & BSL & NLE & NPE \\
\midrule
 $1.04\times 10^4$ & $2\times 10^4$ & $2\times 10^4$ & $3\times10^6$ & $2\times 10^4$ & $2\times 10^4$\\
\bottomrule
\end{tabular}
    \label{simu_budget_que}
\end{table}

\subsection{Details of the monotonic regression example}\label{sec:mono_reg_details}

\subsubsection{Data generating process and the true posterior}
Following \citet{mckay2011variable}, we approximate the tanh function by a Bernstein polynomial function of degree $M$
\begin{equation}\label{BP_func}
    B_M(x) = \sum_{k=0}^M\beta_k \binom{M}{k} x^k (1-x)^{M - k},
\end{equation}
where the coefficients $\beta = (\beta_0, ..., \beta_M)$ are subject to the constraints that $\beta_{k-1} \le \beta_k$ for all $k = 1,..., M$, in order to ensure monotonicity of $B_M(\cdot)$. For convenience of computation, the following reparameterization is employed:
\begin{equation*}
\theta_0 = \beta_0, \quad \theta_k = \beta_k - \beta_{k-1},\quad k = 1,..., M,
\end{equation*}
and the final approximation model with parameter $\theta = (\theta_0, ...,\theta_M)$ can be written as
\begin{equation}\label{model_BP}
y_i = \sum_{k=0}^M \theta_k b_M(x_i, k) + \varepsilon_i, \quad \varepsilon_i \overset{i.i.d.}{\sim} N(0, \sigma^2) \quad  \text{ s.t. } \theta_k \ge 0,\, k = 1,...,M
\end{equation}
where $b_M(\cdot, k)$ has a known form and can be derived from \eqref{BP_func}.

We generate $n=1000$ i.i.d. data $\{(x_i, y_i): i = 1,..., n\}$ and set polynomial order $M = 10$, which provides models flexible enough to approximate the $\text{tanh}(\cdot)$ function. The prior is uniform on $[-5, 5] \times [0, 1]^M$.  Under this prior, the posterior is a multivariate normal distribution with mean $(\bm{D}^\top \bm{D})^{-1} \bm{D}^\top \bm{y}$ and covariance $\sigma^2 (\bm{D}^\top \bm{D})^{-1}$ truncated at the prior domain, where $\bm{D}$ denotes the design matrix corresponding to \eqref{model_BP}.

Although the posterior has a closed form, it is challenging to directly sample from this truncated normal distribution, because the features of the design matrix $\bm{D}$ are highly correlated, making the covaraince matrix ill-conditioned, and samples drawn from the corresponding untruncated normal distribution barely fall into the domain. Therefore, we follow \citep{mckay2011variable} and use a Gibbs sampling algorithm to draw samples from the true posterior, and treat these samples as the ground truth.

\subsubsection{Implementation details}

For all optimizations in this example, we use Adam.

{\bf Localization.} In this example, the generator $\tau(\theta, \Zn)$ is straightforward: each $Z_i$, $i=1, \dots, n$, consists of a set of i.i.d. $\text{Uniform}(0, 1)$ random variables for generating ${x_i}$'s and a set of i.i.d. $\mathcal{N}(0, 1)$ random variables for generating $\varepsilon_i$'s.
We set Adam with learning rate gradually decreasing from $10^{-1}$ to $10^{-3}$. Each run converges around $500$ iterations, so the simulation cost to obtain $100$ samples is around $5\times 10^4$ (in unit of $\Xn$).

{\bf single-model training details.} We have $N = 2\times 10^6$ for the reference table $\mathcal{D}^S$, and $(N_R, m_R) = (1\times 10^5, 1\times 10^3)$ for the reference table $\mathcal{D}^R$. We use an ELU neural network with $3$ hidden layers and $64$ units in each hidden layer. The network is trained with batch size $1\,000$ and learning rate gradually decreasing from $10^{-3}$ to $10^{-5}$, for $100$ epochs or till convergence, and another $50$ epochs after adding the curvature regularization. The mean regression is implemented using an ELU neural network with $3$ hidden layers and $64$ units in each hidden layer. The network is trained with batch size $256$ and learning rate gradually decreasing from $10^{-3}$ to $10^{-5}$, for $300$ epochs or until convergence, and another $100$ epochs after adding the curvature regularization. The curvature penalty parameter is chosen as $10^{-3}$ in both the score matching and the mean regression parts. For our method, we obtain $10\,000$ samples from $1\,000$ independent Langevin Markov chains, and  use an annealing schedule to mitigate the effect of strong correlation, with the tempering parameter increases from $0.1$, $0.2 \ldots$, to $1$. It is worth mentioning that since the estimated score by the neural network is naturally vectorized, drawing multiple independent Langevin Markov chains is computationally efficient.

{\bf n-model training details.} We have reference table size $N = 2\times 10^5$. We use an ELU neural network with $3$ hidden layers and $64$ units in each hidden layer. The network is trained with batch size $200$ and learning rate gradually decreasing from $10^{-3}$ to $10^{-5}$, for $300$ epochs or till convergence, and another $50$ epochs after adding the curvature regularization. The curvature penalty parameter is chosen as $1$. The configuration of the n-model-5x is basically the same, except that we increase the training data to $N = 5\times 10^5$ and decrease the maximum number of epochs by $5$ times. The sampling schedule for both n-model and n-model-5x is the same as single-model.

{\bf Details of NLE and NPE.} We use the ``sbi" Python package from \citep{tejero-cantero2020sbi} to implement both NLE and NPE, and we choose the flow network as a Masked Autoregressive Flow network \citep{papamakarios2017masked} with $5$ autoregressive transforms, each of which has $2$ hidden layers of $50$ units each. For NLE, we have reference table size $N = 2 \times 10^8$ . The network is trained with batch size $200$ and learning rate $5\times 10^{-4}$, for $2$ epochs. We draw $1\,000$ posterior samples from $4$ Markov Chains using the trained NLE model. For NPE, we have a permutation invariant embedding network $f_\phi(\Xn) = f_{\phi_1}(\sum_{i=1}^n f_{\phi_2}(X_i))$ before the flow model, where both $f_{\phi_1}$ and $f_{\phi_2}$ are fully connected neural networks with $3$ hidden layers of $128$ units each and output dimension $128$. We have reference table size $N = 2 \times 10^5$. The network is trained with batch size $200$ and learning rate $5\times 10^{-4}$ for $300$ epochs or until convergence. We draw $10\,000$ samples from the estimated posterior and samples are reweighted due to the use of proposal distribution.

{\bf Details of other models.}
For the Gibbs posterior, we obtain $100\;000$ samples from $10$ independent runs as the ground truth. For the ABC method, we generate a reference table of size $N=1\times10^6$ and keep $1\,000$ samples that have the smallest $\text{W}_1$ distance, and we reweight the samples by $\frac{\pi(\theta)}{q(\theta)}$. For the BSL method,
we choose the summary statistics to be a part of the sufficient statistics. Specifically, the sufficient statistics in this example is $(\bm{D}\bm{D}^T, \bm{D}^T \bm{y})$, where $\bm{D} = \bm{D}(\bm{x})$ is the design matrix containing the polynomial features of $\bm{x}$. We choose only the second part $\bm{D}^T \bm{y}$ as the summary statistics, because the first part $\bm{D}\bm{D}^T$ does not depend on $\theta$, concentrating around its mean, and has much higher dimension than the second part.
We obtain $10\;000$ samples from $10$ independent Markov chains, where each chain is drawn using Metropolis–Hastings algorithm, with length $1\;200$ ($200$ burn-in's), and we use $100$ simulations at each iteration to estimate the mean and covariance of the synthetic Gaussian likelihood.

{\bf Simulation cost comparison.} The simulation cost for all the methods is listed in \Cref{simu_budget_reg}, where one unit is the cost of generating $n$ observations. For the n-model and ABC method, the cost is the size $N$ of the reference table. For BSL, its cost is the product of the number of chains, the length per chain, and the number of simulations at each iteration within each chain. For the single-model, its cost is $N/n + N_Rm_R/n$.

\begin{table}[htbp]
    \centering
\caption{Simulation cost in the monotonic regression example}
\resizebox{\textwidth}{!}{\begin{tabular}{cccccccc}
\toprule
Localization & single-model & n-model & n-model-5x & ABC-W1 & BSL & NPE & NLE \\
\midrule
$5 \times 10^4$ & $1.02\times 10^5$  & $2\times 10^5$ & $1\times 10^6$ & $1\times 10^6$ & $1.2\times 10^6$ & $2\times 10^5$ & $2\times 10^5$ \\
\bottomrule
\end{tabular}
}
    \label{simu_budget_reg}
\end{table}

\subsubsection{Comparison between single-model and n-model}\label{sec:mono_reg_comp}
Since the true score is available in this example, we take a closer look at how different components of the score-matching procedure affect estimator accuracy. We evaluate three types of score-matching losses: (1) on a single observation $(\theta, X)\sim q(\theta)p_\theta(X)$  (loss-1), (2) on $n$ observations $(\theta, \Xn)\sim q(\theta)\pn_\theta(\Xn)$ (loss-$n$), and (3) on the posterior draws $\theta \sim\pi_n(\theta\mid \Xn^\ast)$ (loss-$n$-posterior). We consider 7 different models all trained from the same proposal distribution $q(\theta)$. There are 4 variants of the score matching on single observation: (1) without debiasing or curvature (1-model), (2) with only curvature (1-model-C), (3) with only debiasing (1-model-D) and (4) with both curvature and debiasing as in \Cref{alg:langevin_single_obs}
(1-model-DC). For the model trained on matching $n$ observations, we consider 3 variants: (1) without curvature (n-model), (2) with curvature as in \Cref{alg:regularized_langevin} (n-model-C), and (3) with curvature and trained on reference table of size $5N$ (n-model-C5). We report the three losses for all 7 models in 10 experiments  in \Cref{score_loss_plt}.

\begin{figure}[!ht]
    \centering
    \includegraphics[width=0.8\textwidth]{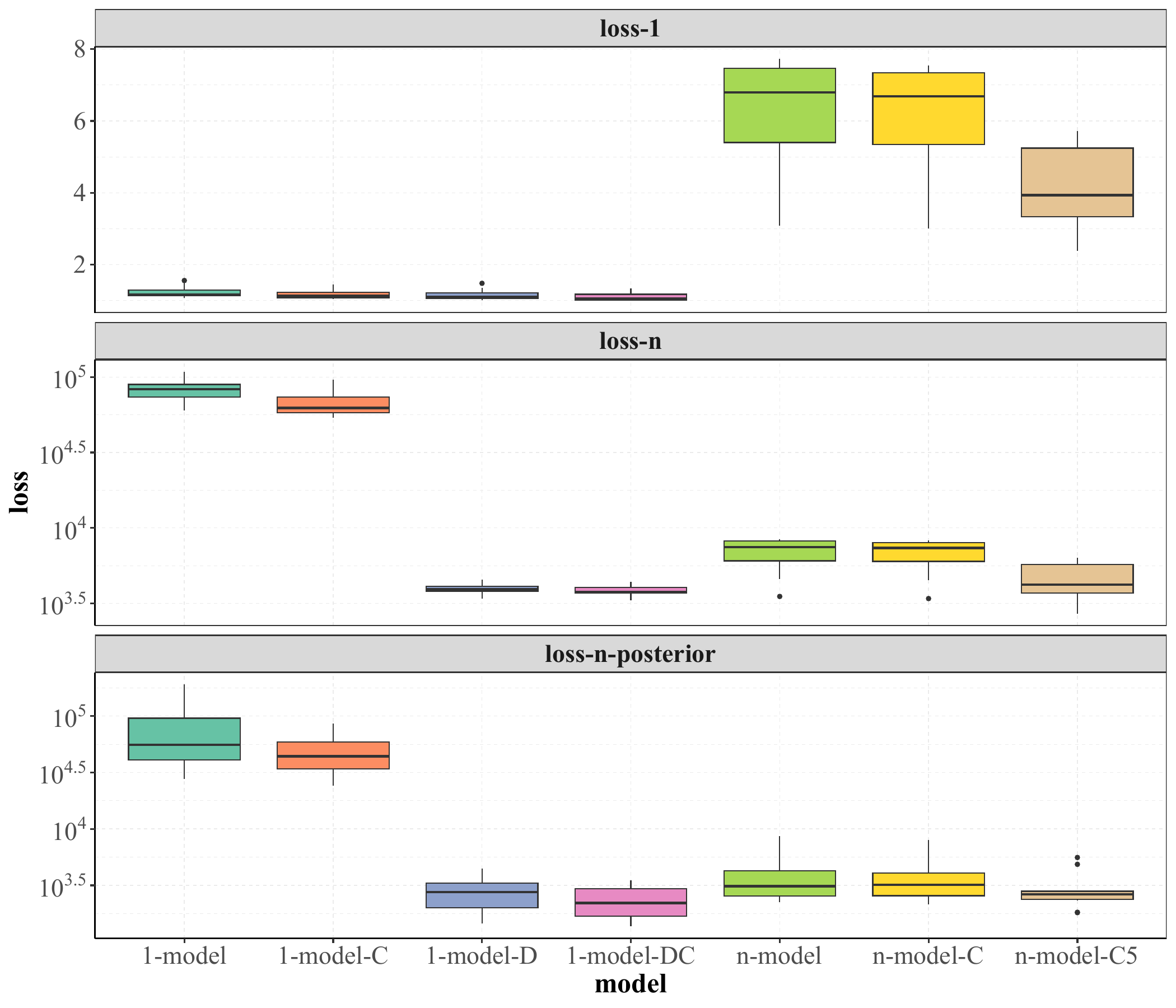}
    \caption{Score estimation loss. ``D" indicates debiasing and ``C" indicates curvature penalty.}
    \label{score_loss_plt}
\end{figure}

From \Cref{score_loss_plt}, we observe that the single-model with both debiasing and curvature penalty (1-model-DC) consistently achieves the lowest losses across all three criteria, followed by the model with debiasing only (1-model-D). Among the four single-observation variants, the debiasing step contributes much more substantially to error reduction than the curvature penalty. For the $n$-observation models, the curvature penalty alone (n-model-C) yields only a modest improvement over the baseline (n-model). Interestingly, 1-model-DC outperforms n-model-C even with a smaller simulation budget. Increasing the size of the reference table by five times (n-model-C5) reduces the losses further, but the gap with 1-model-DC remains. This suggests that the variation of $\theta$ values plays an important role in controlling out-of-sample loss: since 1-model-DC only requires generating one $X$ per $\theta$, the training dataset $\mD^S$ contains a wider spread of distinct parameter values than the reference table $\mD$ used for the $n$-model variants.

{
\subsection{Details of the mRNA transfection model example}\label{sec:mrna_details}
\subsubsection{Data generating process}
For each cell $i$, let $m^{(i)}(t)$ and $p^{(i)}(t)$ denote the amounts of mRNA and green fluorescent protein (GFP) molecules at time $t$, respectively. Their dynamics are governed by the following stochastic differential equation (SDE):
{\footnotesize
\begin{align*}\label{eq:mRNA_dgp}
\begin{pmatrix}
m^{(i)} \\
p^{(i)}
\end{pmatrix}
(t) &=
\begin{pmatrix}
0 \\
0
\end{pmatrix}
\text{ for } 0\leq t < t_0^{(i)} \text{, and }
\begin{pmatrix}
m^{(i)} \\
p^{(i)}
\end{pmatrix}
(t_0^{(i)}) =
\begin{pmatrix}
m_0 \\
0
\end{pmatrix} \\
d
\begin{pmatrix}
m^{(i)} \\
p^{(i)}
\end{pmatrix}
(t)
&=
\begin{pmatrix}
-\delta^{(i)} \cdot m^{(i)}(t) \\
k^{(i)} \cdot m^{(i)}(t) - \gamma^{(i)} \cdot p^{(i)}(t)
\end{pmatrix}
 dt
+
\begin{pmatrix}
\sqrt{\delta^{(i)} \cdot m^{(i)}(t)} & 0 \\
0 & \sqrt{k^{(i)} \cdot m^{(i)}(t) + \gamma^{(i)} \cdot p^{(i)}(t)}
\end{pmatrix}
dB_t^{(i)}
\end{align*}}
where $(\delta^{(i)}, \gamma^{(i)}, k^{(i)}, t_0^{(i)})
\sim \mN\left((\mu_\delta, \mu_\gamma, \mu_k, \mu_{t_0}), \text{diag}(\tau_\delta^2, \tau_\gamma^2, \tau_k^2, \tau_{t_0}^2) \right)$ and $B_t^{(i)}$ is a standard Brownian motion. The observed data are noisy transformations of the GFP trajectory:
\[
X^{(i)}(t) = \log(\text{scale}\cdot p^{(i)}(t) + \text{offset}) + \varepsilon^{(i)}(t), \qquad \varepsilon^{(i)}(t) \sim \mN(0, \sigma^2)
\]
We observe i.i.d. data from $n$ cells. For each cell, the trajectory is discretized using the Euler–Maruyama scheme and recorded at $180$ time points.

\subsubsection{Implementation details in simulation study}
{\bf Prior.} We choose independent Gaussian priors for
\[
(\log m_0, \log \text{scale}, \log \text{offset}, \log \text{sigma}, \mu_\delta, \mu_\gamma, \mu_k, \mu_{t_0}, \log {\tau_\delta}, \log {\tau_\gamma}, \log {\tau_k}, \log {\tau_{t_0}})
\]
with mean
\[
(5, 1, 3, -1.5, -0.694, -3, 0.027, 0, -0.8, -0.8, -0.8, -0.8)
\]
and standard deviation
\[
(1, 1, 1, 1, 0.6, 0.5, 1, 1, 0.5, 0.5, 0.5, 0.5)
\]

{\bf Localization.} In the generator $\tau(\theta, \Zn)$, $\Zn$ is simply a set of i.i.d. standard gaussian variables. Besides, since the dynamics is muted before $t_0$, the differentiability for the two $t_0$ related parameters $\mu_{t_0}$ and $\tau_{t_0}$ is broken. To address this, we use a smooth version indicator function $\I_{smooth}(t) = \frac{1}{1+\exp(-1000t)}$ to approximate the truncation at $t_0$ and restore differentiability. We use Adam with learning rate $1\times 10^{-1}$ for optimization. Each run converges within $200$ iterations, so the simulation cost to obtain $100$ samples is around $2\times 10^4$ (in unit of $\Xn$).

{\bf single-model training details.} We have $N = 1\times 10^7$ for the reference table $\mathcal{D}^S$, and $(N_R, m_R) = (1\times 10^6, 2\times 10^2)$ for the reference table $\mathcal{D}^R$. We use independent ELU networks for each output dimension to address the scale difference among different dimensions of the score, where each ELU neural network has $3$ hidden layers of $128$ units each. The network is trained with batch size $1\,000$ and learning rate gradually decreasing from $3\times10^{-4}$ to $6.25\times10^{-6}$, for $300$ epochs or till convergence, and another $6$ epochs after adding the curvature regularization. The mean regression is implemented using independent ELU networks for each output dimension, with $3$ hidden layers and $128$ units in each hidden layer for each network. The network is trained with batch size $64$ and learning rate gradually decreasing from $10^{-4}$ to $10^{-5}$, for $150$ epochs or until convergence, and another $1$ epoch after adding the curvature regularization. The curvature penalty parameter is chosen as $10^{-2}$ in both the score matching and the mean regression parts. We obtain $8\,000$ samples from $10$ independent Langevin Markov chains.

{\bf Details of NLE and NPE.} We use the ``sbi" Python package from \citep{tejero-cantero2020sbi} to implement both NLE and NPE, and we choose the flow network as a Masked Autoregressive Flow network \citep{papamakarios2017masked} with $5$ autoregressive transforms, each of which has $2$ hidden layers of $50$ units each. For NLE, we have reference table size $N = 2.1 \times 10^8$ . The network is trained with batch size $200$ and learning rate $5\times 10^{-4}$, for $2$ epochs. We draw $2\,000$ posterior samples from $4$ Markov Chains using the trained NLE model. For NPE, we have a permutation invariant embedding network $f_\phi(\Xn) = f_{\phi_1}(\sum_{i=1}^n f_{\phi_2}(X_i))$ before the flow model, where both $f_{\phi_1}$ and $f_{\phi_2}$ are fully connected neural networks with $3$ hidden layers of $128$ units each and output dimension $128$. We have reference table size $N = 1.05 \times 10^6$. The network is trained with batch size $200$ and learning rate $3\times 10^{-4}$ for $200$ epochs or until convergence. We draw $10\,000$ samples from the estimated posterior and samples are reweighted due to the use of proposal distribution.

{\bf Details of ABC.}
We generate a reference table of size $N=1.05\times10^6$ and keep $10\,000$ samples that have the smallest $\text{W}_1$ distance, and we reweight the samples by $\frac{\pi(\theta)}{q(\theta)}$.

{\bf Simulation cost comparison.} The simulation cost for all the methods is listed in \Cref{simu_budget_SDE_simu}, where one unit is the cost of generating $n$ observations.

\begin{table}[htbp]
    \centering
\caption{Simulation cost in the simulation study of the mRNA transfection example}
\begin{tabular}{ccccc}
\toprule
Localization & single-model & ABC-W1 & NPE & NLE \\
\midrule
$2\times 10^4$ & $1.05 \times 10^6$ & $1.05 \times 10^6$ & $1.05 \times 10^6$ & $1.05 \times 10^6$ \\
\bottomrule
\end{tabular}
 \label{simu_budget_SDE_simu}
\end{table}

\subsubsection{BSL results in simulation study}\label{sec:SDE_BSL}
We implement BSL using synthetic data in the mRNA transfection example. Following \citet{wood2010statistical}, we design a $10$-dimensional summary for each time-series observation, including the mean, standard deviation, skewness, kurtosis, quadratic variation and autocorrelation at several lags. For each simulated dataset, we then compute the component-wise mean and standard deviation of these summaries across the 200 observations, resulting in a 20-dimensional dataset-level summary.

We run a Metropolis–Hastings algorithm with the synthetic Gaussian likelihood, where the mean and covariance are estimated by $200$ simulated datasets at each iteration. We find the acceptance rate and convergence critically depend on the proposal choice and the initialization.

In \Cref{fig:SDE_trace_plots}, we show BSL trace plots initialized at the localization mean and at the true parameter, together with the trace plot produced by our method. Particularly, we construct a Gaussian proposal using the posterior obtained by our method, which significantly increases the acceptance rate. However, poor initializations still make the chain converge slowly.
The BSL chains exhibit slow movement and poor mixing, especially when initialized at the localization mean. In contrast, the chain generated by our method mixes well around the target region, by virture of the learned score information. Since BSL produces a reasonable posterior sample only if initialized at the true parameter and used with a carefully chosen proposal distribution, we exclude it from the main paper.
\begin{figure}[htbp]
    \centering

    \begin{subfigure}{0.99\textwidth}
        \centering
        \includegraphics[width=\textwidth]{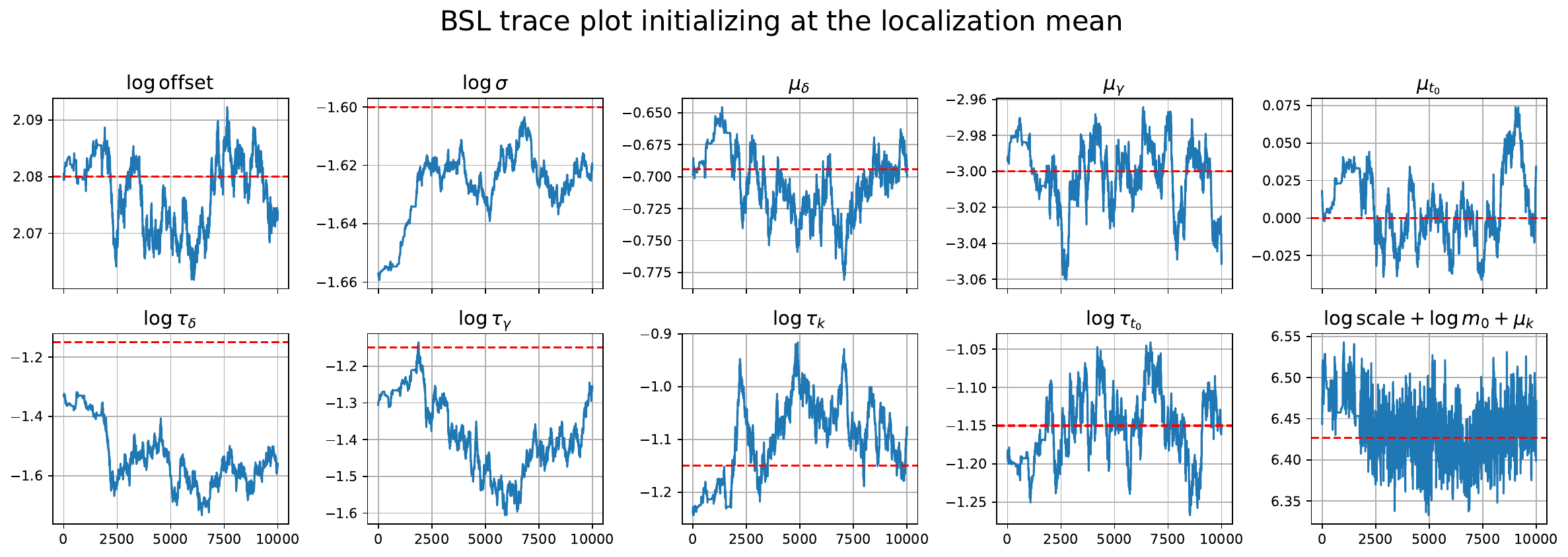}
        \label{fig:trace_plot1}
    \end{subfigure}
    \hfill
    \begin{subfigure}{0.99\textwidth}
        \centering
        \includegraphics[width=\textwidth]{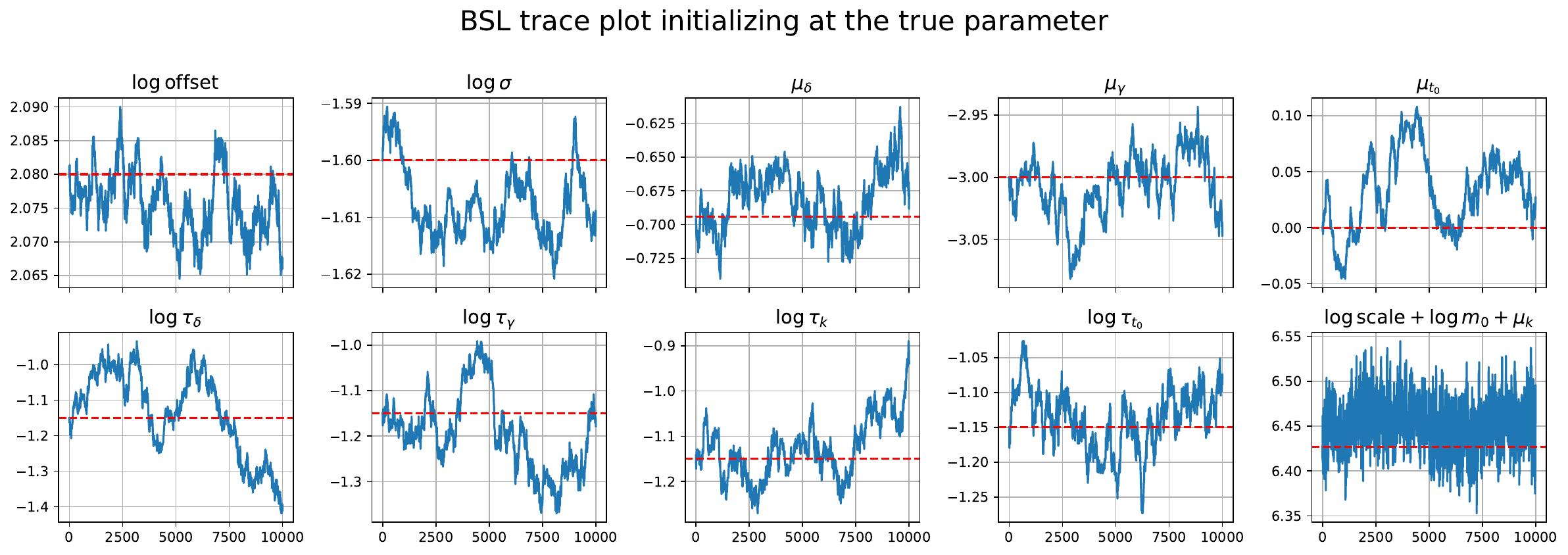}
        \label{fig:trace_plot2}
    \end{subfigure}
    \hfill
    \begin{subfigure}{0.99\textwidth}
        \centering
        \includegraphics[width=\textwidth]{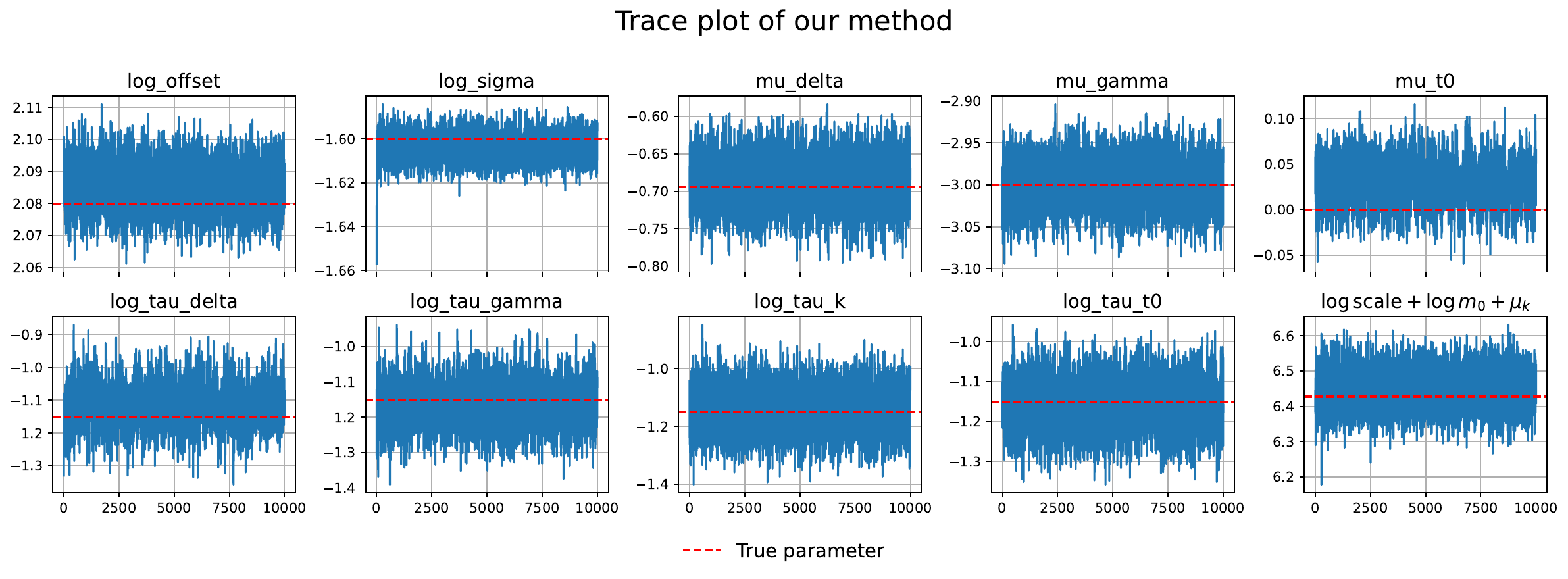}
        \label{fig:trace_plot3}
    \end{subfigure}

    \caption{Comparison of the three trace plots.}
    \label{fig:SDE_trace_plots}
\end{figure}

\subsubsection{Implementation details in real data application}
We choose the same prior as \citet{haggstrom2026simulation}, where the parameters
$(\log m_0, \log \text{scale}, \log \text{offset}, \log \sigma, \mu_\delta, \mu_\gamma, \mu_k, \mu_{t_0})$
have independent Gaussian priors with mean
$(5, 1, 3, -1, -1, -5, 0.5, 0)$
and standard deviation
$(1, 1, 1, 1, 1, 2, 1, 1)$.
The four precision parameters
$(\tau_\delta^{-2}, \tau_\gamma^{-2}, \tau_k^{-2}, \tau_{t_0}^{-2})$
have Gamma priors with shape parameters
$(2, 2, 2, 2)$
and rate parameters
$(0.5, 0.5, 0.5, 0.5)$.

The training settings are mostly the same as those in the simulation study. For the single model, we still use $N = 1\times 10^7$ for the reference table $\mathcal{D}^S$, and $(N_R, m_R) = (1\times 10^6, 2\times 10^2)$ for the reference table $\mathcal{D}^R$. To let all methods have the same simulation budget, we have $N=5\times 10^7$ single data points for NLE, and $N = 1.25\times 10^6$ for NPE and ABC. Besides, we only keep the best $1000$ samples in ABC, since using a larger acceptance rate leads to poorer posterior quality. The simulation cost for localization and all the methods is listed in \Cref{simu_budget_SDE_realdata}, where one unit is the cost of generating $n$ observations.

\begin{table}[htbp]
    \centering
\caption{Simulation cost in the simulation study of the mRNA transfection example}
\begin{tabular}{ccccc}
\toprule
Localization & single-model & ABC-W1 & NPE & NLE \\
\midrule
$1.2\times 10^4$ & $1.25 \times 10^6$ & $1.25 \times 10^6$ & $1.25 \times 10^6$ & $1.25 \times 10^6$ \\
\bottomrule
\end{tabular}
 \label{simu_budget_SDE_realdata}
\end{table}

\subsubsection{Additional results in real data application}\label{sec:more_res_SDErealdata}
\textbf{NLE result.}
We present the posterior density result of NLE in \Cref{fig:NLE_SDE_realdata}. Its estimation of $\mu_\delta, \mu_\gamma, \mu_{t_0}, \log \tau_{t_0}$ are significantly different from the other methods, and could not fit the observed data.

\begin{figure}[!htbp]
    \centering
    \includegraphics[width=0.9\textwidth]{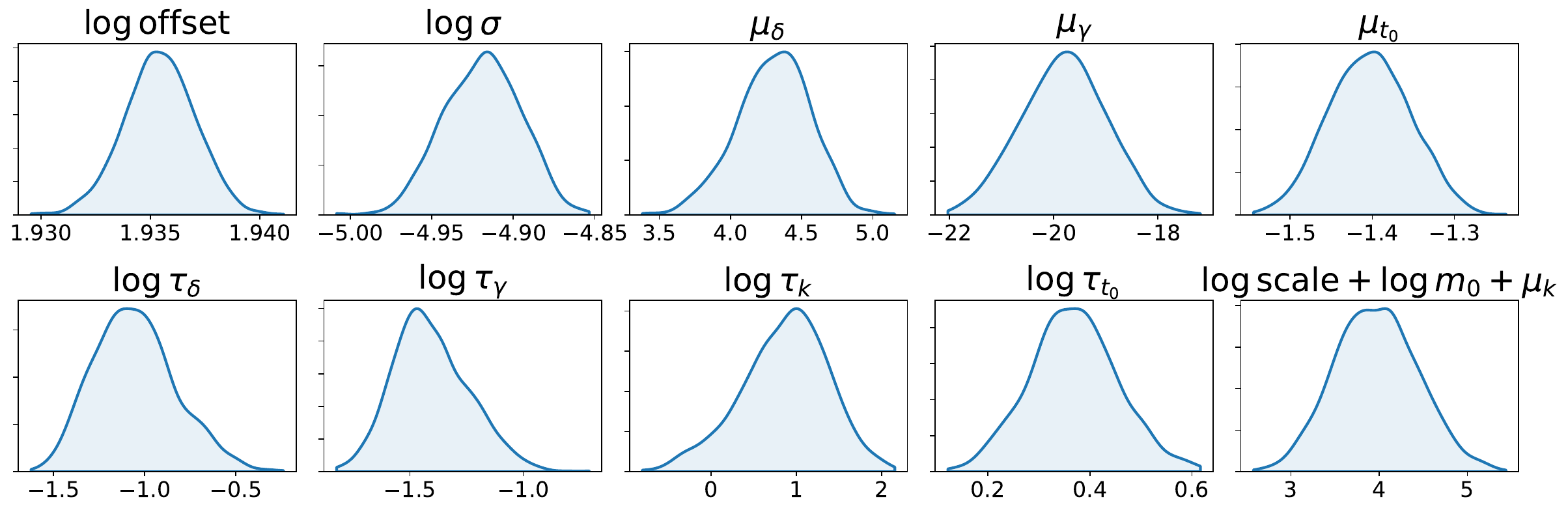}
    \caption{Posterior density plot of NLE results in the real data application.}
    \label{fig:NLE_SDE_realdata}
\end{figure}

\textbf{Posterior predictive distributions from each mode of the single-model result.}
We examine the posterior predictive distributions associated with the two posterior modes in \Cref{fig:SDE_realdata_yt_mode12}. Both modes provide a reasonable fit to the observed data, although the posterior predictions from Mode 2 are slightly higher than those from Mode 1.

\begin{figure}[!htbp]
    \centering
    \includegraphics[width=0.9\textwidth]{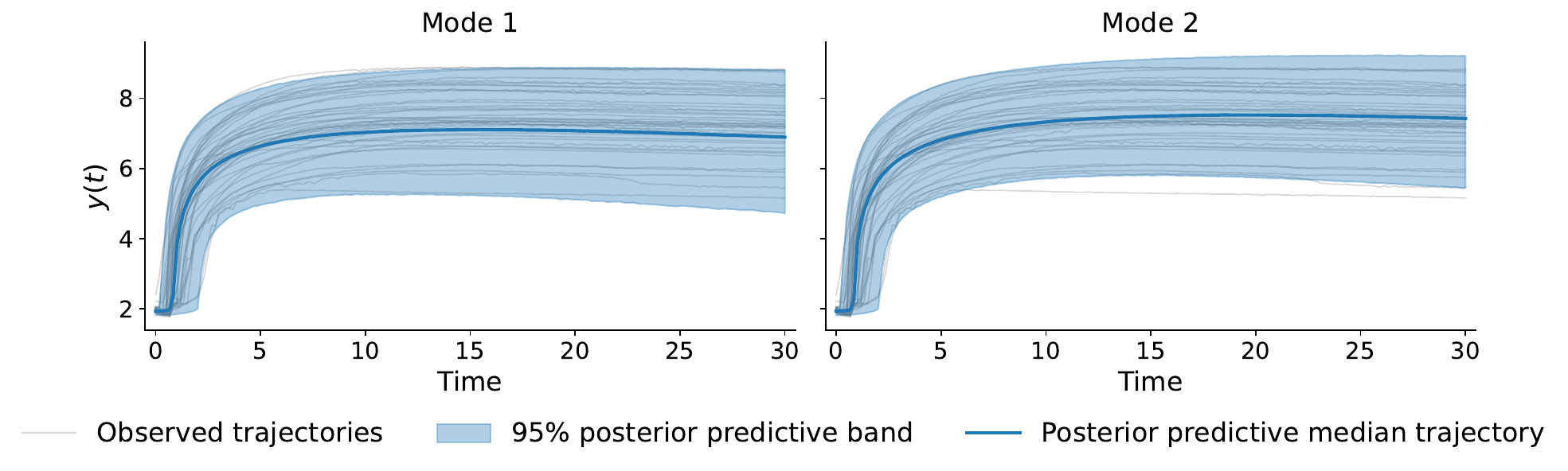}
    \caption{Posterior predictive distributions from each of the two modes of our result in the real data application.}
    \label{fig:SDE_realdata_yt_mode12}
\end{figure}
}

\subsection{Additional example - stochastic epidemic model}\label{sec:sto_epi_details}

In this subsection, we demonstrate the effectiveness of our proposed method on partially observed stochastic susceptible-infected (SI) models introduced by \citet{chatha2024neural}. This model is motivated by real-world problem of healthcare-associated infections (HAIs) that patients acquire infections during their stay in a healthcare facility, often transmitted via healthcare workers. The model has intractable likelihood, and the number of parameters varies according to the healthcare facility, making it a great fit for evaluating our proposed methods. In this example, due to the dependent nature of the data, we only include n-model in this example.

\subsubsection{Data generating process}

We briefly introduce the model; for full details, we refer readers to Section 3 of \citet{chatha2024neural}. The setting involves monitoring a healthcare facility with $n$ individuals over $T$ time steps, and the individuals are distributed across $R$ rooms on $J$ floors.

Each individual $i = 1, \dots,n$ has a binary infection status $Y_{i,t}$ at time $t = 1, \dots, T$, where $Y_{i,t} = 1$ indicates ``infected'', $Y_{i,t} = 0$ denotes ``susceptible'', and there is no ``recovered'' state in the application context. The infection transition is modeled as:
\begin{equation*}
    \mathbb{P}(Y_{i,t} = 1 \mid Y_{i,t-1} = 0) = 1-e^{-\lambda_i(t)},
\end{equation*}
where $\lambda_i(t)$ is the infection risk of individual $i$ at time $t$, determined by all currently infected individuals in the facility:
\begin{align*}
\lambda_{i \leftarrow l} &= \frac{\beta_0}{n} + \frac{\beta_{F_{(i)}}}{n_F}\cdot\bm{1}\{\text{$i$ and $l$ are floormates}\} + \frac{\beta_{J+1}}{n}\cdot\bm{1}\{\text{$i$ and $l$ are roommates}\} \\
\lambda_i(t) &= \sum_{l:Y_{l,t-1}} \lambda_{i \leftarrow l},
\end{align*}
where $F(i) \in \{1, \dots, J\}$ denotes the floor assignment of individual $i$ and $n_F$ is the number of individuals per floor. The model parameters include the facility transmission rate $\beta_0$, floor-specific transmission rate $\beta_j$ for $j=1,\dots,J$, and the roommate transmission rate $\beta_{J+1}$.

 The model also accounts for two real-world complexities. First is the  random intake and outtake. At every $t$, each individual is discharged with  probability $\gamma$ and immediately replaced by a new admission. Each new admission carries the pathogen with probability $\alpha$. The second phenomenon is partial observation of cases. Infections are not always observed and infected individuals may be asymptomatic carriers. Let $X_{i,t}$ denote the observed status. An infection is observed ($X_{i,t} = 1$) if (1) $X_{i,t-1} = 1$ and either the individual exhibits symptoms with probability $\eta$ or the individual is newly admitted and is detected due to entrance screening. Moreover, once $X_{i,t} = 1$ is observed, all future observations $X_{i,s}$ for $s > t$ remain 1, unless the individual is discharged.

In summary, the observed data is $\Xn^* = \{X_{i,t}:\;i = 1,\dots,n, t=1,\dots,T \} \in \mathbb{R}^{nT}$, and the model parameters are the transmission rates $\theta = \{\beta_0, \beta_1, \dots, \beta_{J+1}\}$. Since the dimension of the data can be high due to a large $n$, \citet{chatha2024neural} proposed to do inference based on summary statistics $S\in \mathbb{R}^{T\times (J+2)}$ of $\Xn$, and they showed that empirically this summary statistics captures enough information and can produce high quality inference. Specifically, the summary statistics at time $t$, $S_t$, records the following information at time $t$: the observed infection proportion in the facility and in each floor, and the proportion of rooms in which both people are infected. For all methods included in our simulation study, we use these summary statistics instead of $\Xn$ for inference.

We consider two simulation scenarios. We adopt setting 1 from  \citep{chatha2024neural}, where $T=52, J=5, n=300$ and each room has $2$ individuals. The hyperparameters are $\gamma = 0.05$, $\alpha = \eta = 0.1$, and the true model parameter is ${\theta}^\ast = \{\beta_0^\ast, \beta_1^\ast, \dots, \beta_{J+1}^\ast\} = (0.05, 0.02, 0.04, 0.06, 0.08, 0.1, 0.05)$, and the prior for $\theta$ is log-normal$(-3, 1)$ for all coordinates. For setting 2, we move on to higher dimensions and  a less informative prior. We maintain the setting in Simulation 1, except adding 5 floors with transmission rate $(0.12, 0.14, 0.16, 0.18, 0.2)$, which increase the total number of individuals to $n=600$. We choose the prior to be log-normal$(-3,2)$ on all coordinates. We provide an example of how the ratio of infection evolves under the two simulation settings in \Cref{repre_SIdata}. We compare our proposed method with the NPE method used in \citep{chatha2024neural}, BSL and ABC-W1. {Note that \citet{chatha2024neural} adopt a multivariate Gaussian distribution to parameterize the posterior distribution, instead of using a more standard flow model. This choice is reasonable in this example bacause the posterior is close to Gaussian. We follow the same choice for a more direct comparison.}

\begin{figure}[!ht]
    \centering
    \begin{subfigure}[b]{0.45\textwidth}
        \centering
        \includegraphics[width=\textwidth]{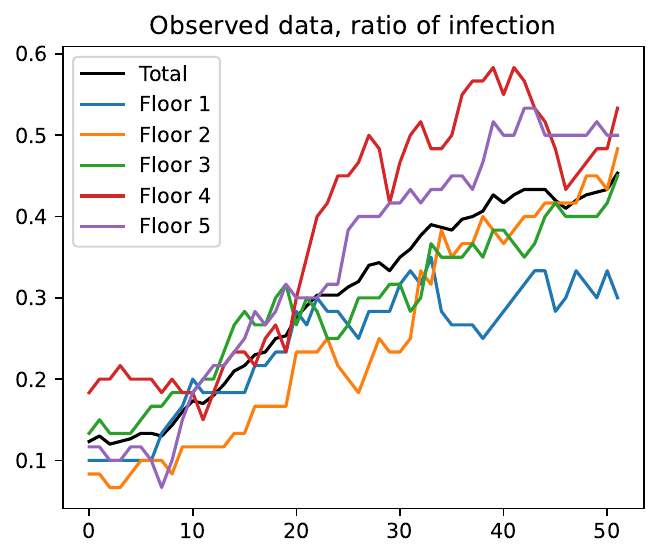}
        \caption{}
        \label{obs_data_SIprior}
    \end{subfigure}
    \hfill
    \begin{subfigure}[b]{0.45\textwidth}
        \centering
        \includegraphics[width=\textwidth]{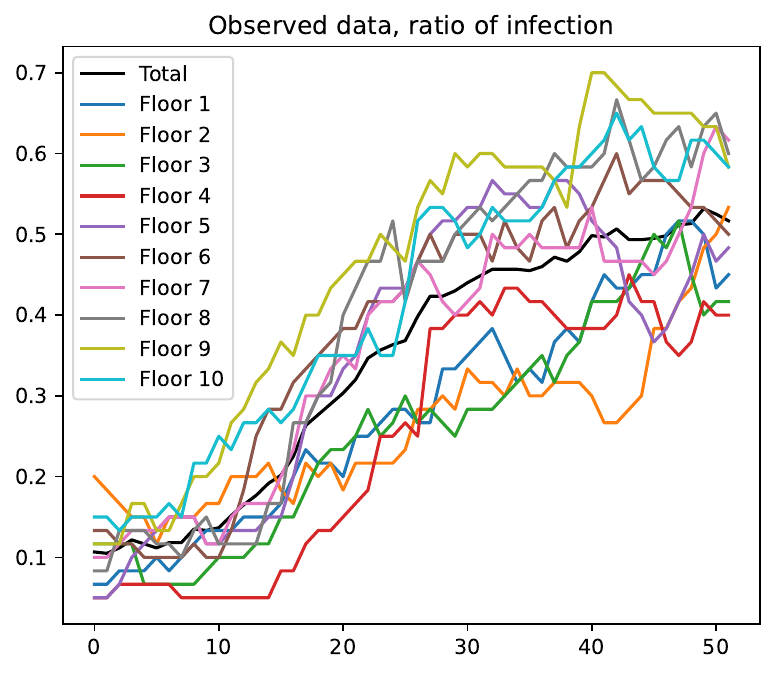}
        \caption{}
        \label{obs_data_SI10F}
    \end{subfigure}
    \caption{Example of observed data under (a) 5-floor setting and (b) 10-floor setting.}
    \label{repre_SIdata}
\end{figure}

\subsubsection{Simulation results in setting 1}

\begin{figure}[!ht]
    \centering
    \includegraphics[width=0.9\textwidth]{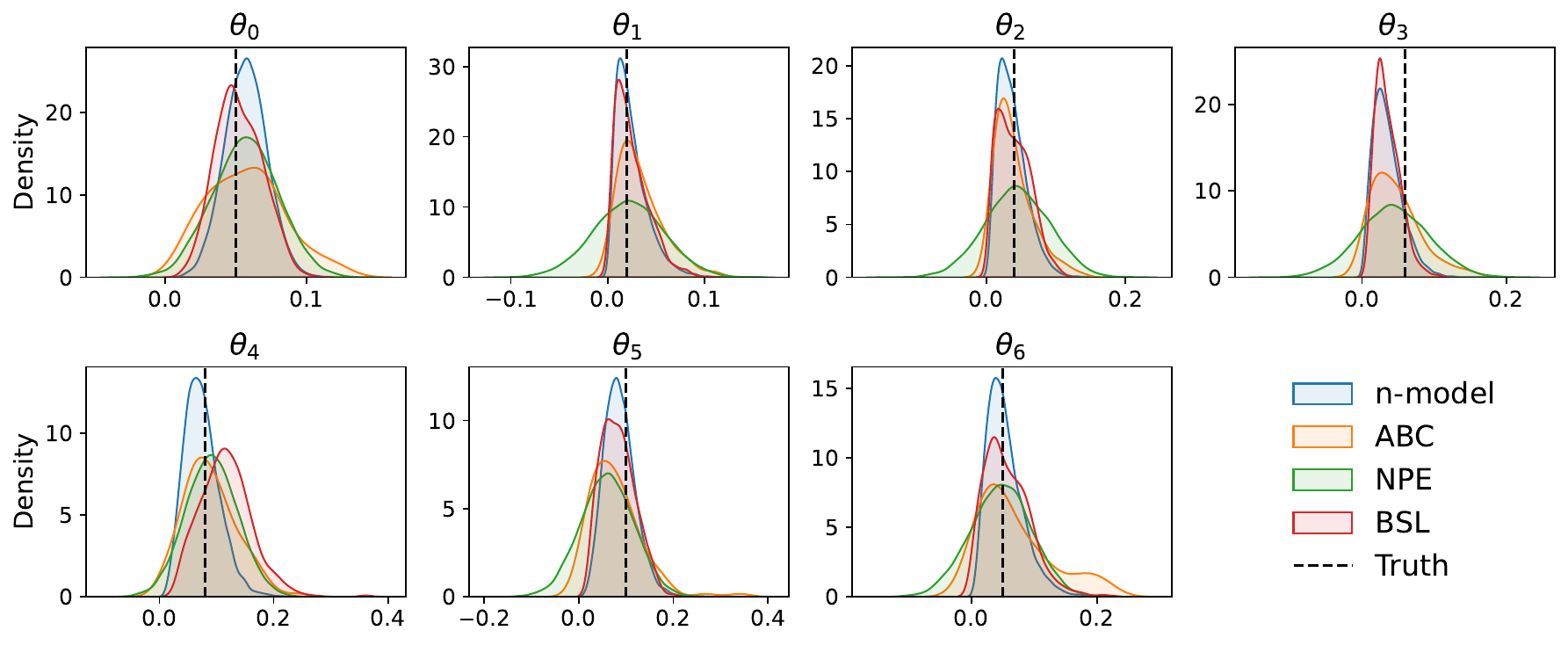}
    \caption{Posterior densities  of different methods under the 5-floor setting}
    \label{kde_SIprior}
\end{figure}

 For setting 1 with 5 floors, averaged results over $100$ experiments are shown in \Cref{SI_simu1_table} and the posterior density plot from one experiment is \Cref{kde_SIprior}. We can see that although all four methods have similar estimation bias, our method has smaller $95\%$ credible intervals for all the parameters, while maintaining high coverage rates. {This suggests that our method provides comparatively better uncertainty quantification. Although the n-model also tends to overestimate posterior uncertainty, as reflected by coverage above the nominal level, its credible intervals are narrower in most cases.}

 This suggests that our method has better uncertainty quantification, as the other methods seem to overestimate posterior uncertainty, with coverage rate almost always equal to $1$.

\begin{table}[!ht]
    \centering
\caption{Averaged results over 100 experiments in simulation setting 1. We report the standard deviations under the average.}
\resizebox{\textwidth}{!}{
\begin{tabular}{l|c|cccc|cccc|cccc}
\toprule
 & \multirow{2}{*}{$\theta^\ast$} & \multicolumn{4}{c|}{$\lvert \widehat{\theta} - \theta^\ast \rvert$} & \multicolumn{4}{c|}{CI95 Width} & \multicolumn{4}{c}{Cover95} \\
 &  & ABC & BSL & NPE & n-model & ABC & BSL & NPE & n-model & ABC & BSL & NPE & n-model \\
\midrule
Facility & 0.05 & \makecell{\bf{0.009} \\ (0.006)} & \makecell{0.010 \\ (0.007)} & \makecell{\bf{0.009} \\ (0.007)} & \makecell{0.010 \\ (0.008)} & \makecell{0.085 \\ (0.013)} & \makecell{0.081 \\ (0.037)} & \makecell{0.108 \\ (0.054)} & \makecell{\bf{0.066} \\ (0.009)} & 1.00 & 1.00 & 1.00 & 0.98 \\
\midrule
Floor 1 & 0.02 & \makecell{0.018 \\ (0.008)} & \makecell{0.016 \\ (0.012)} & \makecell{0.022 \\ (0.015)} & \makecell{\bf{0.014} \\ (0.013)} & \makecell{0.092 \\ (0.021)} & \makecell{0.094 \\ (0.065)} & \makecell{0.143 \\ (0.042)} & \makecell{\bf{0.074} \\ (0.021)} & 1.00 & 1.00 & 1.00 & 0.98 \\
\midrule
Floor 2 & 0.04 & \makecell{\bf{0.009} \\ (0.009)} & \makecell{0.011 \\ (0.009)} & \makecell{0.020 \\ (0.021)} & \makecell{0.014 \\ (0.013)} & \makecell{0.106 \\ (0.027)} & \makecell{0.105 \\ (0.050)} & \makecell{0.193 \\ (0.122)} & \makecell{\bf{0.090} \\ (0.027)} & 1.00 & 1.00 & 1.00 & 1.00 \\
\midrule
Floor 3 & 0.06 & \makecell{\bf{0.013} \\ (0.009)} & \makecell{0.016 \\ (0.011)} & \makecell{0.015 \\ (0.014)} & \makecell{0.016 \\ (0.012)} & \makecell{0.123 \\ (0.031)} & \makecell{0.125 \\ (0.092)} & \makecell{0.199 \\ (0.114)} & \makecell{\bf{0.104} \\ (0.026)} & 1.00 & 1.00 & 1.00 & 1.00 \\
\midrule
Floor 4 & 0.08 & \makecell{\bf{0.023} \\ (0.015)} & \makecell{\bf{0.023} \\ (0.016)} & \makecell{0.025 \\ (0.025)} & \makecell{\bf{0.023} \\ (0.017)} & \makecell{0.148 \\ (0.052)} & \makecell{0.131 \\ (0.040)} & \makecell{0.218 \\ (0.148)} & \makecell{\bf{0.126} \\ (0.037)} & 0.99 & 0.97 & 1.00 & 0.98 \\
\midrule
Floor 5 & 0.10 & \makecell{0.028 \\ (0.017)} & \makecell{0.026 \\ (0.018)} & \makecell{0.030 \\ (0.019)} & \makecell{\bf{0.025} \\ (0.019)} & \makecell{0.179 \\ (0.054)} & \makecell{0.165 \\ (0.071)} & \makecell{0.211 \\ (0.071)} & \makecell{\bf{0.138} \\ (0.033)} & 0.98 & 0.98 & 0.99 & 0.96 \\
\midrule
Room & 0.05 & \makecell{0.014 \\ (0.009)} & \makecell{0.016 \\ (0.015)} & \makecell{\bf{0.013} \\ (0.010)} & \makecell{0.015 \\ (0.012)} & \makecell{0.204 \\ (0.048)} & \makecell{0.133 \\ (0.051)} & \makecell{0.216 \\ (0.098)} & \makecell{\bf{0.123} \\ (0.039)} & 1.00 & 1.00 & 1.00 & 1.00 \\
\bottomrule
\end{tabular}
}
\label{SI_simu1_table}
\end{table}

\subsubsection{Simulation results in setting 2}
For setting 2 with 10 floors, we again compare all methods over 50 experiments. For each run, we apply our localization step  to get the proposal distribution $q(\theta)$ and this proposal distribution is used to generate the reference table for all methods. We increase the size of the reference table to $10\,000$ due to the increased number of parameters. The averaged results are reported in \Cref{SI_simu2_table} and the snapshot of posterior densities from one experiment is provided in \Cref{kde_SI10F}. Our method has the smallest estimation bias for most parameters and again has a significant advantage with respect to the $95\%$ credible interval. For NPE, it has much larger credible interval width than other methods, with coverage rates close to $1$ for most parameters, indicating overestimation of the posterior uncertainty. For ABC, its coverage rates are significantly smaller than the nominal rates for most parameters, indicating underestimation of posterior uncertainty.

\begin{table}[!ht]
    \centering
\caption{Averaged results over 50 experiments in simulation setting 2. We report the standard deviations under the average. Note that we mark the CI width with bold font if it is small and its coverage rate is not too far from the nominal level.}
\resizebox{\textwidth}{!}{
\begin{tabular}{l|c|cccc|cccc|cccc}
\toprule
 & \multirow{2}{*}{$\beta_*$} & \multicolumn{4}{c|}{$\lvert \widehat{\beta} - \beta_* \rvert$} & \multicolumn{4}{c|}{CI95 Width} & \multicolumn{4}{c}{Cover95} \\
 &  & ABC & BSL & NPE & n-model & ABC & BSL & NPE & n-model & ABC & BSL & NPE & n-model \\
\midrule
Facility & 0.05 & \makecell{0.013 \\ (0.011)} & \makecell{0.011 \\ (0.010)} & \makecell{\bf{0.007} \\ (0.007)} & \makecell{0.009 \\ (0.007)} & \makecell{0.050 \\ (0.016)} & \makecell{0.061 \\ (0.010)} & \makecell{0.084 \\ (0.015)} & \makecell{\bf{0.055} \\ (0.007)} & 0.86 & 0.94 & 1.00 & 1.00 \\
\midrule
Floor 1 & 0.02 & \makecell{0.013 \\ (0.017)} & \makecell{\bf{0.010} \\ (0.012)} & \makecell{0.035 \\ (0.021)} & \makecell{0.017 \\ (0.015)} & \makecell{\bf{0.063} \\ (0.037)} & \makecell{0.077 \\ (0.029)} & \makecell{0.271 \\ (0.142)} & \makecell{0.127 \\ (0.030)} & 0.94 & 1.00 & 0.98 & 0.94 \\
\midrule
Floor 2 & 0.04 & \makecell{0.021 \\ (0.014)} & \makecell{0.017 \\ (0.009)} & \makecell{0.021 \\ (0.018)} & \makecell{\bf{0.014} \\ (0.014)} & \makecell{0.067 \\ (0.038)} & \makecell{\bf{0.081} \\ (0.025)} & \makecell{0.250 \\ (0.121)} & \makecell{0.131 \\ (0.025)} & 0.84 & 0.98 & 0.98 & 0.98 \\
\midrule
Floor 3 & 0.06 & \makecell{0.028 \\ (0.020)} & \makecell{\bf{0.021} \\ (0.013)} & \makecell{0.022 \\ (0.017)} & \makecell{0.022 \\ (0.015)} & \makecell{0.085 \\ (0.041)} & \makecell{\bf{0.107} \\ (0.030)} & \makecell{0.223 \\ (0.119)} & \makecell{0.128 \\ (0.031)} & 0.78 & 0.98 & 1.00 & 0.98 \\
\midrule
Floor 4 & 0.08 & \makecell{0.040 \\ (0.031)} & \makecell{0.030 \\ (0.020)} & \makecell{\bf{0.023} \\ (0.018)} & \makecell{0.026 \\ (0.018)} & \makecell{0.113 \\ (0.063)} & \makecell{0.122 \\ (0.031)} & \makecell{0.190 \\ (0.086)} & \makecell{\bf{0.131} \\ (0.022)} & 0.82 & 0.90 & 0.96 & 0.96 \\
\midrule
Floor 5 & 0.10 & \makecell{0.041 \\ (0.035)} & \makecell{0.037 \\ (0.023)} & \makecell{0.031 \\ (0.018)} & \makecell{\bf{0.028} \\ (0.019)} & \makecell{0.126 \\ (0.065)} & \makecell{0.153 \\ (0.047)} & \makecell{0.194 \\ (0.053)} & \makecell{\bf{0.138} \\ (0.021)} & 0.74 & 0.92 & 0.98 & 0.98 \\
\midrule
Floor 6 & 0.12 & \makecell{0.046 \\ (0.030)} & \makecell{0.034 \\ (0.027)} & \makecell{0.029 \\ (0.023)} & \makecell{\bf{0.027} \\ (0.022)} & \makecell{0.132 \\ (0.065)} & \makecell{0.151 \\ (0.040)} & \makecell{0.182 \\ (0.049)} & \makecell{\bf{0.141} \\ (0.018)} & 0.74 & 0.90 & 0.98 & 0.98 \\
\midrule
Floor 7 & 0.14 & \makecell{0.045 \\ (0.040)} & \makecell{0.043 \\ (0.048)} & \makecell{0.032 \\ (0.025)} & \makecell{\bf{0.030} \\ (0.029)} & \makecell{0.170 \\ (0.101)} & \makecell{0.219 \\ (0.138)} & \makecell{0.201 \\ (0.032)} & \makecell{\bf{0.160} \\ (0.035)} & 0.86 & 0.96 & 0.96 & 0.94 \\
\midrule
Floor 8 & 0.16 & \makecell{0.054 \\ (0.044)} & \makecell{0.055 \\ (0.040)} & \makecell{0.045 \\ (0.041)} & \makecell{\bf{0.043} \\ (0.041)} & \makecell{0.201 \\ (0.096)} & \makecell{0.229 \\ (0.103)} & \makecell{\bf{0.229} \\ (0.069)} & \makecell{0.172 \\ (0.047)} & 0.78 & 0.92 & 0.94 & 0.90 \\
\midrule
Floor 9 & 0.18 & \makecell{0.055 \\ (0.044)} & \makecell{0.052 \\ (0.048)} & \makecell{0.044 \\ (0.030)} & \makecell{\bf{0.039} \\ (0.027)} & \makecell{0.195 \\ (0.103)} & \makecell{0.249 \\ (0.140)} & \makecell{0.224 \\ (0.059)} & \makecell{\bf{0.174} \\ (0.037)} & 0.86 & 0.96 & 0.94 & 0.94 \\
\midrule
Floor 10 & 0.20 & \makecell{0.066 \\ (0.055)} & \makecell{0.045 \\ (0.045)} & \makecell{\bf{0.035} \\ (0.030)} & \makecell{0.038 \\ (0.030)} & \makecell{0.234 \\ (0.103)} & \makecell{0.278 \\ (0.125)} & \makecell{0.255 \\ (0.061)} & \makecell{\bf{0.197} \\ (0.040)} & 0.90 & 0.98 & 1.00 & 0.96 \\
\midrule
Room & 0.05 & \makecell{0.022 \\ (0.023)} & \makecell{\bf{0.016} \\ (0.010)} & \makecell{0.027 \\ (0.020)} & \makecell{0.020 \\ (0.013)} & \makecell{0.126 \\ (0.056)} & \makecell{0.116 \\ (0.038)} & \makecell{0.367 \\ (0.160)} & \makecell{\bf{0.110} \\ (0.041)} & 0.96 & 0.98 & 1.00 & 0.94 \\
\bottomrule
\end{tabular}
}
\label{SI_simu2_table}
\end{table}

\begin{figure}[!ht]
    \centering
    \includegraphics[width=0.9\textwidth]{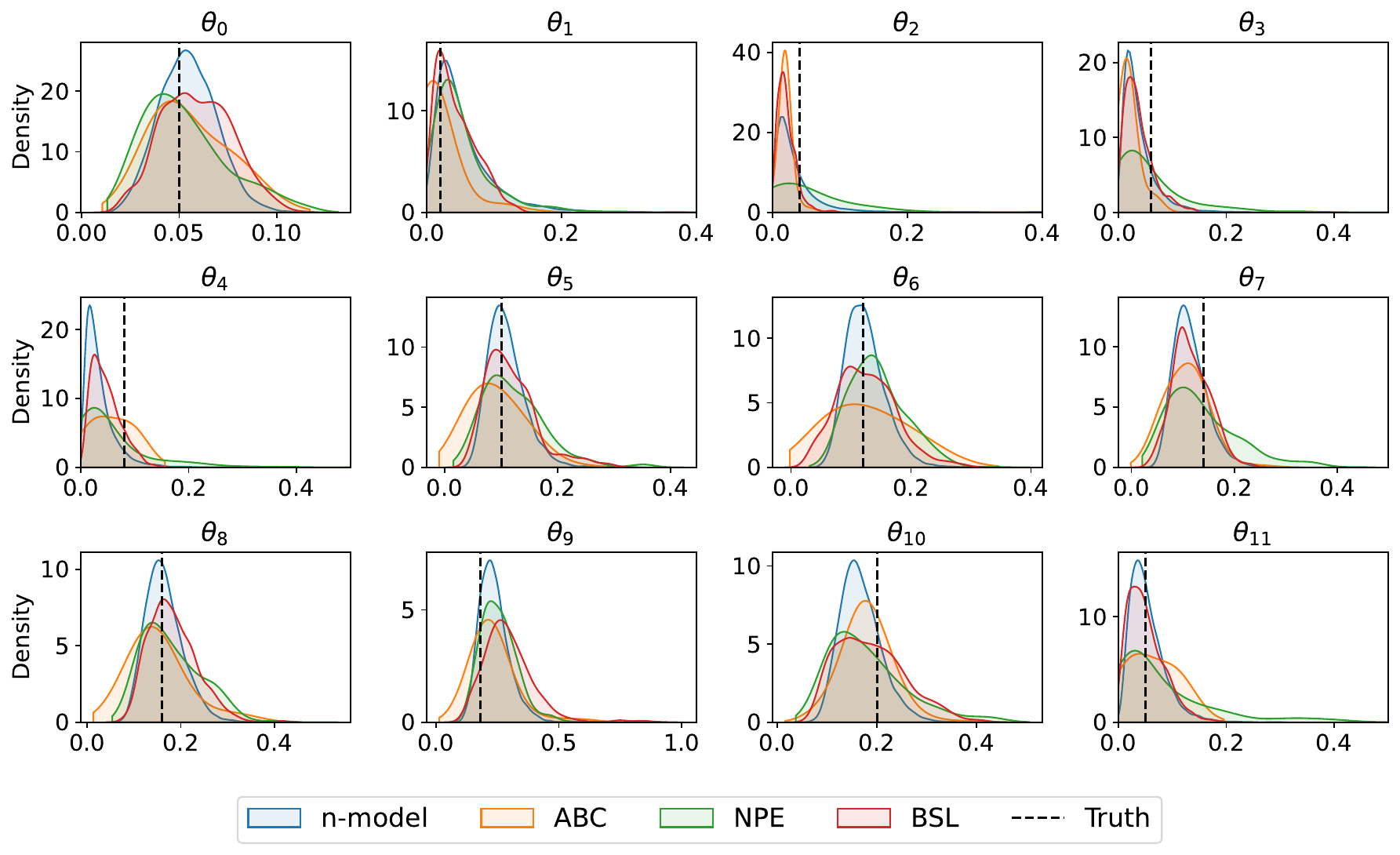}
    \caption{Posterior density plots of different methods in one experiment of 10-floor setting}
    \label{kde_SI10F}
\end{figure}

\subsubsection{Implemtation details under setting 1}
We use the prior $\pi(\theta)$ in this setting and do not use localization step to construct a proposal distribution $q(\theta)$.

{\bf Training details of n-model.}
We have reference table size $N = 8\,000$. We use an ELU neural network with $3$ hidden layers of dimension $(d_\theta + d_S=371, 512, 256, 128, d_\theta = 7)$. The network is trained with batch size $5$ and learning rate gradually decreasing from $10^{-4}$ to $10^{-5}$, for $100$ epochs or until convergence. We obtain $10\,000$ samples from $10\,000$ independent Langevin Markov chains.

{\bf Training details of NPE.}
We have reference table size $N = 8\,000$. We use the code\footnote{https://github.com/epibayes/np-epid} from \citep{chatha2024neural} to implement the NPE method. There, they specify the posterior as a multivariate Gaussian distribution, and estimate its mean and covariance using a neural network via the maximum likelihood criterion. We adopt the same configurations as in \citep{chatha2024neural}. We use a ReLU network with $3$ hidden layers and $32$ units in each hidden layer. The network is trained with full batch size $4\,000$ and learning rate $10^{-3}$, for $1\,000$ epochs or until convergence.

{\bf Details of ABC and BSL.} For BSL, we use the sample mean and element-wise standard deviation of $S$ as the summary statistics to do inference, because otherwise it requires much more simulations to estimate the covariance of the synthetic likelihood. We obtain $8\;000$ samples from $10$ independent Markov chains, where each chain is drawn using Metropolis–Hastings algorithm, with length $1\;000$ and discarding the initial $200$ iterations, and we use $100$ simulations at each iteration to estimate the mean and covariance of the synthetic Gaussian likelihood.

For ABC, we use $\text{W}_1$ distance based on summary statistics. we generate $8\,000$ and keep $100$ samples that have the smallest $\text{W}_1$ distance. We find the results similar regarding treating rows of $S$ as i.i.d. or dependent.

{\bf Simulation cost comparison.} The simulation cost for all the methods is listed in \Cref{simu_budget_SI5}, where one unit is the cost of generating a whole dataset $\Xn$. For the n-model, NPE and ABC, the cost is the size $N$ of the reference table. For BSL, its cost is the product of the number of chains, the length per chain, and the number of simulations at each iteration within each chain.

\begin{table}[!ht]
    \centering
\caption{Simulation cost in the epidemic model example (5-floor setting)}
\begin{tabular}{cccc}
\toprule
ABC & BSL & NPE & n-model \\
\midrule
 $8\times 10^3$ & $1\times 10^6$ & $8\times 10^3$ & $8\times 10^3$ \\
\bottomrule
\end{tabular}
    \label{simu_budget_SI5}
\end{table}

\subsubsection{Implementation details under setting 2}

{\bf Localization.} In the generator $\tau(\theta, \Zn)$, $\Zn$ contains a set of i.i.d. Bernoulli random variables and a set of i.i.d. $\text{Uniform}(0, 1)$ random variables. The Bernoulli random variables are indicators of replacement in the data generation, and the Uniform random variables are quantiles to generate the binary states $X$ and $Y$. It is worth mentioning a detail of the generator here, which is due to the binary nature of the data. For example, to generate a binary state $Y$, it is natural to use the generator $\tau(\theta, Z) = \I(Z \le p(\theta))$ to sample $Y \mid \theta$ where $\mathbb{P}(Y = 1 \mid \theta) = p(\theta)$ is know. Here $\I(\cdot)$ is the indicator function and $Z$ follows $\text{Uniform}(0, 1)$.
However, the indicator function disables using gradient based optimization methods to solve the optimization problem \eqref{eq:smm_sol} as its gradient is $0$ almost everywhere. Therefore, we use a smooth version indicator function $\I_{smooth}(t) = \frac{1}{1+\exp(-500t)}$ as a substitution in the localization procedure. To optimize \eqref{eq:smm_sol}, we apply the Adam optimizer with learning rate $10^{-1}$ for 100 iterations, so the simulation cost for obtaining $100$ samples is $10\,000$ (in unit of $\Xn$).

{\bf Training details of our method.}
We have reference table size $N = 20\,000$. We use an ELU neural network with $3$ hidden layers of dimension $(d_\theta + d_S=636, 512, 256, 128, d_\theta = 12)$. The network is trained with batch size $5$ and learning rate gradually decreasing from $10^{-4}$ to $10^{-5}$, for $100$ epochs or till convergence.
We obtain $10\,000$ samples from $10\,000$ independent Langevin Markov chains.

{\bf Training details of NPE.}
We have reference table size $N = 20\,000$. We still use the code from \citep{chatha2024neural} to implement the NPE method, where the posterior is specified as a multivariate normal distribution, and the mean and covariance are estimated using a neural network via the maximum likelihood criterion. We use a ReLU network with $3$ hidden layers and $32$ units in each hidden layer. The network is trained with batch size $5\,000$ and learning rate $10^{-3}$, for $5\,000$ epochs or until convergence. We draw $10\;000$ samples from the estimated posterior and samples are reweighted by $\frac{\pi(\theta)}{q(\theta)}$.

{\bf Details of ABC and BSL.}
For the ABC method, we generate $N=20\;000$ and keep $100$ samples that have the smallest $\text{W}_1$ distance. We also reweight the samples by $\frac{\pi(\theta)}{q(\theta)}$. For the BSL method, we obtain $8\,000$ samples from $10$ independent Markov chains, where each chain is drawn using Metropolis–Hastings algorithm, with length $1\,000$ and discarding the initial $200$ iterations, and we use $200$ simulations at each iteration to estimate the mean and covariance of the synthetic Gaussian likelihood.

{\bf Simulation cost comparison.} The simulation cost for all methods is listed in \Cref{simu_budget_SI10}, where one unit is the cost of generating a whole dataset $\Xn$. For the n-model, NPE and ABC, the cost is the size $N$ of the reference table. For BSL, its cost is the product of the number of chains, the length per chain, and the number of simulations at each iteration within each chain.

\begin{table}[!ht]
    \centering
\caption{Simulation cost in the epidemic model example (10-floor setting)}
\begin{tabular}{ccccc}
\toprule
Localization& ABC & BSL & NPE & n-model \\
\midrule
$1\times 10^4$  & $2\times 10^4$ & $2\times 10^6$ & $2\times 10^4$ & $2\times 10^4$ \\
\bottomrule
\end{tabular}
    \label{simu_budget_SI10}
\end{table}

\end{document}